\documentclass{article}
\usepackage{amsmath,amsthm,amssymb,mathtools}
\usepackage{hyperref}
\hypersetup{
  hidelinks,
  pdfauthor={Alex Shvets},
  pdftitle={Universal Shuffle Asymptotics, Part III: Dominant-Block Quotient Geometry and Hybrid Gaussian--Compound-Poisson Limits in Finite-Alphabet Shuffle Privacy}
}
\usepackage{geometry}
\usepackage[most]{tcolorbox}
\tcbset{
  keybox/.style={
    enhanced,
    colback=gray!8,
    colframe=gray!50,
    boxrule=0.5pt,
    arc=2pt,
    left=6pt, right=6pt, top=6pt, bottom=6pt,
    fontupper=\small
  }
}

\newtheorem{theorem}{Theorem}[section]
\newtheorem{proposition}[theorem]{Proposition}
\newtheorem{lemma}[theorem]{Lemma}
\newtheorem{corollary}[theorem]{Corollary}
\newtheorem{definition}[theorem]{Definition}
\newtheorem{remark}[theorem]{Remark}

\newcommand{\TV}{\mathrm{TV}}
\newcommand{\Poi}{\mathrm{Poi}}
\newcommand{\Bin}{\mathrm{Bin}}
\newcommand{\Mult}{\mathrm{Mult}}
\newcommand{\Cat}{\mathrm{Cat}}
\newcommand{\LC}{\Delta_{\mathrm{LC}}}
\newcommand{\E}{\mathbb E}
\newcommand{\Pbb}{\mathbb P}
\newcommand{\R}{\mathbb R}
\newcommand{\N}{\mathbb N}
\newcommand{\1}{\mathbf 1}

\newcommand{\weakto}{\Longrightarrow}
\newcommand{\privacycurve}{\mbox{privacy-curve}}
\DeclareMathOperator{\supp}{supp}

\title{Universal Shuffle Asymptotics, Part III:\\
Dominant-Block Quotient Geometry and Hybrid Gaussian--Compound-Poisson Limits\\
in Finite-Alphabet Shuffle Privacy}
\author{Alex Shvets\thanks{Email: \href{mailto:alt178332@gmail.com}{alt178332@gmail.com}. ORCID: \href{https://orcid.org/0009-0005-9802-379X}{0009-0005-9802-379X}.}}
\date{March 2026}

\begin{document}

\maketitle

\begin{abstract}

Part~I of this series~\cite{Shv26I} establishes a sharp Gaussian (LAN/GDP, in the sense of \cite{DRS22})
limit theory for neighboring shuffle experiments in the fixed full-support regime. Part~II~\cite{Shv26II}
identifies the first universality-breaking frontier: when the local randomizer becomes increasingly concentrated,
the Gaussian limit fails and the neighboring shuffle experiment enters critical Poisson, Skellam, and multivariate
compound-Poisson regimes.

The present paper completes the finite-alphabet weak-limit theory by identifying the
dominant-block quotient geometry that governs neighboring shuffle experiments. We treat
dominant blocks of arbitrary finite size, allow overlap between the dominant output sets
under the two neighboring hypotheses, and show that the limiting experiment decomposes
according to this geometry: projecting onto the sum of the dominant tangent spaces yields
a Gaussian factor, while quotienting by those same tangent spaces isolates a
compound-Poisson jump field in the rare block, recovering the critical Poisson, Skellam,
and multivariate compound-Poisson regimes of Part~II in the projected quotient block.
We also identify the regimes in which this quotient description determines the full
\privacycurve, as well as the obstruction that appears when projected jump limits alone
do not suffice.

Two further sections sharpen the rate picture and the boundary interface. First, we show that the
$O(n^{-1/2})$ rate for the full hybrid experiment is sharp in general, by an explicit binomial-plus-Bernoulli
calculation, and we identify a compatibility condition under which the sharper $O(n^{-1})$ rate is restored.
Second, we prove a boundary Berry--Esseen theorem showing that the critical Poisson-shift experiment is within
$O(c)$ in Le Cam distance of the Gaussian shift experiment as the critical scale parameter $c\downarrow 0$,
with a matching \privacycurve\ corollary. We also record a strong-boundary obstruction showing that, without an
additional structural hypothesis, projected jump limits need not determine the full \privacycurve. Together with
Parts~I--II, this yields a three-regime universality picture and a precise finite-alphabet L\'evy--Khintchine
layer for shuffle privacy.

\end{abstract}

\noindent\textbf{MSC 2020:} 62B15 (statistical experiments and information); 68P27 (privacy); 60F05 (central limit and other weak theorems); 60E07 (infinitely divisible distributions).\\[2pt]
\noindent\textbf{Keywords:} shuffle model, differential privacy, Le Cam distance, L\'evy--Khintchine formula, compound Poisson, multinomial normal approximation.

\tableofcontents

\section{Introduction}

Part~I~\cite{Shv26I} develops the Gaussian regime for neighboring shuffle experiments: for a fixed finite-output
local randomizer with full support bounded away from zero, the released histogram admits a complete
conditional-expectation linearization, the neighboring experiment is asymptotically Gaussian in the sense of
Le Cam, and the \privacycurve\ converges to the GDP curve~\cite{DRS22} with Berry--Esseen control.
Part~II~\cite{Shv26II} shows that this Gaussian picture breaks at the critical concentration threshold.
In the canonical one-dominant setting the neighboring experiment converges to a Poisson-shift limit; for
proportional compositions it converges to a Skellam-shift limit; and for general finite alphabets in the
sparse-error regime the released histogram converges, after centering, to a multivariate compound-Poisson law.

The shuffle model was initiated by \cite{EFM19,CSU19} and further developed in \cite{BBG19,FMT21,FMT23}.
For background on Le Cam theory and asymptotic statistics, see \cite{LeC86,vdV98}; for Poisson
approximation tools used repeatedly below, see \cite{BHJ92,Roos19}.

The remaining gap is structural rather than scalar. Part~II isolates the first hybrid weak limit in the
special two-dominant disjoint case, where the dominant block carries a Gaussian factor and the rare block
carries a compound-Poisson factor, but it does not cover overlapping dominant sets, dominant blocks of size
larger than two, or the general quotient geometry induced by several comparable dominant outputs. This paper
closes that gap at the level of weak limits and projected experiments. The central theorem identifies the
dominant tangent space
\[
M=M_0+M_1,
\]
projects the released histogram onto the Gaussian block $M$, quotients out the same tangent directions to the
rare block $M^\perp$, and proves a general finite-alphabet L\'evy--Khintchine limit
\[
(G,J)\quad\text{versus}\quad (G,J+\Delta)
\]
for the neighboring shuffled experiment. The overlap case does not require a different limiting object: the same
hybrid limit persists, but the deterministic quotient shift $\Delta$ may collapse to zero.

The main results are Theorem~\ref{thm:general-k-dominant-LK}, which gives weak convergence of the full hybrid
statistic, $O(n^{-1})$ total-variation and Le Cam convergence for the projected jump experiment, and
\privacycurve\ convergence for the full experiment in the interior regime, in the weak boundary regime, and in
the regular strong-boundary cases in which no unsmoothed finite minority tangent block remains. A short
counterexample shows that no fully general strong-boundary \privacycurve\ theorem is possible: a bounded
minority block can retain dominant internal geometry that vanishes weakly but remains visible to exact histogram
tests. The theorem specializes back to the single-dominant multivariate compound-Poisson limit of Part~II and
to the two-dominant disjoint hybrid limit of Part~II, Appendix~B.

At the level of general theory, the abstract existence of infinitely divisible limit experiments is classical.
Strasser's scale-invariance theorem~\cite{Str85} and the monograph of Janssen, Milbrodt, and Strasser~\cite{JMS85}
develop a general L\'evy--Khintchine theory for statistical experiments with independent increments. What is new
here is the explicit finite-alphabet shuffle geometry: the dominant tangent space, the quotient by the dominant
hypergraph, the compound-Poisson jump field attached to rare outputs, and the accompanying \privacycurve\
statements. On the Gaussian side, Carter's deficiency bound between multinomial and multivariate normal
experiments~\cite{Car02} is a central precursor, but it treats the dense regime in which all relevant cell
probabilities stay bounded away from zero; our setting contains genuinely rare $O(n^{-1})$ cells and therefore
falls outside that framework. At the level of shuffle privacy, Takagi and Liew~\cite{TL26} obtain asymptotic
privacy amplification bounds beyond pure local differential privacy, while Daskalakis, Kamath, and
Tzamos~\cite{DKT15} give a structural Gaussian-plus-sparse decomposition for Poisson multinomial distributions.
The present paper is complementary to both lines: it works at the level of neighboring binary experiments,
explicit Le Cam limits, and sharp \privacycurve\ asymptotics for shuffled histograms.

\section{Model and notation}

\subsection{Shuffle mechanism, transcript laws, and privacy curves}
\label{sec:model-privacy-curves}

Fix a population size $n\ge 1$. Each user $i\in\{1,\dots,n\}$ holds a private datum $x_i\in\{0,1\}$. A
(possibly $n$-dependent) local randomizer is a Markov kernel
\[
W^{(n)}:\{0,1\}\to \Delta(\mathcal Y),\qquad b\mapsto W_b^{(n)},
\]
where $\mathcal Y$ is a finite output alphabet and $\Delta(\mathcal Y)$ denotes the simplex of probability
measures on $\mathcal Y$. Given an input dataset $x^n=(x_1,\dots,x_n)$, users apply $W^{(n)}$ independently to
produce messages $Y_i\sim W^{(n)}_{x_i}$. The shuffle mechanism outputs the multiset of messages, equivalently
its histogram
\[
N(y):=\sum_{i=1}^n \1\{Y_i=y\}\in \mathbb Z^{\mathcal Y},\qquad \sum_{y\in\mathcal Y}N(y)=n.
\]
For $k\in\{0,\dots,n\}$, let $T_{n,k}$ denote the law of the released histogram when the dataset has exactly $k$
ones and $n-k$ zeros.

The neighboring shuffle experiment associated with $k$ versus $k+1$ ones is the binary experiment
\[
\mathcal E_{n,k}:=(P_{n,k},Q_{n,k}),\qquad P_{n,k}:=T_{n,k},\qquad Q_{n,k}:=T_{n,k+1}.
\]

For a binary experiment $(P,Q)$ on any measurable space, the one-sided privacy curve is
\[
\delta_{Q\|P}(\varepsilon)
:=
\sup_A\bigl\{Q(A)-e^\varepsilon P(A)\bigr\},
\qquad \varepsilon\ge 0,
\]
where the supremum ranges over all measurable sets $A$.

\begin{lemma}[Contraction of total variation under measurable maps]
\label{lem:tv-contraction-main}
Let $\mu,\nu$ be probability measures on $(\mathcal X,\mathcal F)$ and let $f:\mathcal X\to\mathcal Z$ be measurable.
Then
\[
\TV(\mu\circ f^{-1},\nu\circ f^{-1})\le \TV(\mu,\nu).
\]
\end{lemma}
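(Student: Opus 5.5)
The plan is to work directly from the set-supremum definition of total variation, $\TV(\mu,\nu)=\sup_{A\in\mathcal F}|\mu(A)-\nu(A)|$, which is the same supremum-over-measurable-sets form used for the privacy curve $\delta_{Q\|P}$ above. Fix a measurable target $\sigma$-algebra $\mathcal G$ on $\mathcal Z$, so that $f$ measurable means $f^{-1}(B)\in\mathcal F$ for every $B\in\mathcal G$. For any $B\in\mathcal G$ we have by definition of the pushforward
\[
\bigl|(\mu\circ f^{-1})(B)-(\nu\circ f^{-1})(B)\bigr|=\bigl|\mu(f^{-1}(B))-\nu(f^{-1}(B))\bigr|\le \TV(\mu,\nu),
\]
because $f^{-1}(B)$ is one of the sets over which the supremum defining $\TV(\mu,\nu)$ is taken.

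Taking the supremum over $B\in\mathcal G$ on the left-hand side then gives $\TV(\mu\circ f^{-1},\nu\circ f^{-1})\le\TV(\mu,\nu)$. In other words, the sets $\{f^{-1}(B):B\in\mathcal G\}$ form a sub-$\sigma$-algebra of $\mathcal F$, and a supremum over a subfamily can only be smaller.

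If the paper instead normalizes total variation as $\tfrac12\int|d\mu-d\nu|$ or as $\tfrac12\sup_{\|g\|_\infty\le1}|\int g\,d\mu-\int g\,d\nu|$, I would first record the standard equality of these expressions with the set supremum. This follows from a Hahn decomposition of $\mu-\nu$ relative to a common dominating measure such as $(\mu+\nu)/2$. Alternatively, one can argue directly in the functional form: for bounded measurable $g$ on $\mathcal Z$, the function $g\circ f$ is bounded and measurable on $\mathcal X$, and $\int g\,d(\mu\circ f^{-1})=\int g\circ f\,d\mu$ by the change-of-variables formula.

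There is no real obstacle here. The only point requiring care is fixing which equivalent definition of $\TV$ is in force, so that the one-line inclusion argument applies verbatim.
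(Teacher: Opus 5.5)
Your proposal is correct and is essentially the paper's proof. Both write the pushforward difference on $B$ as $\mu(f^{-1}(B))-\nu(f^{-1}(B))$, bound it by $\TV(\mu,\nu)$ because $f^{-1}(B)$ is measurable, and then take the supremum over $B$; your remark about fixing the set-supremum normalization of $\TV$ simply makes explicit what the paper leaves implicit.
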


\begin{proof}
For any measurable $B\subseteq\mathcal Z$,
\[
(\mu\circ f^{-1})(B)-(\nu\circ f^{-1})(B)=\mu(f^{-1}(B))-\nu(f^{-1}(B)).
\]
Taking absolute values and then the supremum over $B$ yields the claim.
\end{proof}

\begin{lemma}[Tensorization bound for product measures]
\label{lem:tv-tensorization-main}
If $\mu_i,\nu_i$ are probability measures on measurable spaces $(\mathcal X_i,\mathcal F_i)$, then
\[
\TV(\mu_1\otimes\mu_2,\nu_1\otimes\nu_2)
\le
\TV(\mu_1,\nu_1)+\TV(\mu_2,\nu_2).
\]
More generally, for finitely many factors,
\[
\TV\Bigl(\bigotimes_{j=1}^m \mu_j,\bigotimes_{j=1}^m \nu_j\Bigr)
\le \sum_{j=1}^m \TV(\mu_j,\nu_j).
\]
\end{lemma}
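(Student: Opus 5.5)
The plan is a triangle inequality through a hybrid product measure, with Lemma~\ref{lem:tv-contraction-main} doing the one-coordinate comparisons. I will prove the two-factor case first and then induct on $m$.

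For two factors, introduce $\mu_1\otimes\nu_2$ as an intermediate measure. Since $\TV$ is a metric on probability measures,
\[
\TV(\mu_1\otimes\mu_2,\nu_1\otimes\nu_2)\le \TV(\mu_1\otimes\mu_2,\mu_1\otimes\nu_2)+\TV(\mu_1\otimes\nu_2,\nu_1\otimes\nu_2).
\]
It then suffices to show $\TV(\mu_1\otimes\mu_2,\mu_1\otimes\nu_2)\le\TV(\mu_2,\nu_2)$; the other term is handled symmetrically. Fix a measurable rectangle-generated set $A\subseteq\mathcal X_1\times\mathcal X_2$. By Fubini--Tonelli,
\[
(\mu_1\otimes\mu_2)(A)-(\mu_1\otimes\nu_2)(A)=\int_{\mathcal X_1}\bigl(\mu_2(A_{x_1})-\nu_2(A_{x_1})\bigr)\,\mu_1(dx_1),
\]
where the section $A_{x_1}$ is measurable in $\mathcal X_2$. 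Each integrand is bounded in absolute value by $\TV(\mu_2,\nu_2)$, with $\TV$ defined as the supremum over sets, matching the convention used in the proof of Lemma~\ref{lem:tv-contraction-main}. Integrating against the probability measure $\mu_1$ therefore preserves this bound, and taking the supremum over $A$ gives the claim.

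An alternative coupling argument is also available. Take a maximal coupling of $(\mu_2,\nu_2)$ and pair it with an independent copy of $X_1\sim\mu_1$. Then $\Pbb((X_1,X_2)\ne(X_1,X_2'))=\TV(\mu_2,\nu_2)$. The Fubini route is more elementary and avoids existence issues for maximal couplings on general spaces, so I would present that one.

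For $m$ factors, I would induct on $m$. Identify $\bigotimes_{j=1}^m\mu_j$ with $\bigl(\bigotimes_{j=1}^{m-1}\mu_j\bigr)\otimes\mu_m$ on the product $\sigma$-algebra, which holds by associativity of product measures. Applying the two-factor case and then the induction hypothesis gives the sum bound.

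The only delicate point is measurability of the sections and of $x_1\mapsto\mu_2(A_{x_1})$ for a general set $A$ in the product $\sigma$-algebra. This is the standard Fubini measurability lemma, so I expect it to be routine rather than a real obstacle.
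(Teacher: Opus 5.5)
Your argument is correct, but it is not the route the paper takes.

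The paper proves the two-factor bound by coupling. It takes maximal couplings $(X_i,Y_i)$ of $(\mu_i,\nu_i)$ with $\Pbb(X_i\neq Y_i)=\TV(\mu_i,\nu_i)$ and makes them independent, so that $((X_1,X_2),(Y_1,Y_2))$ couples the two product measures. It then applies the union bound $\Pbb((X_1,X_2)\neq(Y_1,Y_2))\le\Pbb(X_1\neq Y_1)+\Pbb(X_2\neq Y_2)$ and inducts. This is exactly the alternative you mention and set aside.

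You instead telescope through the hybrid $\mu_1\otimes\nu_2$. Each one-coordinate change is bounded by disintegrating along the frozen coordinate, using $|\mu_2(A_{x_1})-\nu_2(A_{x_1})|\le\TV(\mu_2,\nu_2)$ section by section.

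What each route buys:
\begin{itemize}
\item The coupling proof is shorter and makes the additivity of mismatch probabilities transparent. It does need an auxiliary probability space carrying maximal couplings with measurable mismatch events.
\item Such couplings can be built on any measurable space, e.g.\ from densities with respect to $\mu_i+\nu_i$. So, contrary to your remark, there is no genuine existence obstruction, only some bookkeeping. That bookkeeping is automatic on the standard Borel spaces where the paper uses the lemma.
\item Your Fubini argument avoids the bookkeeping entirely and needs only the standard measurability of sections.
\item It extends to $m$ factors either by your induction or, equivalently, by telescoping through the hybrids $\nu_1\otimes\cdots\otimes\nu_j\otimes\mu_{j+1}\otimes\cdots\otimes\mu_m$.
\end{itemize}
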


\begin{proof}
Choose independent couplings $(X_i,Y_i)$ of $(\mu_i,\nu_i)$ such that
\[
\Pbb(X_i\neq Y_i)=\TV(\mu_i,\nu_i).
\]
Then $(X_1,X_2)$ and $(Y_1,Y_2)$ form a coupling of $\mu_1\otimes\mu_2$ and $\nu_1\otimes\nu_2$, hence
\[
\TV(\mu_1\otimes\mu_2,\nu_1\otimes\nu_2)
\le
\Pbb((X_1,X_2)\neq (Y_1,Y_2))
\le \Pbb(X_1\neq Y_1)+\Pbb(X_2\neq Y_2).
\]
The general case follows by induction.
\end{proof}

\begin{lemma}[Le Cam distance on the same space]
\label{lem:lecam-tv-bound-main}
Let $(P,Q)$ and $(P',Q')$ be binary experiments on the same measurable space. Then
\[
\LC\bigl((P,Q),(P',Q')\bigr)
\le
\max\{\TV(P,P'),\TV(Q,Q')\}.
\]
\end{lemma}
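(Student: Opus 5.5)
The plan is to use the identity map as the Markov kernel in both directions of the Le Cam deficiency. Recall that for binary experiments $\mathcal E=(P,Q)$ and $\mathcal F=(P',Q')$, the deficiency $\delta(\mathcal E,\mathcal F)$ is the infimum over Markov kernels $K$ from the sample space of $\mathcal E$ to that of $\mathcal F$ of $\max\{\TV(KP,P'),\TV(KQ,Q')\}$. The Le Cam distance $\LC(\mathcal E,\mathcal F)$ is the maximum of $\delta(\mathcal E,\mathcal F)$ and $\delta(\mathcal F,\mathcal E)$. Since both experiments live on the same measurable space, we may take $K$ to be the identity kernel $K(x,\cdot)=\delta_x$.

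First, with $K=\mathrm{id}$ we have $KP=P$ and $KQ=Q$. This gives
\[
\delta\bigl((P,Q),(P',Q')\bigr)\le \max\{\TV(P,P'),\TV(Q,Q')\}.
\]
Second, the same choice in the reverse direction bounds $\delta\bigl((P',Q'),(P,Q)\bigr)$ by $\max\{\TV(P',P),\TV(Q',Q)\}$. This is the same quantity, because total variation is symmetric. Taking the maximum of the two deficiencies yields the claim.

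There is no real obstacle. The only point needing care is which normalization of the deficiency is in use: the sup-over-bounded-tests form with $\|f\|_\infty\le1$ introduces a factor of $2$ relative to $\TV$. If the paper's $\LC$ is defined through decision problems with losses in $[0,1]$, then the identity-kernel bound gives exactly $|\int f\,dP-\int f\,dP'|\le\TV(P,P')$ for $0\le f\le1$. I will fix the convention consistent with $\TV(\mu,\nu)=\sup_A|\mu(A)-\nu(A)|$, which is the convention used in Lemma~\ref{lem:tv-contraction-main}, so that the stated inequality holds without constants.
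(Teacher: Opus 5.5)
Your proposal is correct and is essentially the paper's own proof: use the identity Markov kernel in both directions, so that each deficiency is bounded by $\max\{\TV(P,P'),\TV(Q,Q')\}$ by symmetry of $\TV$, and then take the maximum. The paper leaves the normalization implicit, so your choice of convention, with $\TV(\mu,\nu)=\sup_A|\mu(A)-\nu(A)|$ and the deficiency normalized to match it so that no factor of $2$ appears, is a sensible clarification.
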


\begin{proof}
Choose the identity Markov kernel in both deficiencies. Each deficiency is then bounded by the corresponding
total-variation distance, and taking the maximum gives the claim.
\end{proof}

\begin{lemma}[Privacy-curve stability under total-variation perturbations]
\label{lem:privacy-stability-main}
Let $(P,Q)$ and $(P',Q')$ be binary experiments on the same measurable space. Then for every $\varepsilon\ge 0$,
\[
\bigl|\delta_{Q\|P}(\varepsilon)-\delta_{Q'\|P'}(\varepsilon)\bigr|
\le
\TV(Q,Q')+e^\varepsilon\TV(P,P').
\]
\end{lemma}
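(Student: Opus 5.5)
The plan is to compare the two suprema set by set. Fix $\varepsilon\ge 0$ and an arbitrary measurable set $A$. The key identity is
\[
Q(A)-e^\varepsilon P(A)=\bigl(Q'(A)-e^\varepsilon P'(A)\bigr)+\bigl(Q(A)-Q'(A)\bigr)-e^\varepsilon\bigl(P(A)-P'(A)\bigr).
\]
I will use the convention that $\TV(\mu,\nu)=\sup_A|\mu(A)-\nu(A)|$, which is the one implicit in the proof of Lemma~\ref{lem:tv-contraction-main}. With this convention,
\[
|Q(A)-Q'(A)|\le \TV(Q,Q'),\qquad |P(A)-P'(A)|\le \TV(P,P').
\]
Therefore
\[
Q(A)-e^\varepsilon P(A)\le \delta_{Q'\|P'}(\varepsilon)+\TV(Q,Q')+e^\varepsilon\TV(P,P').
\]
The right-hand side does not depend on $A$, so taking the supremum over $A$ on the left gives
\[
\delta_{Q\|P}(\varepsilon)\le \delta_{Q'\|P'}(\varepsilon)+\TV(Q,Q')+e^\varepsilon\TV(P,P').
\]

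The roles of $(P,Q)$ and $(P',Q')$ are symmetric, since the bound $\TV(Q,Q')+e^\varepsilon\TV(P,P')$ is symmetric in the primed and unprimed measures. Exchanging them yields the reverse inequality, and combining the two gives the absolute-value bound. Both privacy curves are finite, since each lies in $[0,1]$ (take $A=\emptyset$ for the lower bound). So there is no $\infty-\infty$ issue when subtracting.

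There is no real obstacle here. The only point needing care is the normalization of $\TV$. If the paper instead used the unnormalized $L^1$ distance, the same argument would still hold, because $\sup_A|\mu(A)-\nu(A)|\le\|\mu-\nu\|_1$. So the stated bound is valid under either convention.
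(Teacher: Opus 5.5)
Your proof is correct and follows the paper's own argument. You bound $Q(A)-e^\varepsilon P(A)$ setwise by $Q'(A)-e^\varepsilon P'(A)+\TV(Q,Q')+e^\varepsilon\TV(P,P')$, take the supremum over $A$, and then exchange the roles of the two experiments; your remarks on finiteness of the curves and on the normalization of $\TV$ are sound extras that the paper leaves implicit.
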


\begin{proof}
For every measurable set $A$,
\[
Q(A)-e^\varepsilon P(A)
\le
Q'(A)+\TV(Q,Q')-e^\varepsilon P'(A)+e^\varepsilon\TV(P,P').
\]
Taking the supremum over $A$ yields
\[
\delta_{Q\|P}(\varepsilon)
\le
\delta_{Q'\|P'}(\varepsilon)+\TV(Q,Q')+e^\varepsilon\TV(P,P').
\]
Exchanging $(P,Q)$ and $(P',Q')$ gives the reverse inequality.
\end{proof}

\begin{lemma}[Data processing for privacy curves]
\label{lem:privacy-dpi-main}
Let $(P,Q)$ be a binary experiment on $\mathcal X$, let $K$ be a Markov kernel from $\mathcal X$ to
$\mathcal Z$, and write
\[
P':=PK,
\qquad
Q':=QK.
\]
Then for every $\varepsilon\ge 0$,
\[
\delta_{Q'\|P'}(\varepsilon)\le \delta_{Q\|P}(\varepsilon).
\]
\end{lemma}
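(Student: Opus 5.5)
The plan is to prove the inequality at the level of tests rather than sets: express the privacy curve of the output experiment through $[0,1]$-valued test functions and pull each test back through the kernel $K$. First we record the standard reformulation
\[
\delta_{Q\|P}(\varepsilon)=\sup_{\phi}\Bigl\{\int\phi\,dQ-e^\varepsilon\int\phi\,dP\Bigr\},
\]
where the supremum runs over measurable $\phi:\mathcal X\to[0,1]$. The inequality ``$\ge$'' follows by taking indicators $\phi=\1_A$. For ``$\le$'', fix a $\sigma$-finite dominating measure $\lambda$, for instance $\lambda=P+Q$, and write $p=dP/d\lambda$, $q=dQ/d\lambda$. Then
\[
\int\phi\,(q-e^\varepsilon p)\,d\lambda\le\int_{\{q>e^\varepsilon p\}}(q-e^\varepsilon p)\,d\lambda,
\]
because the integrand is largest when $\phi=1$ where $q-e^\varepsilon p>0$ and $\phi=0$ elsewhere. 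The right-hand side is $Q(A^*)-e^\varepsilon P(A^*)$ for the measurable set $A^*=\{q>e^\varepsilon p\}$, so it is at most $\delta_{Q\|P}(\varepsilon)$.

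Next, take any measurable $B\subseteq\mathcal Z$ and set $\phi_B(x):=K(x,B)$. By the definition of a Markov kernel, $\phi_B$ is measurable and takes values in $[0,1]$. Since $P'=PK$, we have $P'(B)=\int K(x,B)\,P(dx)=\int\phi_B\,dP$, and likewise $Q'(B)=\int\phi_B\,dQ$. Therefore
\[
Q'(B)-e^\varepsilon P'(B)=\int\phi_B\,dQ-e^\varepsilon\int\phi_B\,dP\le\delta_{Q\|P}(\varepsilon),
\]
where the last step uses the test reformulation above. Taking the supremum over $B$ gives $\delta_{Q'\|P'}(\varepsilon)\le\delta_{Q\|P}(\varepsilon)$.

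The only step that needs care is the reformulation, namely that randomized tests do no better than deterministic sets. This reduces to the pointwise Neyman--Pearson argument above once densities with respect to $\lambda=P+Q$ are fixed. Everything else is linearity of the integral together with the definition of $PK$ and $QK$. As a special case, when $K$ is a deterministic map $f$, the set $B$ pulls back to the set $f^{-1}(B)$ directly, which mirrors the argument of Lemma~\ref{lem:tv-contraction-main}.
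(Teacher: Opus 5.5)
Your proof is correct and follows the paper's argument: the paper also pulls each set $B$ back to the $[0,1]$-valued test $f_B(x)=K(x,B)$ and bounds $\int f_B\,d(Q-e^\varepsilon P)$ by $\sup_A\{Q(A)-e^\varepsilon P(A)\}$. The only difference is cosmetic: the paper cites the Hahn decomposition of the signed measure $Q-e^\varepsilon P$, whereas you construct its positive set $\{q>e^\varepsilon p\}$ explicitly using densities with respect to $P+Q$.
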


\begin{proof}
Set $\mu:=Q-e^\varepsilon P$, viewed as a finite signed measure on $\mathcal X$. For any measurable
$B\subseteq\mathcal Z$, the function $f_B(x):=K(x,B)$ is measurable and takes values in $[0,1]$, so
\[
Q'(B)-e^\varepsilon P'(B)=\int f_B\,d\mu
\le \sup_{0\le f\le 1}\int f\,d\mu.
\]
By the Hahn decomposition theorem, the last supremum equals $\mu^+(\mathcal X)=\sup_A \mu(A)$, where the
supremum on the right ranges over measurable $A\subseteq\mathcal X$. Thus
\[
Q'(B)-e^\varepsilon P'(B)\le \sup_A\{Q(A)-e^\varepsilon P(A)\}
=\delta_{Q\|P}(\varepsilon).
\]
Taking the supremum over $B$ proves the claim.
\end{proof}

\begin{lemma}[Blackwell-equivalent experiments have identical privacy curves]
\label{lem:privacy-blackwell-equivalence}
Let $(P,Q)$ be a binary experiment on $\mathcal X$ and let $(P',Q')$ be a binary experiment on $\mathcal X'$.
Suppose there exist Markov kernels $K:\mathcal X\to\mathcal X'$ and $L:\mathcal X'\to\mathcal X$ such that
\[
PK=P',
\qquad
QK=Q',
\qquad
P'L=P,
\qquad
Q'L=Q.
\]
Then for every $\varepsilon\ge 0$,
\[
\delta_{Q\|P}(\varepsilon)=\delta_{Q'\|P'}(\varepsilon).
\]
\end{lemma}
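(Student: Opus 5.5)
The plan is to obtain the two inequalities directly from the data-processing inequality, Lemma~\ref{lem:privacy-dpi-main}, applied once in each direction.

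First, apply Lemma~\ref{lem:privacy-dpi-main} to the experiment $(P,Q)$ on $\mathcal X$ with the kernel $K:\mathcal X\to\mathcal X'$. The hypotheses $PK=P'$ and $QK=Q'$ say exactly that the processed pair is $(P',Q')$. The lemma therefore gives, for every $\varepsilon\ge 0$,
\[
\delta_{Q'\|P'}(\varepsilon)\le \delta_{Q\|P}(\varepsilon).
\]

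Second, apply the same lemma to the experiment $(P',Q')$ on $\mathcal X'$ with the kernel $L:\mathcal X'\to\mathcal X$. Since $P'L=P$ and $Q'L=Q$, the processed pair is $(P,Q)$, and we obtain
\[
\delta_{Q\|P}(\varepsilon)\le \delta_{Q'\|P'}(\varepsilon)
\]
for every $\varepsilon\ge 0$. Combining the two inequalities gives the claimed equality.

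There is essentially no obstacle. The only point to check is that Lemma~\ref{lem:privacy-dpi-main} is stated for an arbitrary Markov kernel between arbitrary measurable spaces, so it may be applied with the roles of $\mathcal X$ and $\mathcal X'$ exchanged. This holds: its proof uses only the Hahn decomposition of the signed measure $Q-e^\varepsilon P$ on the source space, which is available on any measurable space. Nothing beyond the equalities of the pushforward measures is needed; in particular, we never use that $KL$ or $LK$ is the identity.
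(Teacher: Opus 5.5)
Your proof is correct and is the paper's argument: you apply Lemma~\ref{lem:privacy-dpi-main} once with $K$ to get $\delta_{Q'\|P'}(\varepsilon)\le\delta_{Q\|P}(\varepsilon)$ and once with $L$ to get the reverse inequality. Your observation that only the pushforward identities are needed, and not $KL$ or $LK$ being the identity, is also accurate.
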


\begin{proof}
Applying Lemma~\ref{lem:privacy-dpi-main} to $K$ gives
\[
\delta_{Q'\|P'}(\varepsilon)\le \delta_{Q\|P}(\varepsilon),
\]
and applying it to $L$ gives the reverse inequality.
\end{proof}

\begin{lemma}[Common independent factor does not affect privacy curves]
\label{lem:common-independent-factor-main}
Let $(P,Q)$ be a binary experiment on $\mathcal X$, and let $R$ be any probability measure on $\mathcal Z$.
Then for every $\varepsilon\ge 0$,
\[
\delta_{Q\otimes R\|P\otimes R}(\varepsilon)=\delta_{Q\|P}(\varepsilon).
\]
\end{lemma}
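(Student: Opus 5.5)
The plan is to apply Lemma~\ref{lem:privacy-blackwell-equivalence} with two explicit Markov kernels that carry each experiment onto the other. In one direction I take the coordinate projection $\pi:\mathcal X\times\mathcal Z\to\mathcal X$, $\pi(x,z)=x$, viewed as the deterministic kernel $K((x,z),\cdot)=\delta_x$. Its pushforwards are $(P\otimes R)\circ\pi^{-1}=P$ and $(Q\otimes R)\circ\pi^{-1}=Q$, because $R(\mathcal Z)=1$.

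In the other direction I take the kernel $L(x,\cdot):=\delta_x\otimes R$ from $\mathcal X$ to $\mathcal X\times\mathcal Z$. This means that, given $x$, one appends an independent draw from $R$. To confirm that $L$ is a Markov kernel, I check that $x\mapsto L(x,C)=R(C_x)$ is measurable for every product-measurable set $C$, where $C_x$ denotes the section of $C$ at $x$. This holds for measurable rectangles and extends to all product-measurable sets by a Dynkin $\pi$--$\lambda$ argument, which is the standard step in Fubini's theorem. Next I verify that $PL=P\otimes R$. On rectangles $A\times B$ we have $\int \1_A(x)R(B)\,dP(x)=P(A)R(B)$. Two probability measures that agree on the $\pi$-system of rectangles coincide, so $PL=P\otimes R$. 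The same computation gives $QL=Q\otimes R$.

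With these two kernels, Lemma~\ref{lem:privacy-blackwell-equivalence} gives equality of the privacy curves for every $\varepsilon\ge0$. Alternatively, the two inequalities follow separately from Lemma~\ref{lem:privacy-dpi-main}.

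There is no real obstacle here. The only point requiring a little care is the measurability of $L$ and the identification $PL=P\otimes R$ on the full product $\sigma$-algebra rather than only on rectangles, and both are handled by the $\pi$--$\lambda$ theorem.
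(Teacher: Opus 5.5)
Your proof is correct, but it takes a different route from the paper. You go through Lemma~\ref{lem:privacy-blackwell-equivalence}, and therefore through Lemma~\ref{lem:privacy-dpi-main} and the Hahn decomposition. The paper argues directly with sections instead. For measurable $A\subseteq\mathcal X\times\mathcal Z$ it writes $(Q\otimes R)(A)-e^\varepsilon(P\otimes R)(A)=\int\bigl(Q(A_z)-e^\varepsilon P(A_z)\bigr)\,R(dz)$ with $A_z=\{x:(x,z)\in A\}$. It bounds the integrand pointwise by $\delta_{Q\|P}(\varepsilon)$, and gets the reverse inequality from cylinder sets $A=B\times\mathcal Z$.

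The two arguments correspond step for step. The cylinder sets play the role of your projection kernel $K$, and the section bound is the data-processing inequality for your enrichment kernel $L$. The difference is where the slicing happens. Slicing in $z$ turns each test on the product into a family of ordinary (non-randomized) tests $A_z$ on $\mathcal X$, so the definition of $\delta$ applies directly and no Hahn decomposition is needed. Your $L$ instead produces the fractional test $x\mapsto R(C_x)$, which is why you need the general DPI. Both versions rely on the same Fubini-type measurability of section measures, yours in $x$ and the paper's in $z$.

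What your version buys is structure: it exhibits the two experiments as Blackwell-equivalent. This is exactly the projection/enrichment construction the paper itself uses in Step~6 of the interior disjoint case, with a $z$-dependent kernel $R_{n,z}$ in place of $R$. The paper's version buys self-containedness at the level of the definition of the privacy curve.
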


\begin{proof}
For any measurable $A\subseteq \mathcal X\times\mathcal Z$, define $A_z:=\{x\in\mathcal X:(x,z)\in A\}$. Then
\[
(Q\otimes R)(A)-e^\varepsilon(P\otimes R)(A)
=
\int \bigl(Q(A_z)-e^\varepsilon P(A_z)\bigr)\,R(dz)
\le \delta_{Q\|P}(\varepsilon).
\]
Taking the supremum over $A$ gives
\[
\delta_{Q\otimes R\|P\otimes R}(\varepsilon)\le \delta_{Q\|P}(\varepsilon).
\]
For the reverse bound, fix a measurable $B\subseteq\mathcal X$ and take $A=B\times\mathcal Z$. Then
\[
(Q\otimes R)(A)-e^\varepsilon(P\otimes R)(A)=Q(B)-e^\varepsilon P(B).
\]
Taking the supremum over $B$ proves equality.
\end{proof}

\subsection{General finite-dominant regime and quotient geometry}

\begin{definition}[General finite-dominant sparse-error regime]
\label{def:general-k-dominant}
Fix a finite output alphabet $\mathcal Y$ and write $(e_y)_{y\in\mathcal Y}$ for the standard basis of
$\mathbb R^{\mathcal Y}$. For each $n\ge 1$ and $b\in\{0,1\}$, let
\[
W_b^{(n)} \in \Delta(\mathcal Y)
\]
be the local randomizer law under input $b$.
For $k\in\{0,\dots,n\}$, let $T_{n,k}$ denote the shuffled transcript law for a dataset with
$n-k$ zeros and $k$ ones, and let
\[
N_{n,k}\in \mathbb Z^{\mathcal Y}
\]
be the released histogram under $T_{n,k}$.

We say that $W^{(n)}=(W_0^{(n)},W_1^{(n)})$ is in the
\emph{general finite-dominant sparse-error critical regime} if, for each $b\in\{0,1\}$, there exist

\begin{itemize}
\item a nonempty dominant set $D_b\subseteq \mathcal Y$,
\item a dominant law $p_b\in \Delta(D_b)$ with $p_b(y)>0$ for every $y\in D_b$,
\item rare intensities $\alpha_b(y)\in[0,\infty)$ for $y\in \mathcal Y\setminus D_b$,
\end{itemize}

such that
\begin{equation}
W_b^{(n)}(y)=p_b(y)+O(n^{-1})
\qquad (y\in D_b),
\label{eq:gen-dom-dominant}
\end{equation}
and
\begin{equation}
n\,W_b^{(n)}(y)\longrightarrow \alpha_b(y)
\qquad (y\in \mathcal Y\setminus D_b).
\label{eq:gen-dom-rare}
\end{equation}
Write
\[
\mu_b:=\sum_{y\in D_b} p_b(y)e_y\in \mathbb R^{\mathcal Y}.
\]
Finally, let $k_n\in\{0,\dots,n-1\}$ be any composition sequence with
\[
\pi_n:=\frac{k_n}{n}\longrightarrow \pi\in[0,1].
\]
\end{definition}

\begin{remark}
The positivity condition $p_b(y)>0$ ensures that $D_b$ contains only genuinely dominant outputs; outputs with vanishing limiting probability should be classified as rare.
\end{remark}

\begin{definition}[Dominant tangent space and hypergraph quotient]
\label{def:dominant-tangent-quotient}
For each $b\in\{0,1\}$ define the dominant tangent subspace
\[
M_b
:=\mathrm{span}\{e_y-\mu_b:\ y\in D_b\}
=
\Bigl\{x\in\mathbb R^{\mathcal Y}:\ \mathrm{supp}(x)\subseteq D_b,\ \sum_{y\in D_b}x(y)=0\Bigr\},
\]
and set
\[
M:=M_0+M_1\subseteq \mathbb R^{\mathcal Y}.
\]
Let $\Pi_G$ be the orthogonal projection onto $M$, and let
\[
\Pi_J:=I-\Pi_G
\]
be the orthogonal projection onto $M^\perp$.
Thus $M$ is the Gaussian dominant-tangent space and $M^\perp$ is the rare quotient space.

Let $\mathsf H$ be the hypergraph with vertex set $D_0\cup D_1$ and hyperedges $D_0,D_1$.
Let $\mathcal C$ denote the set of connected components of $\mathsf H$.
Then
\[
M^\perp\cap \mathbb R^{D_0\cup D_1}
=
\{x\in \mathbb R^{D_0\cup D_1}:\ x \text{ is constant on each } C\in\mathcal C\},
\]
equivalently
\[
\Pi_J e_y=\Pi_J e_{y'}
\quad\Longleftrightarrow\quad
y,y' \text{ belong to the same component } C\in\mathcal C.
\]
(see Lemma~\ref{lem:new-geometry} in Appendix~\ref{app:technical-lemmas} for the proof).
For each component $C\in\mathcal C$, write
\[
m_C:=\Pi_J e_y \qquad (y\in C),
\]
which is well-defined by the preceding display.
\end{definition}

\begin{remark}
We recall that the privacy curve $\delta_{Q\|P}(\varepsilon)$ was defined in
Section~\ref{sec:model-privacy-curves}.
\end{remark}

\section{The general L\'evy--Khintchine limit theorem}

\begin{theorem}[General $k$-dominant L\'evy--Khintchine limit for neighboring shuffle experiments]
\label{thm:general-k-dominant-LK}
Assume the setting of Definition~\ref{def:general-k-dominant}; the privacy curve is as defined in
Section~\ref{sec:model-privacy-curves}.
Define the centered histogram
\[
\widehat H_n
:=
N_{n,k_n}-(n-k_n)\mu_0-k_n\mu_1\in\mathbb R^{\mathcal Y},
\]
and the hybrid normalized statistic
\[
S_n
:=
\Bigl(n^{-1/2}\Pi_G\widehat H_n,\ \Pi_J\widehat H_n\Bigr)
\in M\times M^\perp.
\]
Under the neighboring alternative $T_{n,k_n+1}$ we use the \emph{same} centering
$(n-k_n)\mu_0+k_n\mu_1$.

For $b\in\{0,1\}$ define the dominant covariance operator
\[
\Gamma_b
:=
\sum_{y\in D_b} p_b(y)\,(e_y-\mu_b)(e_y-\mu_b)^\top
\qquad\text{on }\mathbb R^{\mathcal Y},
\]
and set
\[
\Sigma:=(1-\pi)\Gamma_0+\pi\Gamma_1
\qquad\text{(viewed as an operator on }M\text{)}.
\]
For each $b\in\{0,1\}$ and each $y\in\mathcal Y\setminus D_b$, define the jump vector
\[
j_{b,y}:=\Pi_J(e_y-\mu_b)\in M^\perp.
\]
Let $\nu$ be the finite measure on $M^\perp\setminus\{0\}$ given by
\[
\nu
:=
\sum_{\substack{y\notin D_0\\ j_{0,y}\neq 0}}
(1-\pi)\alpha_0(y)\,\delta_{j_{0,y}}
+
\sum_{\substack{y\notin D_1\\ j_{1,y}\neq 0}}
\pi\alpha_1(y)\,\delta_{j_{1,y}}.
\]
Let $G\sim N(0,\Sigma)$, let $J$ be an independent compound-Poisson random vector with L\'evy measure
$\nu$, equivalently
\[
J\overset d=
\sum_{y\notin D_0} U_y\,j_{0,y}
+
\sum_{y\notin D_1} V_y\,j_{1,y},
\]
where all coordinates are independent and
\[
U_y\sim \mathrm{Poi}((1-\pi)\alpha_0(y)),
\qquad
V_y\sim \mathrm{Poi}(\pi\alpha_1(y)).
\]
Set
\[
\Delta:=\Pi_J(\mu_1-\mu_0)\in M^\perp.
\]
Let $P_n$ be the law of $S_n$ under $T_{n,k_n}$, and let $Q_n$ be the law of the same statistic $S_n$
under $T_{n,k_n+1}$.

Then the following hold.

\begin{enumerate}
\item[(i)] \textbf{Weak L\'evy--Khintchine convergence of the full hybrid statistic.}
There exist probability laws
\[
P_\infty:=\mathcal L(G,J),
\qquad
Q_\infty:=\mathcal L(G,J+\Delta),
\]
such that
\[
P_n \Longrightarrow P_\infty,
\qquad
Q_n \Longrightarrow Q_\infty.
\]
Equivalently, for every $u\in M$ and $v\in M^\perp$,
\[
\mathbb E\exp\!\bigl(i\langle u,G\rangle+i\langle v,J\rangle\bigr)
=
\exp\!\left(
-\frac12\langle u,\Sigma u\rangle
+
\int_{M^\perp\setminus\{0\}}
\bigl(e^{i\langle v,z\rangle}-1\bigr)\,\nu(dz)
\right),
\]
whereas the limiting characteristic function under the neighboring alternative is multiplied by
$e^{i\langle v,\Delta\rangle}$.

In particular, under the neighboring calibration considered here, the Gaussian factor is asymptotically
\emph{common} to both hypotheses: no $M$-shift appears in the limit, and the entire neighboring shift
is carried by the quotient coordinate $\Delta\in M^\perp$.

Moreover, if
\[
\bar p_b(C):=\sum_{y\in D_b\cap C} p_b(y)
\qquad (C\in\mathcal C,\ b\in\{0,1\}),
\]
then
\[
\Delta
=
\sum_{C\in\mathcal C}\bigl(\bar p_1(C)-\bar p_0(C)\bigr)\,m_C.
\]
Indeed, for each $b\in\{0,1\}$,
\[
\Pi_J\mu_b
=\Pi_J\Bigl(\sum_{y\in D_b} p_b(y)e_y\Bigr)
=\sum_{y\in D_b} p_b(y)\Pi_J e_y
=\sum_{C\in\mathcal C}\Bigl(\sum_{y\in D_b\cap C} p_b(y)\Bigr)m_C
=\sum_{C\in\mathcal C} \bar p_b(C)\,m_C,
\]
so subtracting the two identities gives the displayed formula for $\Delta$.
Hence the overlap case is not a different limit structure: it is the same theorem with a possibly
degenerate shift. In particular, if $D_0\cap D_1\neq\varnothing$, then $\mathsf H$ has a single
component, so $\Delta=0$ and therefore
\[
P_\infty=Q_\infty.
\]

\item[(ii)] \textbf{Projected total-variation and Le Cam convergence on the quotient block.}
Define the projected statistic
\[
S_n^J:=\Pi_J\widehat H_n\in M^\perp,
\]
let $P_n^J$ and $Q_n^J$ denote its laws under $T_{n,k_n}$ and $T_{n,k_n+1}$, and set
\[
P_\infty^J:=\mathcal L(J),
\qquad
Q_\infty^J:=\mathcal L(J+\Delta).
\]
Assume in addition that
\begin{equation}
\max_{b\in\{0,1\}}
\left(
\sum_{y\in D_b}\bigl|W_b^{(n)}(y)-p_b(y)\bigr|
+
\sum_{y\notin D_b}\bigl|nW_b^{(n)}(y)-\alpha_b(y)\bigr|
\right)
+
|\pi_n-\pi|
\le \frac{C_0}{n}
\label{eq:quantitative-critical-regime}
\end{equation}
for all sufficiently large $n$ and some finite constant $C_0$.
Then there exists a finite constant $C=C(C_0,p_0,p_1,\alpha_0,\alpha_1,\pi,\mathcal Y)$ such that
\[
\mathrm{TV}(P_n^J,P_\infty^J)
+
\mathrm{TV}(Q_n^J,Q_\infty^J)
\le \frac{C}{n}
\]
for all sufficiently large $n$.
Consequently,
\[
\Delta_{\mathrm{LC}}\!\bigl((P_n^J,Q_n^J),(P_\infty^J,Q_\infty^J)\bigr)
\le
\max\bigl\{\mathrm{TV}(P_n^J,P_\infty^J),\,\mathrm{TV}(Q_n^J,Q_\infty^J)\bigr\}
\le \frac{C}{n},
\]
where $\Delta_{\mathrm{LC}}$ denotes Le Cam distance.

\item[(iii)] \textbf{\privacycurve\ convergence for the full hybrid experiment in the regular regimes.}
For every fixed $\varepsilon\ge 0$,
\[
\delta_{Q_\infty\|P_\infty}(\varepsilon)
=
\delta_{\mathcal L(J+\Delta)\,\|\,\mathcal L(J)}(\varepsilon),
\]
because $G$ is common and independent under both limiting hypotheses.

Moreover, the convergence
\[
\delta_{Q_n\|P_n}(\varepsilon)\longrightarrow \delta_{Q_\infty\|P_\infty}(\varepsilon)
\]
holds in each of the following cases:
\begin{enumerate}
\item[(a)] \emph{Interior regime:} assume in addition \eqref{eq:quantitative-critical-regime} and $\pi\in(0,1)$.
\item[(b)] \emph{Weak boundary regime:} $\pi=0$ with $k_n\to\infty$ and $k_n/n\to 0$, or symmetrically
$\pi=1$ with $n-k_n\to\infty$ and $(n-k_n)/n\to 0$.
\item[(c)] \emph{Strong boundary regularity:} assume in addition \eqref{eq:quantitative-critical-regime},
$\pi=0$, $k_n=O(1)$, and either $D_1\subseteq D_0$ or
$|D_1|=1$; or symmetrically $\pi=1$, $n-k_n=O(1)$, and either $D_0\subseteq D_1$ or $|D_0|=1$.
\end{enumerate}

If, in addition, \eqref{eq:quantitative-critical-regime} holds and $\pi\in(0,1)$, then there exists a finite constant
$C_{\mathrm{int}}=C_{\mathrm{int}}(p_0,p_1,\alpha_0,\alpha_1,\pi,\mathcal Y)$ such that
\[
\bigl|
\delta_{Q_n\|P_n}(\varepsilon)-\delta_{Q_\infty\|P_\infty}(\varepsilon)
\bigr|
\le
\frac{C_{\mathrm{int}}(1+e^\varepsilon)}{\sqrt n}
\]
for all sufficiently large $n$.

Without the additional strong-boundary regularity in \textup{(c)}, the projected jump limit need not govern the
full \privacycurve; see Proposition~\ref{prop:strong-boundary-obstruction}.
\end{enumerate}
\end{theorem}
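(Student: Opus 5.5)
The plan is to write the histogram as a sum of independent multinomial pieces and to exploit the linear structure of $\Pi_G,\Pi_J$. Under $T_{n,k}$ we have $N_{n,k}=N^{(0)}+N^{(1)}$ with $N^{(0)}\sim\Mult(n-k,W_0^{(n)})$ and $N^{(1)}\sim\Mult(k,W_1^{(n)})$ independent. Each user's contribution $e_{Y_i}-\mu_{x_i}$ splits into two parts: a dominant part $e_{Y_i}-\mu_b$ with $Y_i\in D_b$, which lies in $M_b\subseteq M$, and a rare part. Consequently $\Pi_J$ annihilates every dominant emission of a type-$b$ user. Exactly, this gives
\[
\Pi_J\widehat H_n=\sum_{b}\sum_{y\notin D_b}N^{(b)}(y)\,j_{b,y},
\]
so the quotient statistic is a linear image of the rare-cell counts alone. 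Under the alternative there is one extra user of type $1$ and one fewer of type $0$, with the same centering. Writing that user's contribution as $\mu_1-\mu_0+(e_Y-\mu_1)$ gives the deterministic shift $\Pi_J(\mu_1-\mu_0)=\Delta$ plus a single extra emission.

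For (i) I compute characteristic functions. For $u\in M$ and $v\in M^\perp$, the log-characteristic function of $S_n$ is $(n-k_n)\log\phi_0+k_n\log\phi_1$, where $\phi_b(u,v)=\sum_y W_b^{(n)}(y)\exp(i n^{-1/2}\langle u,\Pi_G(e_y-\mu_b)\rangle+i\langle v,\Pi_J(e_y-\mu_b)\rangle)$. On $D_b$ the second phase vanishes; the first has mean $O(n^{-1/2})$, since $\sum_{y\in D_b}W_b^{(n)}(y)(e_y-\mu_b)=O(n^{-1})$ by \eqref{eq:gen-dom-dominant}. A second-order expansion then yields $-\tfrac1{2n}\langle u,\Gamma_b u\rangle$. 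The rare cells contribute $n^{-1}\alpha_b(y)(e^{i\langle v,j_{b,y}\rangle}-1)+o(n^{-1})$; the Gaussian phase on rare cells is $O(n^{-1/2})$ times an $O(n^{-1})$ mass, hence negligible. Multiplying by $n-k_n\approx(1-\pi)n$ and $k_n\approx\pi n$ gives the stated limit, and L\'evy's continuity theorem finishes. The alternative adds the factor $e^{i\langle v,\Delta\rangle}$ times $\phi_1/\phi_0\to1$. The formula for $\Delta$ is already derived in the statement. In the overlap case, a single component $C$ contains all of $D_0\cup D_1$, so $\bar p_0(C)=\bar p_1(C)=1$.

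For (ii) I use the exact representation above. Only the finitely many rare counts matter, and they are jointly the rare marginal of a multinomial. By Lemma~\ref{lem:tv-contraction-main} it suffices to bound the TV distance between the rare count vector and independent Poissons. For each $b$, the multinomial rare block with rare mass $q_b=O(1/n)$ is within $O(m q_b^2)$, which is $O(1/n)$, of independent $\Poi(mW_b^{(n)}(y))$; this is the standard Poisson approximation of \cite{BHJ92,Roos19}. Independence across $b$ is handled by Lemma~\ref{lem:tv-tensorization-main}. Changing the Poisson means from $mW_b^{(n)}(y)$ to the limiting intensities costs $O(|\text{mean difference}|)=O(C_0/n)$ by \eqref{eq:quantitative-critical-regime}. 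Under $Q$, the extra user's rare emission has probability $O(1/n)$, which gives the same rate. Le Cam then follows from Lemma~\ref{lem:lecam-tv-bound-main}.

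For (iii), the limit identity is Lemma~\ref{lem:common-independent-factor-main}. For convergence, I condition on the rare configuration: given the rare counts, the dominant counts of each type are multinomial on $D_b$ with $n-k-r$ trials. In case (a) I would show that the full experiment is within $O(n^{-1/2})$ in TV of a coupled pair of the following form. The rare block is exact, and the dominant block has a conditional law that depends on the hypothesis only through an $O(1)$ shift in trial counts. That shift is invisible at scale $\sqrt n$ up to $O(n^{-1/2})$, by a local CLT comparison of $\Mult(m,p)$ with $\Mult(m\pm1,p)$ shifted by $e_y-\mu$. Lemma~\ref{lem:privacy-stability-main} and Lemma~\ref{lem:privacy-blackwell-equivalence} then transfer the curve. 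This is the main obstacle: it requires a TV (not merely weak) statement that a one-trial perturbation of a multinomial with $\Theta(n)$ trials on $D_0$ and on $D_1$ is $O(n^{-1/2})$-close. Here $\pi\in(0,1)$ ensures that both pieces are dense. In case (b) the same argument works with rate depending on $\min(k_n,n-k_n)\to\infty$, without a rate. In case (c), if $D_1\subseteq D_0$ then the single extra dominant emission of the minority type lands among the $\Theta(n)$ type-$0$ dominant trials and is smoothed. If $|D_1|=1$ there is no internal minority geometry, and the $O(1)$ type-$1$ users are deterministic on $D_1$ up to $O(1/n)$ errors.
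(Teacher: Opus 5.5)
Your arguments for (i) and (ii) are sound. Part (i) is the paper's characteristic-function computation. For (ii), your exact identity $\Pi_J\widehat H_n=\sum_b\sum_{y\notin D_b}N^{(b)}(y)\,j_{b,y}$, followed by Poissonization of the two rare marginals, is a legitimate alternative to the paper's quotient-alphabet reduction to Part~II, Theorem~5.8. It is also more uniform: it handles the disjoint and overlap geometries at once, whereas the paper treats them separately.

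The gap is in (iii) when $D_0\cap D_1=\varnothing$, which covers case (a) there and the $|D_1|=1$ branch of (c). You condition on the type-labelled rare counts. You then claim the dominant block depends on the hypothesis only through an $O(1)$ shift of trial counts, invisible in total variation up to $O(n^{-1/2})$. With disjoint dominant sets, that shift moves one trial from the $D_0$-multinomial to the $D_1$-multinomial, so it changes $N(D_0)$ and $N(D_1)$. These totals are exactly observable and carry the signal $\Delta$, so they cannot be smoothed away.

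A concrete check: take disjoint dominant sets with no rare mass. Then $P_n$ and $Q_n$ have disjoint supports, so $\delta_{Q_n\|P_n}(\varepsilon)=1=\delta_{Q_\infty\|P_\infty}(\varepsilon)$, yet your step would force $\TV(P_n,Q_n)=O(n^{-1/2})$.

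If instead you condition on the observable $Z_n$, the totals are fixed and no trial-count shift remains. The hypothesis then enters only through the latent split of each block into native and cross emissions, which your sketch never analyzes. Also, Lemma~\ref{lem:privacy-blackwell-equivalence} requires a common kernel indexed by $Z_n$ itself, not by latent type labels.

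The missing ingredients are the ones the paper uses.

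First, in the disjoint case $N(D_b)$ is a measurable function of $Z_n$ (Lemma~\ref{lem:new-measurable-totals}). So the law $R_{n,z}$ of the Gaussian coordinate built from independent native multinomials $\Mult(N(D_b),\vartheta_{b,n})$ is a genuine common kernel. Consequently $(P_n^J(dz)R_{n,z},\,Q_n^J(dz)R_{n,z})$ is Blackwell-equivalent to $(P_n^J,Q_n^J)$.

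Second, the only discrepancy between the true conditional law and $R_{n,z}$ comes from the $A_{b,n}$ cross messages. These are rare emissions of the opposite group that land in $D_b$ with split $\rho_{b,n}\neq\vartheta_{b,n}$. They are replaced one at a time by native draws via the edge-shift lemma, at cost $CA_{b,n}/\sqrt{m^\star_{b,n}-L_{b,n}+1}$. Their random, hypothesis-dependent number is then integrated out using the bounded second moment of the total rare count.

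The switched user's own dominant emission costs nothing. Without cross messages the full experiment is exactly Blackwell-equivalent to the projected one, and Theorem~\ref{thm:sharpness-hybrid} shows that mismatched cross messages are what make $n^{-1/2}$ sharp.

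Your one-trial smoothing is the right mechanism only in the overlap case, where $\Delta=0$ and one bounds $\TV(P_n,Q_n)$ directly. Even there you need a shared anchor $s\in D_0\cap D_1$ to route $e_A\to e_s\to e_B$ through the $D_0$-block and then the $D_1$-block. Neither block alone smooths a move from $D_0\setminus D_1$ to $D_1\setminus D_0$.
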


\paragraph{Specializations.}
\begin{enumerate}
\item[\textup{(a)}] \textbf{Single-dominant regime.}
If $|D_0|=|D_1|=1$, say $D_0=\{y_0\}$ and $D_1=\{y_1\}$, then
\[
M_0=M_1=\{0\},\qquad M=\{0\},\qquad \Pi_J=I,\qquad G\equiv 0.
\]
Moreover,
\[
j_{0,y}=e_y-e_{y_0}\quad (y\neq y_0),
\qquad
j_{1,y}=e_y-e_{y_1}\quad (y\neq y_1),
\qquad
\Delta=e_{y_1}-e_{y_0}.
\]
Hence part~\textup{(i)} reduces exactly to the multivariate compound-Poisson / Poisson-shift limit
of Theorem~5.8 of Part~II, and part~\textup{(ii)} reduces to its $O(n^{-1})$ total-variation / Le Cam
convergence statement.

\item[\textup{(b)}] \textbf{Two-dominant disjoint regime.}
If $|D_b|=2$ for each $b$ and $D_0\cap D_1=\varnothing$, write
\[
D_b=\{y_{ba},y_{bb}\},
\qquad
p_b(y_{ba})=p_b,\quad p_b(y_{bb})=1-p_b,
\qquad
g_b:=e_{y_{ba}}-e_{y_{bb}}.
\]
Then
\[
M=\mathrm{span}\{g_0,g_1\},
\qquad
\Gamma_b=p_b(1-p_b)\,g_bg_b^\top,
\qquad
\Sigma=(1-\pi)p_0(1-p_0)\,g_0g_0^\top+\pi p_1(1-p_1)\,g_1g_1^\top,
\]
and the theorem reduces to Proposition~5.4 of Part~II.
Its projected statement \textup{(ii)} is the natural $k$-dominant extension of the projected Le Cam
result of Section~5.3 of Part~II. In the interior and weak-boundary regimes, part~\textup{(iii)}
reduces to the corresponding full hybrid \privacycurve\ convergence statement of Appendix~B of Part~II.
At a strong boundary with a bounded minority group, however, an additional regularity condition is needed;
see part~\textup{(iii)} and Proposition~\ref{prop:strong-boundary-obstruction}.

\item[\textup{(c)}] \textbf{Two-dominant overlap regime.}
If $|D_b|=2$ for each $b$ and $D_0\cap D_1\neq\varnothing$, then $\mathsf H$ has a single connected
component, hence
\[
\Delta=0.
\]
Therefore the L\'evy--Khintchine weak limit of part~\textup{(i)} is common under both hypotheses:
\[
P_\infty=Q_\infty=\mathcal L(G,J).
\]
Thus overlap does \emph{not} require a different limiting object; it collapses the deterministic
jump-shift in the quotient block. At a strong boundary, overlap alone is not sufficient for
\privacycurve\ convergence unless the bounded minority dominant block is absorbed by the majority block;
see Proposition~\ref{prop:strong-boundary-obstruction}.

\item[\textup{(d)}] \textbf{Mixed-size dominant sets.}
If $|D_0|=1$ and $|D_1|\ge 2$ with $D_0\cap D_1=\varnothing$, then $M_0=\{0\}$,
$M=M_1$, and the Gaussian factor operates only on the tangent space of the larger
dominant set. The projected jump experiment retains the full Poisson-shift structure
from the single-dominant side, but at the strong boundary $\pi=0$ with $k_n=O(1)$ the full
\privacycurve\ need not be governed by that projected limit; this mixed-size case lies outside
the regular strong-boundary regime of part~\textup{(iii)}.
\end{enumerate}

\section{Proof of part (i): weak L\'evy--Khintchine convergence}
\label{sec:proof-i}

Throughout this section write
\[
m_{0,n}:=n-k_n,\qquad m_{1,n}:=k_n,
\]
and define
\[
G_n:=n^{-1/2}\Pi_G\widehat H_n,\qquad Z_n:=\Pi_J\widehat H_n,
\qquad S_n=(G_n,Z_n).
\]

\begin{proof}[Proof of Theorem~\ref{thm:general-k-dominant-LK}(i)]
\textbf{Step 1: establish the quotient geometry.}
For each $b\in\{0,1\}$ and $y\in D_b$, the vector $e_y-\mu_b$ belongs to $M_b\subseteq M$, hence
\[
\Pi_J(e_y-\mu_b)=0,
\qquad \Pi_J e_y=\Pi_J\mu_b=:m_b.
\]
Thus all dominant outputs inside the same dominant set collapse to a single quotient atom. This is
Lemma~\ref{lem:new-geometry} proved in Appendix~\ref{app:technical-lemmas}. It is the exact replacement for the
$m_0\neq m_1$ argument in Part~II, Corollary~5.9: the point is not that the dominant quotient atoms must be
distinct, but rather that the entire dominant block is annihilated by $\Pi_J$. If $D_0\cap D_1\neq\varnothing$, then
Lemma~\ref{lem:new-geometry} shows that $m_0=m_1$ and therefore
\[
\Delta=\Pi_J(\mu_1-\mu_0)=0.
\]

\medskip
\textbf{Step 2: rewrite the statistic as a triangular-array sum.}
Under $T_{n,k_n}$ the released histogram is
\[
N_{n,k_n}=\sum_{i=1}^{m_{0,n}} e_{Y^{(0)}_{i,n}}+\sum_{j=1}^{m_{1,n}} e_{Y^{(1)}_{j,n}},
\]
where the two groups are independent and $Y^{(b)}_{\ell,n}\sim W_b^{(n)}$. Set
\[
X^{(b)}_{\ell,n}:=e_{Y^{(b)}_{\ell,n}}-\mu_b,
\qquad
\xi^{(b)}_{\ell,n}:=\bigl(n^{-1/2}\Pi_G X^{(b)}_{\ell,n},\ \Pi_J X^{(b)}_{\ell,n}\bigr).
\]
Then
\[
S_n=\sum_{i=1}^{m_{0,n}}\xi^{(0)}_{i,n}+\sum_{j=1}^{m_{1,n}}\xi^{(1)}_{j,n}.
\]
Hence $S_n$ is a sum of independent increments in the triangular-array form required for a characteristic-
function argument.

\medskip
\textbf{Step 3: expand the one-user characteristic factor.}
Fix $u\in M$ and $v\in M^\perp$, and define
\[
\phi_{b,n}(u,v)
:=
\E\exp\!\left(
 i\bigl\langle u,n^{-1/2}\Pi_G X^{(b)}_{1,n}\bigr\rangle
 +i\bigl\langle v,\Pi_J X^{(b)}_{1,n}\bigr\rangle
\right).
\]
By Lemma~\ref{lem:new-one-user-cf} of Appendix~\ref{app:technical-lemmas}, uniformly on compact subsets of
$M\times M^\perp$,
\[
\log\phi_{b,n}(u,v)
=
-\frac{1}{2n}\langle u,\Gamma_b u\rangle
+
\frac{1}{n}\sum_{y\notin D_b}\alpha_b(y)\bigl(e^{i\langle v,j_{b,y}\rangle}-1\bigr)
+o(n^{-1}).
\]
Compared with Part~II, Proposition~5.4, the only change is that the dominant block is now genuinely
multivariate. The first-order term still vanishes because
\[
\sum_{y\in D_b} p_b(y)(e_y-\mu_b)=0,
\]
while the quadratic term becomes the covariance form $\langle u,\Gamma_bu\rangle$ rather than a scalar
binomial variance. The rare outputs still appear with probability $O(n^{-1})$, so they contribute only the
compound-Poisson exponent in the quotient coordinate. No mixed $u$--$v$ term survives, because the $u$-dependence
of each rare term is only $O(n^{-1/2})$ and is multiplied by a probability $O(n^{-1})$.

\medskip
\textbf{Step 4: sum logarithms and identify the limit under $T_{n,k_n}$.}
By independence across users,
\[
\log \E e^{i\langle u,G_n\rangle+i\langle v,Z_n\rangle}
=
 m_{0,n}\log\phi_{0,n}(u,v)+m_{1,n}\log\phi_{1,n}(u,v).
\]
Substituting the expansion from Step~3 and using $m_{0,n}/n\to 1-\pi$, $m_{1,n}/n\to \pi$, we obtain
\begin{align*}
\log \E e^{i\langle u,G_n\rangle+i\langle v,Z_n\rangle}
&\to
-\frac12\bigl((1-\pi)\langle u,\Gamma_0u\rangle+\pi\langle u,\Gamma_1u\rangle\bigr) \\
&\qquad +\sum_{y\notin D_0}(1-\pi)\alpha_0(y)\bigl(e^{i\langle v,j_{0,y}\rangle}-1\bigr)
+\sum_{y\notin D_1}\pi\alpha_1(y)\bigl(e^{i\langle v,j_{1,y}\rangle}-1\bigr).
\end{align*}
The right-hand side is exactly
\[
-\frac12\langle u,\Sigma u\rangle
+
\int_{M^\perp\setminus\{0\}} \bigl(e^{i\langle v,z\rangle}-1\bigr)\,\nu(dz),
\]
namely the characteristic exponent of an independent pair $(G,J)$ with $G\sim N(0,\Sigma)$ and $J$ compound-
Poisson with L\'evy measure $\nu$. L\'evy's continuity theorem therefore yields
\[
P_n\weakto \mathcal L(G,J)=P_\infty.
\]

\medskip
\textbf{Step 5: treat the neighboring alternative under the same centering.}
Under $T_{n,k_n+1}$, let $\widetilde S_n$ denote the hybrid statistic built from the actual group sizes
$(m_{0,n}-1,m_{1,n}+1)$ and centered by $(m_{0,n}-1)\mu_0+(m_{1,n}+1)\mu_1$. Then the statistic $S_n$ from the
statement satisfies
\[
S_n=\widetilde S_n+\bigl(n^{-1/2}\Pi_G(\mu_1-\mu_0),\ \Pi_J(\mu_1-\mu_0)\bigr)
\]
under $T_{n,k_n+1}$. The same calculation as in Step~4 applies to $\widetilde S_n$, because replacing
$m_{0,n},m_{1,n}$ by $m_{0,n}-1,m_{1,n}+1$ changes the logarithmic exponent only by $o(1)$. Hence
\[
\widetilde S_n\weakto (G,J).
\]
Moreover,
\[
n^{-1/2}\Pi_G(\mu_1-\mu_0)\to 0,
\qquad
\Pi_J(\mu_1-\mu_0)=\Delta.
\]
Therefore
\[
Q_n\weakto \mathcal L(G,J+\Delta)=Q_\infty.
\]
If $D_0\cap D_1\neq\varnothing$, then Step~1 gives $\Delta=0$, so the same weak limit appears under both
hypotheses. This proves part~(i).
\end{proof}

\section{Proof of part (ii): projected TV / Le Cam convergence}
\label{sec:proof-ii}

\begin{proof}[Proof of Theorem~\ref{thm:general-k-dominant-LK}(ii)]
Define the projected statistic
\[
S_n^J:=\Pi_J\widehat H_n,
\]
and let $P_n^J,Q_n^J$ denote its laws under $T_{n,k_n}$ and $T_{n,k_n+1}$, respectively.

\textbf{Step 1: build the quotient alphabet and the pushed-forward array.}
Let
\[
\mathcal Y_J:=\Pi_J\bigl(\{e_y:y\in\mathcal Y\}\bigr)\subset M^\perp,
\]
and for each $b\in\{0,1\}$ let $\widetilde W_b^{(n)}$ be the pushforward of $W_b^{(n)}$ under the map
$y\mapsto \Pi_J e_y$. By Lemma~\ref{lem:new-projected-array} in Appendix~\ref{app:technical-lemmas},
$\widetilde W_b^{(n)}$ has dominant quotient atom $m_b=\Pi_J\mu_b$, and for every
$w\in \mathcal Y_J\setminus\{m_b\}$,
\[
n\,\widetilde W_b^{(n)}(w)\to \widetilde \alpha_b(w):=
\sum_{\substack{y\notin D_b\\ \Pi_J e_y=w}}\alpha_b(y).
\]

\medskip
\textbf{Step 2: separate the disjoint and overlap quotient geometries.}
By Lemma~\ref{lem:new-geometry}, there are exactly two possibilities.
If $D_0\cap D_1=\varnothing$, then $m_0\neq m_1$. If $D_0\cap D_1\neq\varnothing$, then $m_0=m_1$ and
$\Delta=0$. This answers the subtlety left open in the proof skeleton: the quotient-alphabet construction is
literally Part~II, Theorem~5.8 applied to the projected array only in the disjoint case. In the overlap case one
must rerun the projected Poisson approximation with a \emph{common} dominant quotient atom.

\medskip
\textbf{Step 3: the disjoint quotient case $m_0\neq m_1$.}
Assume first that $D_0\cap D_1=\varnothing$. Then Lemma~\ref{lem:new-projected-array} shows that the pushed-
forward array $\widetilde W^{(n)}$ lies in the sparse-error critical regime of Part~II, Definition~5.1, on the
finite alphabet $\mathcal Y_J$, with dominant outputs $m_0,m_1$ and rare intensities $\widetilde\alpha_b$.
Moreover, the quantitative hypothesis \eqref{eq:quantitative-critical-regime} transfers to the projected array:
for every $w\neq m_b$,
\[
\bigl|n\widetilde W_b^{(n)}(w)-\widetilde\alpha_b(w)\bigr|
\le \sum_{\substack{y\notin D_b\\ \Pi_J e_y=w}} \bigl|nW_b^{(n)}(y)-\alpha_b(y)\bigr|
\le \frac{C}{n},
\]
because each fiber of $y\mapsto \Pi_J e_y$ is finite; and Lemma~\ref{lem:new-projected-array} gives
$\widetilde W_b^{(n)}(m_b)=1-O(n^{-1})$. Thus the projected array satisfies the quantitative condition required
for Part~II, Theorem~5.8. Concretely, that theorem applies to a finite alphabet with two distinct dominant atoms
$m_0\neq m_1$, all remaining coordinates of order $n^{-1}$ with $O(n^{-1})$ parameter control, and any composition
sequence $\pi_n\to\pi$; it yields an $O(n^{-1})$ total-variation approximation of the centered projected histogram
by the corresponding compound-Poisson / Poisson-shift limit experiment. Let $\widetilde H_n$ be the associated
centered quotient histogram under $T_{n,k_n}$, and let
\[
L:\mathbb Z^{\mathcal Y_J}\to M^\perp,
\qquad L(h):=\sum_{w\in\mathcal Y_J} h(w)\,w.
\]
By construction,
\[
S_n^J=L(\widetilde H_n).
\]
Part~II, Theorem~5.8 applied to $\widetilde W^{(n)}$ gives
\[
\TV\bigl(\mathcal L(\widetilde H_n),\widetilde P_\infty\bigr)+
\TV\bigl(\mathcal L(\widetilde H_n^{\mathrm{alt}}),\widetilde Q_\infty\bigr)
\le \frac{C}{n},
\]
with $\widetilde P_\infty,\widetilde Q_\infty$ the quotient Poisson-shift limits on $\mathbb Z^{\mathcal Y_J}$. Applying
Lemma~\ref{lem:tv-contraction-main} to the measurable map $L$ yields
\[
\TV(P_n^J,P_\infty^J)+\TV(Q_n^J,Q_\infty^J)\le \frac{C}{n},
\]
where $P_\infty^J=\mathcal L(J)$ and $Q_\infty^J=\mathcal L(J+\Delta)$ after grouping equal projected rare
atoms.

\medskip
\textbf{Step 4: the overlap quotient case $m_0=m_1$.}
Assume now that $D_0\cap D_1\neq\varnothing$, so that $m_0=m_1=:m_\star$ and $\Delta=0$.
Let
\[
\mathcal Y_J^\star:=\mathcal Y_J\setminus\{m_\star\}.
\]
For $b\in\{0,1\}$ and under $T_{n,k_n}$, define the projected rare-count vector
\[
D_{b,n}:=\bigl(\widetilde N_{b,n}(w)\bigr)_{w\in \mathcal Y_J^\star},
\]
where $\widetilde N_{0,n}\sim \Mult(m_{0,n},\widetilde W_0^{(n)})$ and
$\widetilde N_{1,n}\sim \Mult(m_{1,n},\widetilde W_1^{(n)})$ are independent quotient histograms for the two groups.
Because all dominant outputs map to $m_\star$, the projected centered statistic is the deterministic linear image
\[
S_n^J=\sum_{w\in\mathcal Y_J^\star} \bigl(D_{0,n}(w)+D_{1,n}(w)\bigr)(w-m_\star).
\]
For each $b$, the total projected rare mass
\[
\widetilde p_{b,n}:=\sum_{w\in\mathcal Y_J^\star}\widetilde W_b^{(n)}(w)
\]
is $O(n^{-1})$ by Lemma~\ref{lem:new-projected-array}. The explicit Poisson approximation for multinomials on
rare categories from Part~II, Appendix~A (the same estimate used in the proof of Part~II, Theorem~5.8), gives
independent Poisson vectors $\widetilde U_{b,n}$ on $\mathcal Y_J^\star$ with means
$m_{b,n}\widetilde W_b^{(n)}(w)$ such that
\[
\TV\bigl(\mathcal L(D_{b,n}),\mathcal L(\widetilde U_{b,n})\bigr)
\le m_{b,n}\widetilde p_{b,n}\bigl(1-e^{-\widetilde p_{b,n}}\bigr)
\le \frac{C}{n}.
\]
Since the dimensions are finite and assumption~\eqref{eq:quantitative-critical-regime} controls the means at rate
$O(n^{-1})$, perturbing the Poisson means to their limits
\[
(1-\pi)\widetilde\alpha_0(w),\qquad \pi\widetilde\alpha_1(w),
\qquad w\in\mathcal Y_J^\star,
\]
changes the law by at most $C/n$ in total variation (coordinatewise Poisson perturbation plus
Lemma~\ref{lem:tv-tensorization-main}). The sum of the two independent limiting Poisson vectors has exactly the law
of the compound-Poisson jump field $J$ on the quotient space. Applying the deterministic linear map
\[
L_\star(h):=\sum_{w\in\mathcal Y_J^\star} h(w)(w-m_\star)
\]
and using Lemma~\ref{lem:tv-contraction-main} yields
\[
\TV(P_n^J,\mathcal L(J))\le \frac{C}{n}.
\]
The same argument applies under the neighboring alternative. Since $\Delta=0$, the projected limits under both
hypotheses coincide and equal $\mathcal L(J)$. Hence
\[
\TV(P_n^J,P_\infty^J)+\TV(Q_n^J,Q_\infty^J)\le \frac{C}{n}
\]
in the overlap case as well.

\medskip
\textbf{Step 5: identify the projected limit.}
In the disjoint case the limit is the Poisson-shift pair from Step~3; in the overlap case it is the common
compound-Poisson law from Step~4. In either case the laws are exactly $P_\infty^J=\mathcal L(J)$ and
$Q_\infty^J=\mathcal L(J+\Delta)$ from the statement.

\medskip
\textbf{Step 6: conclude Le Cam convergence.}
The two projected experiments live on the same measurable space $M^\perp$, so
Lemma~\ref{lem:lecam-tv-bound-main} gives
\[
\LC\bigl((P_n^J,Q_n^J),(P_\infty^J,Q_\infty^J)\bigr)
\le
\max\{\TV(P_n^J,P_\infty^J),\TV(Q_n^J,Q_\infty^J)\}
\le \frac{C}{n}.
\]
This proves part~(ii).
\end{proof}

\section{\texorpdfstring{Proof of part (iii): \privacycurve\ convergence}{Proof of part (iii): privacy-curve convergence}}
\label{sec:proof-iii}

Write again
\[
G_n:=n^{-1/2}\Pi_G\widehat H_n,
\qquad
Z_n:=\Pi_J\widehat H_n.
\]
For the regular conditional laws of $G_n$ given $Z_n=z$ under $T_{n,k_n}$ and $T_{n,k_n+1}$ we write,
respectively,
\[
P_{n,z}^{G\mid Z}
\qquad\text{and}\qquad
Q_{n,z}^{G\mid Z}.
\]

The proof has five moving parts. In the disjoint interior regime we first recover the dominant totals from the
quotient statistic, then smooth the conditional dominant blocks by replacing a small number of foreign messages
with native dominant draws, then attach the resulting common conditional factor to the projected quotient
experiment and reduce the \privacycurve\ to the projected one by Blackwell equivalence. After that, the overlap
regime is handled directly by a one-user replacement argument. Finally, the boundary regimes are treated
separately: the weak boundary still admits conditional smoothing, while the strong boundary requires an
additional structural hypothesis and is followed by a counterexample showing why no fully general theorem can hold.

\subsection*{Interior disjoint case: Steps 1--8}

\begin{proof}[Proof of Theorem~\ref{thm:general-k-dominant-LK}(iii): interior disjoint case]
Assume throughout this subsection that $\pi\in(0,1)$ and $D_0\cap D_1=\varnothing$.

\textbf{Step 1: recover the dominant totals from the projected statistic.}
Lemma~\ref{lem:new-measurable-totals} from Appendix~\ref{app:technical-lemmas} shows that for each $b\in\{0,1\}$
there exists a measurable map $\tau_{b,n}:M^\perp\to\N$ such that
\[
N_{n,k_n}(D_b)=\tau_{b,n}(Z_n)\qquad\text{a.s. under both hypotheses.}
\]
For later use write
\[
T_{b,n}(z):=\tau_{b,n}(z).
\]
This is precisely what replaces the pair-total measurability of Part~II, Appendix~B.

\medskip
\textbf{Step 2: define the native conditional kernel.}
For $y\in D_b$ define the normalized dominant law
\[
\vartheta_{b,n}(y):=\frac{W_b^{(n)}(y)}{W_b^{(n)}(D_b)}.
\]
Given $z$ in the support of $Z_n$, let $X_{0,n}^\circ\sim \Mult(T_{0,n}(z),\vartheta_{0,n})$ and
$X_{1,n}^\circ\sim \Mult(T_{1,n}(z),\vartheta_{1,n})$ be independent. Define
\[
\Psi_{n,z}(x_0,x_1)
:=
 n^{-1/2}\Pi_G\Bigl(
 \sum_{y\in D_0}(x_0(y)-T_{0,n}(z)\vartheta_{0,n}(y))e_y
 +\sum_{y\in D_1}(x_1(y)-T_{1,n}(z)\vartheta_{1,n}(y))e_y
 \Bigr),
\]
and let $R_{n,z}$ denote the law of $\Psi_{n,z}(X_{0,n}^\circ,X_{1,n}^\circ)$. This is exactly the common
finite-$n$ conditional factor from Lemma~\ref{lem:new-conditional-smoothing}.

\medskip
\textbf{Step 3: condition on the refined rare configuration.}
Fix one of the two hypotheses. Write $(m_{0,n}^\star,m_{1,n}^\star)$ for the actual group sizes under that
hypothesis:
\[
(m_{0,n}^\star,m_{1,n}^\star)=
\begin{cases}
(m_{0,n},m_{1,n}), & \text{under }T_{n,k_n},\\
(m_{0,n}-1,m_{1,n}+1), & \text{under }T_{n,k_n+1}.
\end{cases}
\]
Refine the conditioning by revealing: (a) which users in each group produce outputs outside their native dominant
set; (b) among those rare users, which ones land in the opposite dominant set and which ones land elsewhere; and
(c) all rare outputs outside $D_0\cup D_1$. Under this refined conditioning, the dominant blocks are independent and
have the form
\[
X_{b,n}=S_{b,n}+\sum_{r=1}^{A_{b,n}}\eta_{b,r},
\]
where
\[
S_{b,n}\sim \Mult(m_{b,n}^\star-L_{b,n},\vartheta_{b,n})
\]
is the native dominant contribution, $L_{b,n}$ is the number of group-$b$ users leaving $D_b$, $A_{b,n}$ is the
number of users from the opposite group landing in $D_b$, and the $\eta_{b,r}$ are i.i.d. categorical random vectors
in $\{e_y:y\in D_b\}$ with some cross law $\rho_{b,n}$ determined by the opposite group. The surrogate kernel
$R_{n,z}$ replaces each $\eta_{b,r}$ by an independent native draw $\eta_{b,r}^\circ\sim \Cat(\vartheta_{b,n})$.

\medskip
\textbf{Step 4: telescoping TV bound inside each dominant block.}
Fix $b\in\{0,1\}$ and condition on the refined rare configuration from Step~3. Let
\[
Y_{b,n}^{(0)}:=S_{b,n}+\sum_{r=1}^{A_{b,n}}\eta_{b,r},
\qquad
Y_{b,n}^{(A_{b,n})}:=S_{b,n}+\sum_{r=1}^{A_{b,n}}\eta_{b,r}^\circ,
\]
and for $0\le s\le A_{b,n}$ define the interpolating sums
\[
Y_{b,n}^{(s)}:=S_{b,n}+\sum_{r=1}^{s}\eta_{b,r}^\circ+\sum_{r=s+1}^{A_{b,n}}\eta_{b,r}.
\]
By the triangle inequality,
\[
\TV\bigl(\mathcal L(Y_{b,n}^{(0)}),\mathcal L(Y_{b,n}^{(A_{b,n})})\bigr)
\le \sum_{s=0}^{A_{b,n}-1}
\TV\bigl(\mathcal L(Y_{b,n}^{(s)}),\mathcal L(Y_{b,n}^{(s+1)})\bigr).
\]
Now condition additionally on all variables except the pair $(\eta_{b,s+1},\eta_{b,s+1}^\circ)$. The common part is a
random vector independent of this pair. Since $p_b(y)>0$ for every $y\in D_b$ and $W_b^{(n)}(y)=p_b(y)+O(n^{-1})$, the normalized dominant law satisfies $\vartheta_{b,n}(y)\ge p_b(y)/2>0$ for all sufficiently large $n$, so the hypothesis $\vartheta_{\min}>0$ of Lemma~\ref{lem:new-edge-shift} is met with a constant depending only on $p_b$. By that lemma, the shift of a
multinomial block by one category-vector to another costs at most $C_b/\sqrt{m_{b,n}^\star-L_{b,n}+1}$ in total
variation, uniformly over the two categories. Averaging over the pair
$(\eta_{b,s+1},\eta_{b,s+1}^\circ)$ therefore gives
\[
\TV\bigl(\mathcal L(Y_{b,n}^{(s)}),\mathcal L(Y_{b,n}^{(s+1)})\bigr)
\le \frac{C_b}{\sqrt{m_{b,n}^\star-L_{b,n}+1}}.
\]
Summing over $s$ yields
\[
\TV\bigl(\mathcal L(Y_{b,n}^{(0)}),\mathcal L(Y_{b,n}^{(A_{b,n})})\bigr)
\le \frac{C_bA_{b,n}}{\sqrt{m_{b,n}^\star-L_{b,n}+1}}.
\]
Since the two dominant blocks are conditionally independent and the map $\Psi_{n,z}$ is measurable, we obtain
\[
\TV\bigl(\mathcal L(G_n\mid Z_n,\mathcal F_n),R_{n,Z_n}\bigr)
\le
\frac{C_0A_{0,n}}{\sqrt{m_{0,n}^\star-L_{0,n}+1}}
+
\frac{C_1A_{1,n}}{\sqrt{m_{1,n}^\star-L_{1,n}+1}},
\]
where $\mathcal F_n$ denotes the refined rare sigma-field.

\medskip
\textbf{Step 5: average over the rare configuration.}
Because $\pi\in(0,1)$, both actual group sizes are of order $n$ under both hypotheses. More precisely, there exists
$\kappa>0$ such that for all large $n$,
\[
m_{0,n}^\star\ge 3\kappa n,
\qquad
m_{1,n}^\star\ge 3\kappa n.
\]
The total numbers of rare outputs and cross messages have bounded means, because the rare probabilities are
$O(n^{-1})$ uniformly and the alphabet is finite. Hence
\[
\E[A_{b,n}]=O(1),\qquad \E[L_{b,n}]=O(1),\qquad b\in\{0,1\}.
\]
On the event $\{L_{b,n}\le \kappa n\}$ we have
\[
\frac{A_{b,n}}{\sqrt{m_{b,n}^\star-L_{b,n}+1}}
\le \frac{A_{b,n}}{\sqrt{2\kappa n}},
\]
while on the complementary event the same quantity is bounded by $A_{b,n}$. Let
\[
R_n:=L_{0,n}+L_{1,n}.
\]
Then $A_{b,n}\le R_n$, $L_{b,n}\le R_n$, and the rare-count estimates give $\E[R_n^2]=O(1)$. Therefore
\[
\E\Bigl[A_{b,n}\1\{L_{b,n}>\kappa n\}\Bigr]
\le
\E\Bigl[R_n\1\{R_n>\kappa n\}\Bigr]
\le
\frac{\E[R_n^2]}{\kappa n}
=O(n^{-1}).
\]
Therefore
\[
\E\Bigl[\frac{A_{b,n}}{\sqrt{m_{b,n}^\star-L_{b,n}+1}}\Bigr]=O(n^{-1/2}).
\]
Integrating the Step~4 bound over the refined rare configuration and then over $Z_n$ proves
Lemma~\ref{lem:new-conditional-smoothing}:
\[
\int \TV\bigl(P_{n,z}^{G\mid Z},R_{n,z}\bigr)\,P_n^J(dz)
+
\int \TV\bigl(Q_{n,z}^{G\mid Z},R_{n,z}\bigr)\,Q_n^J(dz)
\le \frac{C_{\mathrm{int}}}{\sqrt n}.
\]

\medskip
\textbf{Step 6: build the auxiliary experiment with common conditional factor.}
Define probability laws on $M\times M^\perp$ by
\[
\widetilde P_n(dg,dz):=P_n^J(dz)R_{n,z}(dg),
\qquad
\widetilde Q_n(dg,dz):=Q_n^J(dz)R_{n,z}(dg).
\]
Let $K((g,z),B):=\1_B(z)$ be the projection kernel onto $M^\perp$, and let
\[
L(z,A):=\int \1_A(g,z)\,R_{n,z}(dg)
\]
be the enrichment kernel. Then
\[
\widetilde P_nK=P_n^J,
\qquad \widetilde Q_nK=Q_n^J,
\qquad P_n^J L=\widetilde P_n,
\qquad Q_n^J L=\widetilde Q_n.
\]
Thus the projected experiment $(P_n^J,Q_n^J)$ and the auxiliary experiment
$(\widetilde P_n,\widetilde Q_n)$ are linked by Markov kernels in both directions. Lemma~\ref{lem:privacy-blackwell-equivalence}
therefore gives
\[
\delta_{\widetilde Q_n\|\widetilde P_n}(\varepsilon)=\delta_{Q_n^J\|P_n^J}(\varepsilon).
\]

\medskip
\textbf{Step 7: compare the full experiment to the auxiliary experiment.}
By Lemma~\ref{lem:privacy-stability-main},
\begin{align*}
\bigl|\delta_{Q_n\|P_n}(\varepsilon)-\delta_{Q_n^J\|P_n^J}(\varepsilon)\bigr|
&=
\bigl|\delta_{Q_n\|P_n}(\varepsilon)-\delta_{\widetilde Q_n\|\widetilde P_n}(\varepsilon)\bigr| \\
&\le
\TV(Q_n,\widetilde Q_n)+e^\varepsilon\TV(P_n,\widetilde P_n).
\end{align*}
Disintegrating $P_n$ and $Q_n$ with respect to $Z_n$ and using the defining property of $\widetilde P_n$ and
$\widetilde Q_n$, the two total-variation terms are exactly the two integrals controlled in Step~5. Therefore
\[
\bigl|\delta_{Q_n\|P_n}(\varepsilon)-\delta_{Q_n^J\|P_n^J}(\varepsilon)\bigr|
\le \frac{C_{\mathrm{int}}(1+e^\varepsilon)}{\sqrt n}.
\]

\medskip
\textbf{Step 8: pass to the projected limit and remove the common Gaussian factor.}
By part~(ii) and Lemma~\ref{lem:privacy-stability-main},
\[
\delta_{Q_n^J\|P_n^J}(\varepsilon)\to \delta_{Q_\infty^J\|P_\infty^J}(\varepsilon)
=\delta_{\mathcal L(J+\Delta)\|\mathcal L(J)}(\varepsilon)
\]
for every fixed $\varepsilon\ge 0$. Since $P_\infty=\mathcal L(G,J)$ and $Q_\infty=\mathcal L(G,J+\Delta)$, and $G$ is
common and independent under both hypotheses, Lemma~\ref{lem:common-independent-factor-main} gives
\[
\delta_{Q_\infty\|P_\infty}(\varepsilon)
=
\delta_{\mathcal L(J+\Delta)\|\mathcal L(J)}(\varepsilon).
\]
Combining these two displays with Step~7 proves part~(iii) in the interior disjoint case.
\end{proof}

\subsection*{Interior overlap case: Steps 9--11}

\begin{proof}[Proof of Theorem~\ref{thm:general-k-dominant-LK}(iii): interior overlap case]
Assume $\pi\in(0,1)$ and $D_0\cap D_1\neq\varnothing$.

\textbf{Step 9: identify the limiting privacy problem.}
By Theorem~\ref{thm:general-k-dominant-LK}(i),
\[
\Delta=0,
\qquad
P_\infty=Q_\infty=\mathcal L(G,J),
\qquad
\delta_{Q_\infty\|P_\infty}(\varepsilon)\equiv 0.
\]
Thus it suffices to prove that $\delta_{Q_n\|P_n}(\varepsilon)\to 0$.

\medskip
\textbf{Step 10: prove a direct $O(n^{-1/2})$ TV comparison between the full laws.}
Choose and fix a shared dominant symbol $s\in D_0\cap D_1$. Under $T_{n,k_n}$ write the dataset as
$n-k_n-1$ common zeros, $k_n$ common ones, and one distinguished extra zero. Under $T_{n,k_n+1}$ keep the same
$n-1$ common users but replace the distinguished extra zero by a distinguished extra one. Couple all common users
identically under the two hypotheses. Let $H_n^{\mathrm{com}}$ be the common contribution of those $n-1$ users after centering
by $(n-k_n-1)\mu_0+k_n\mu_1$. Then under the two hypotheses,
\[
S_n^{(P)}=H_n^{\mathrm{com}}+\bigl(n^{-1/2}\Pi_G(e_{A_n}-\mu_0),\ \Pi_J(e_{A_n}-\mu_0)\bigr),
\]
\[
S_n^{(Q)}=H_n^{\mathrm{com}}+\bigl(n^{-1/2}\Pi_G(e_{B_n}-\mu_0),\ \Pi_J(e_{B_n}-\mu_0)\bigr),
\]
where $A_n\sim W_0^{(n)}$ and $B_n\sim W_1^{(n)}$ are independent of the common part. Rare outputs of the switched
user have probability $O(n^{-1})$ under either hypothesis, so their total contribution to the total-variation
distance is $O(n^{-1})$.

On the dominant event $\{A_n\in D_0,\ B_n\in D_1\}$, we use the triangle inequality with the common anchor $s$:
\begin{align*}
\TV\bigl(\mathcal L(S_n^{(P)}),\mathcal L(S_n^{(Q)})\bigr)
&\le
\TV\bigl(\mathcal L(H_n^{\mathrm{com}}+\Xi(A_n)),\mathcal L(H_n^{\mathrm{com}}+\Xi(s))\bigr) \\
&\qquad+
\TV\bigl(\mathcal L(H_n^{\mathrm{com}}+\Xi(s)),\mathcal L(H_n^{\mathrm{com}}+\Xi(B_n))\bigr)
+O(n^{-1}),
\end{align*}
where $\Xi(y):=(n^{-1/2}\Pi_G(e_y-\mu_0),\Pi_J(e_y-\mu_0))$. To bound the first term, condition on the entire common
$1$-group and all rare outputs from the common $0$-group. The remaining dominant $0$-group block is a multinomial
with native law $\vartheta_{0,n}$ and sample size of order $n$; adding $A_n$ versus $s$ changes it by a single edge
shift, so Lemma~\ref{lem:new-edge-shift} gives an $O(n^{-1/2})$ bound. The second term is controlled in the same way
by conditioning instead on the common $0$-group and smoothing inside the common $1$-group block. Averaging over the
conditioning variables yields
\[
\TV(P_n,Q_n)\le \frac{C}{\sqrt n},
\]
which is Lemma~\ref{lem:new-overlap-tv} from Appendix~\ref{app:technical-lemmas}.

\medskip
\textbf{Step 11: conclude \privacycurve\ convergence.}
By the definition of the privacy curve,
\[
0\le \delta_{Q_n\|P_n}(\varepsilon)\le \sup_A |Q_n(A)-P_n(A)|\le \TV(P_n,Q_n).
\]
Step~10 therefore implies
\[
\delta_{Q_n\|P_n}(\varepsilon)\le \frac{C}{\sqrt n}\to 0
=\delta_{Q_\infty\|P_\infty}(\varepsilon).
\]
This proves part~(iii) in the interior overlap case.
\end{proof}

\subsection*{Boundary case: Step 12}

\begin{proof}[Proof of Theorem~\ref{thm:general-k-dominant-LK}(iii): boundary regimes covered by the theorem]
Assume first that $\pi=0$; the case $\pi=1$ is symmetric after swapping the roles of the two groups.
Write
\[
m_{0,n}:=n-k_n,
\qquad
m_{1,n}:=k_n.
\]

\textbf{(a) Strong boundary with the regularity condition from part~\textup{(iii)(c)}.}
Assume $k_n=O(1)$.

\smallskip
\emph{Case 1: the bounded minority dominant block is absorbed by the majority block, i.e. $D_1\subseteq D_0$.}
In this case Lemma~\ref{lem:new-geometry} gives $\Delta=0$, so it suffices to show
\[
\TV(P_n,Q_n)\to 0.
\]
Fix $K<\infty$ with $k_n\le K$ for all large $n$. Couple the two neighboring datasets by using the same
$n-k_n-1$ common $0$-users and the same $k_n$ common $1$-users under both hypotheses, and by switching only one
distinguished user from input $0$ under $T_{n,k_n}$ to input $1$ under $T_{n,k_n+1}$. Reveal all outputs outside
$D_0$ of the common users and, under each hypothesis, also reveal whether the switched user leaves $D_0$.
Because $D_1\subseteq D_0$ and all rare probabilities are $O(n^{-1})$, the probability that any of the $k_n$
common $1$-users or the switched user leaves $D_0$ is $O(n^{-1})$.

Conditional on the revealed rare data and on the event that all common $1$-users and the switched user stay in
$D_0$, the remaining common $0$-user block on $D_0$ is distributed as
\[
S_n\sim \Mult(m_n',\vartheta_{0,n}),
\qquad
m_n'=n-O_{\Pbb}(1),
\]
where $\vartheta_{0,n}(y):=W_0^{(n)}(y)/W_0^{(n)}(D_0)$ and $m_n'$ equals the number of common $0$-users whose
outputs stay in $D_0$. The contribution of the common $1$-users and of the switched user consists of at most
$K+1$ additional category vectors in $D_0$ under each hypothesis. Hence, conditional on the revealed rare data and
on the values of those $O(1)$ minority outputs, both full statistics are measurable images of
\[
S_n+\sum_{r=1}^{R_n} e_{a_r}
\qquad\text{and}\qquad
S_n+\sum_{r=1}^{R_n'} e_{b_r},
\]
for some $R_n,R_n'\le K+1$ and some $a_r,b_r\in D_0$. By the multi-step part of
Lemma~\ref{lem:new-edge-shift},
\[
\TV\!\left(
\mathcal L\!\left(S_n+\sum_{r=1}^{R_n} e_{a_r}\,\middle|\,\mathcal F_n\right),
\mathcal L\!\left(S_n+\sum_{r=1}^{R_n'} e_{b_r}\,\middle|\,\mathcal F_n\right)
\right)
\le
\frac{C(R_n+R_n')}{\sqrt{m_n'+1}}
\le
\frac{C_K}{\sqrt{m_n'+1}},
\]
where $\mathcal F_n$ denotes the revealed sigma-field. Since the number of common rare outputs has bounded second
moment, the same argument as in Lemma~\ref{lem:new-conditional-smoothing} yields
\[
\E\bigl[(m_n'+1)^{-1/2}\bigr]=O(n^{-1/2}).
\]
Therefore the conditional total-variation distance is $O(n^{-1/2})$ on the event that all minority users stay in
$D_0$, while the complementary event has probability $O(n^{-1})$. Averaging proves
\[
\TV(P_n,Q_n)\le \frac{C}{\sqrt n}.
\]
Hence, for every fixed $\varepsilon\ge 0$,
\[
0\le \delta_{Q_n\|P_n}(\varepsilon)\le \TV(P_n,Q_n)\to 0
=\delta_{Q_\infty\|P_\infty}(\varepsilon).
\]

\smallskip
\emph{Case 2: the bounded minority dominant block is one-point and disjoint from the majority block, i.e.
$|D_1|=1$ and $D_1\cap D_0=\varnothing$.}
Write $D_1=\{y_\star\}$. Then $M_1=\{0\}$, and the only dominant Gaussian block is the majority $D_0$-block.
Since the case is disjoint, Lemma~\ref{lem:new-measurable-totals} applies and the dominant total
\[
T_{0,n}(z):=N_{n,k_n}(D_0)
\]
is $\sigma(Z_n)$-measurable. Define
\[
\vartheta_{0,n}(y):=\frac{W_0^{(n)}(y)}{W_0^{(n)}(D_0)},
\qquad y\in D_0,
\]
and for $z$ in the support of $Z_n$ let $X_{0,n}^\circ\sim \Mult(T_{0,n}(z),\vartheta_{0,n})$. Let $R_{n,z}$ be
the law of
\[
\Phi_{n,z}(X_{0,n}^\circ)
:=
n^{-1/2}\Pi_G\!\left(
\sum_{y\in D_0}(X_{0,n}^\circ(y)-T_{0,n}(z)\vartheta_{0,n}(y))e_y
\right).
\]
Define the auxiliary experiment
\[
\widetilde P_n(dg,dz):=P_n^J(dz)\,R_{n,z}(dg),
\qquad
\widetilde Q_n(dg,dz):=Q_n^J(dz)\,R_{n,z}(dg).
\]
By Lemma~\ref{lem:privacy-blackwell-equivalence},
\[
\delta_{\widetilde Q_n\|\widetilde P_n}(\varepsilon)=\delta_{Q_n^J\|P_n^J}(\varepsilon).
\]

It remains to compare $(P_n,Q_n)$ to $(\widetilde P_n,\widetilde Q_n)$. Work under either hypothesis and let
$m_{0,n}^\star=n-k_n+O(1)$ denote the actual number of $0$-users. Reveal all outputs outside $D_0$ and let $L_{0,n}$
be the number of $0$-users leaving $D_0$. Let $B_n$ be the number of minority users landing in $D_0$; because
$|D_1|=1$ and $D_1\cap D_0=\varnothing$, every such landing is a rare cross event and therefore
\[
\E[B_n]=O(n^{-1}).
\]
Conditional on the revealed rare data, the $D_0$-block has the form
\[
Y_n=S_n+\sum_{r=1}^{B_n}\eta_r,
\qquad
S_n\sim \Mult(m_{0,n}^\star-L_{0,n},\vartheta_{0,n}),
\]
where the $\eta_r$ are $D_0$-valued category vectors coming from the minority users that cross into $D_0$.
Replace them one by one by independent native draws $\eta_r^\circ\sim \Cat(\vartheta_{0,n})$, independent of
everything else, and write
\[
Y_n^\circ:=S_n+\sum_{r=1}^{B_n}\eta_r^\circ.
\]
Then, conditional on $Z_n=z$, the law of $Y_n^\circ$ is exactly $\Mult(T_{0,n}(z),\vartheta_{0,n})$, hence its
image under $\Phi_{n,z}$ is $R_{n,z}$. By Lemma~\ref{lem:new-edge-shift},
\[
\TV\bigl(\mathcal L(Y_n\mid \mathcal F_n),\mathcal L(Y_n^\circ\mid \mathcal F_n)\bigr)
\le \frac{CB_n}{\sqrt{m_{0,n}^\star-L_{0,n}+1}},
\]
and therefore
\[
\TV\bigl(\mathcal L(G_n\mid Z_n,\mathcal F_n),R_{n,Z_n}\bigr)
\le \frac{CB_n}{\sqrt{m_{0,n}^\star-L_{0,n}+1}}.
\]
Since $L_{0,n}$ has bounded second moment and $\E[B_n]=O(n^{-1})$, averaging yields
\[
\TV(P_n,\widetilde P_n)+\TV(Q_n,\widetilde Q_n)=o(1).
\]
By Lemma~\ref{lem:privacy-stability-main},
\[
\bigl|\delta_{Q_n\|P_n}(\varepsilon)-\delta_{Q_n^J\|P_n^J}(\varepsilon)\bigr|=o(1).
\]
Part~(ii) then gives
\[
\delta_{Q_n\|P_n}(\varepsilon)\to
\delta_{Q_\infty^J\|P_\infty^J}(\varepsilon)
=
\delta_{Q_\infty\|P_\infty}(\varepsilon).
\]

\medskip
\textbf{(b) Weak boundary: $k_n\to\infty$ and $k_n/n\to 0$.}
We first treat the disjoint case $D_0\cap D_1=\varnothing$. Define $\vartheta_{b,n}$ and the surrogate kernels
$R_{n,z}$ exactly as in Steps~2 and~6 of Section~\ref{sec:proof-iii}. Returning to the refined conditioning from
Step~3, the same telescoping argument as in Step~4 gives, under either hypothesis,
\[
\TV\bigl(\mathcal L(G_n\mid Z_n,\mathcal F_n),R_{n,Z_n}\bigr)
\le
\frac{C_0A_{0,n}}{\sqrt{m_{0,n}^\star-L_{0,n}+1}}
+
\frac{C_1A_{1,n}}{\sqrt{m_{1,n}^\star-L_{1,n}+1}},
\]
where $m_{0,n}^\star=n-k_n+O(1)$ and $m_{1,n}^\star=k_n+O(1)$. Here $A_{0,n}$ counts minority users landing in
$D_0$, while $A_{1,n}$ counts majority users landing in $D_1$. Since each user hits the opposite dominant set with
probability $O(n^{-1})$,
\[
\E[A_{0,n}]=O(k_n/n),
\qquad
\E[A_{1,n}]=O(1),
\qquad
\E[L_{0,n}]=O(1),
\qquad
\E[L_{1,n}]=O(k_n/n).
\]
Because $m_{0,n}^\star\asymp n$ and $m_{1,n}^\star\asymp k_n$, Markov's inequality gives
\[
\Pbb\bigl(L_{0,n}>\tfrac12 m_{0,n}^\star\bigr)=O(n^{-1}),
\qquad
\Pbb\bigl(L_{1,n}>\tfrac12 m_{1,n}^\star\bigr)=O(n^{-1}).
\]
Using the bounded second moments of the rare counts exactly as in Step~5, we obtain
\[
\E\Bigl[\frac{A_{0,n}}{\sqrt{m_{0,n}^\star-L_{0,n}+1}}\Bigr]=O(n^{-1/2}),
\qquad
\E\Bigl[\frac{A_{1,n}}{\sqrt{m_{1,n}^\star-L_{1,n}+1}}\Bigr]=O(k_n^{-1/2}).
\]
Therefore
\[
\int \TV\bigl(P_{n,z}^{G\mid Z},R_{n,z}\bigr)\,P_n^J(dz)
+
\int \TV\bigl(Q_{n,z}^{G\mid Z},R_{n,z}\bigr)\,Q_n^J(dz)
=O(n^{-1/2})+O(k_n^{-1/2})=o(1).
\]
Defining $(\widetilde P_n,\widetilde Q_n)$ exactly as in Step~6 and applying Step~7 gives
\[
\bigl|\delta_{Q_n\|P_n}(\varepsilon)-\delta_{Q_n^J\|P_n^J}(\varepsilon)\bigr|=o(1).
\]
At this point we do \emph{not} invoke part~\textup{(ii)}, because the quantitative hypothesis
\eqref{eq:quantitative-critical-regime} need not hold when $k_n\to\infty$ and $k_n/n\to 0$. Instead, return to the
quotient alphabet construction from Section~\ref{sec:proof-ii}. Let $\widetilde H_n$ and $\widetilde H_n^{\mathrm{alt}}$
denote the centered quotient histograms on the finite quotient alphabet $\mathcal Y_J$, and let
\[
L:\mathbb Z^{\mathcal Y_J}\to M^\perp,
\qquad
L(h):=\sum_{w\in\mathcal Y_J} h(w)\,w.
\]
In the present disjoint case the pushed-forward array still satisfies the qualitative sparse-error regime of
Part~II, Definition~5.1 on $\mathcal Y_J$, with composition parameter $\pi_n\to 0$. Hence the qualitative part of
Part~II, Theorem~5.8 yields
\[
\TV\bigl(\mathcal L(\widetilde H_n),\widetilde P_\infty\bigr)
+
\TV\bigl(\mathcal L(\widetilde H_n^{\mathrm{alt}}),\widetilde Q_\infty\bigr)
\longrightarrow 0,
\]
where $\widetilde P_\infty,\widetilde Q_\infty$ are the quotient Poisson-shift limits. Equivalently, since
$\widetilde H_n$ takes values in the discrete countable space $\mathbb Z^{\mathcal Y_J}$, one may view this as weak
convergence upgraded to total variation by pointwise convergence of the probability mass functions and Scheff\'e's
lemma. Applying the measurable map $L$ and using Lemma~\ref{lem:tv-contraction-main} gives
\[
\TV(P_n^J,P_\infty^J)+\TV(Q_n^J,Q_\infty^J)\longrightarrow 0.
\]
Therefore
\[
\delta_{Q_n\|P_n}(\varepsilon)
\to
\delta_{Q_\infty^J\|P_\infty^J}(\varepsilon)
=
\delta_{Q_\infty\|P_\infty}(\varepsilon).
\]

If instead $D_0\cap D_1\neq\varnothing$, then Lemma~\ref{lem:new-geometry} gives $\Delta=0$, so
$\delta_{Q_\infty\|P_\infty}(\varepsilon)\equiv 0$. Repeat the coupling argument from Step~10. The common
$0$-group dominant block has size $n-k_n-1\asymp n$, while the common $1$-group dominant block has size
$k_n\asymp k_n$. Smoothing inside these two blocks yields
\[
\TV(P_n,Q_n)\le \frac{C_0}{\sqrt n}+\frac{C_1}{\sqrt{k_n}}+O(n^{-1})=o(1).
\]
Hence
\[
0\le \delta_{Q_n\|P_n}(\varepsilon)\le \TV(P_n,Q_n)\to 0=\delta_{Q_\infty\|P_\infty}(\varepsilon).
\]

The case $\pi=1$ is identical after exchanging the labels of the two groups. This proves
Theorem~\ref{thm:general-k-dominant-LK}(iii) in all boundary regimes covered by the statement.
\end{proof}

\begin{proposition}[Strong-boundary obstruction]
\label{prop:strong-boundary-obstruction}
Without the additional strong-boundary regularity in Theorem~\ref{thm:general-k-dominant-LK}\textup{(iii)(c)},
the projected jump limit need not determine the full \privacycurve.
In particular, there exists an array in the regime of Definition~\ref{def:general-k-dominant} with $\pi=0$ and
$k_n\equiv 0$ such that
\[
P_\infty=Q_\infty,
\qquad
\delta_{Q_\infty\|P_\infty}(\varepsilon)\equiv 0,
\]
but
\[
\delta_{Q_n\|P_n}(\varepsilon)\ge \frac12
\qquad\text{for every }n\text{ and every }\varepsilon\ge 0.
\]
\end{proposition}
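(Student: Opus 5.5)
The plan is to build an explicit, essentially deterministic counterexample. The bounded minority block ($k_n\equiv 0$, so under $T_{n,1}$ there is exactly one $1$-user) should carry a two-point dominant set that overlaps the majority dominant set without being contained in it. This violates both alternatives in Theorem~\ref{thm:general-k-dominant-LK}(iii)(c). Overlap forces $\Delta=0$ by Lemma~\ref{lem:new-geometry}. The internal direction of the minority block lies in $M$, so it is killed by $\Pi_J$ and scaled away by $n^{-1/2}$ in the $G$-coordinate. It nevertheless remains exactly visible to a histogram test.

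\emph{Construction.} Take $\mathcal Y=\{a,c\}$ and $W_0^{(n)}:=\delta_a$ for every $n$, so that $D_0=\{a\}$, $p_0(a)=1$, and $\alpha_0(c)=0$. Take $W_1^{(n)}:=\tfrac12\delta_a+\tfrac12\delta_c$ for every $n$, so that $D_1=\{a,c\}$ and $p_1$ is uniform, with no rare outputs. Finally set $k_n\equiv0$, so $\pi_n=\pi=0$. All conditions of Definition~\ref{def:general-k-dominant} hold exactly, and even \eqref{eq:quantitative-critical-regime} holds with $C_0=0$. Moreover $D_1\not\subseteq D_0$ and $|D_1|=2$, so neither alternative of (iii)(c) applies.

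\emph{Limit side.} We have $M_0=\{0\}$ and $M=M_1=\mathrm{span}\{e_a-e_c\}$. Since $D_0\cap D_1=\{a\}\neq\varnothing$, the hypergraph $\mathsf H$ is connected and $\Delta=0$. Also $\Sigma=(1-\pi)\Gamma_0+\pi\Gamma_1=0$ because $\Gamma_0=0$ and $\pi=0$. Every jump vector vanishes: $\nu=0$ since $\alpha_0\equiv0$ and the weight $\pi\alpha_1$ is zero, so $J\equiv0$. Hence $P_\infty=Q_\infty=\delta_{(0,0)}$ and $\delta_{Q_\infty\|P_\infty}\equiv0$. One can also check this directly, as a consistency check with part~(i). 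Under $T_{n,0}$ the histogram is deterministically $ne_a$, so $\widehat H_n=0$ and $S_n=(0,0)$. Under $T_{n,1}$ we have $\widehat H_n\in\{0,\,e_c-e_a\}$, each value with probability $\tfrac12$. Since $e_c-e_a\in M$, this gives $S_n\in\{(0,0),\,(n^{-1/2}\Pi_G(e_c-e_a),0)\}$, and both laws converge to $\delta_{(0,0)}$.

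\emph{Finite-$n$ side.} Take the measurable set $A:=\{(g,z):g\neq0\}\subseteq M\times M^\perp$. Then $P_n(A)=0$, while $Q_n(A)=\Pbb(\text{the single }1\text{-user outputs }c)=\tfrac12$, because $\Pi_G(e_c-e_a)=e_c-e_a\neq0$. So for every $n$ and every $\varepsilon\ge0$,
\[
\delta_{Q_n\|P_n}(\varepsilon)\ge Q_n(A)-e^\varepsilon P_n(A)=\tfrac12 .
\]
There is no real obstacle here. The only points requiring care are, first, checking that the example genuinely sits in Definition~\ref{def:general-k-dominant}: degenerate $\Sigma$, empty rare sets, and $k_n=0\in\{0,\dots,n-1\}$ are all allowed. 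Second, the privacy curve is computed for the laws of the statistic $S_n$ rather than the raw histogram. This is harmless because $S_n$ is an injective image of the histogram in this example. I would close with a remark that the mechanism is exactly the missing smoothing: the minority block $D_1$ has no native multinomial mass of size $\to\infty$ to absorb the edge shift, so Lemma~\ref{lem:new-edge-shift} has nothing to act on.
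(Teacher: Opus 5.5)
Your proposal is correct and follows the same route as the paper. Both use an explicit overlapping array with no rare mass and $k_n\equiv 0$, where $\Delta=0$ and $\nu=0$ force $P_\infty=Q_\infty$, while the single minority user's output outside $D_0$ is detected exactly by a set in $M\times M^\perp$. The only difference is the instance: you take $D_0=\{a\}\subset D_1=\{a,c\}$, so even $\Sigma=0$ and $P_n$ is a point mass, whereas the paper takes $D_0=\{a,b\}$ and $D_1=\{b,c\}$; this does not change the mechanism.
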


\begin{proof}
Take $\mathcal Y=\{a,b,c\}$ and define
\[
D_0=\{a,b\},
\qquad
D_1=\{b,c\},
\]
with
\[
W_0^{(n)}(a)=W_0^{(n)}(b)=\frac12,
\qquad
W_0^{(n)}(c)=0,
\]
\[
W_1^{(n)}(b)=W_1^{(n)}(c)=\frac12,
\qquad
W_1^{(n)}(a)=0.
\]
Then the hypotheses of Definition~\ref{def:general-k-dominant} hold with no rare mass, $\pi=0$, and $k_n\equiv 0$.
Because $D_0\cap D_1=\{b\}\neq\varnothing$, Lemma~\ref{lem:new-geometry} gives $\Delta=0$; moreover $\nu=0$ since
there are no rare outputs. Hence part~\textup{(i)} gives
\[
P_\infty=Q_\infty=\mathcal L(G,0),
\]
so the limiting \privacycurve\ is identically zero.

Here the overlap makes $M$ equal to the zero-sum subspace of $\R^{\{a,b,c\}}$. Under $T_{n,0}$, all outputs
lie in $\{a,b\}$, so the centered histogram already belongs to $M$ and has $c$-coordinate $0$; therefore
\[
P_n\bigl(\{(g,z):g(c)>0\}\bigr)=0.
\]
Under $T_{n,1}$ there is exactly one $1$-user. If that user outputs $c$ (probability $1/2$), then the centered
histogram still has total sum $0$ and $c$-coordinate $1$, so
\[
G_n(c)=\frac{1}{\sqrt n}>0.
\]
If the user outputs $b$ (also probability $1/2$), then the centered histogram has $c$-coordinate $0$, so
\[
G_n(c)=0.
\]
Hence
\[
Q_n\bigl(\{(g,z):g(c)>0\}\bigr)=\frac12.
\]
By the definition of the \privacycurve, for every $\varepsilon\ge 0$,
\[
\delta_{Q_n\|P_n}(\varepsilon)\ge Q_n(\{N_c>0\})-e^{\varepsilon}P_n(\{N_c>0\})=\frac12-0=\frac12.
\]
This proves the proposition.
\end{proof}

\section{The sharpness of \texorpdfstring{$O(n^{-1/2})$}{order n to minus one-half} in the hybrid experiment}
\label{sec:sharpness}

The first result of this section isolates the exact mechanism behind the $n^{-1/2}$ rate in the full hybrid
experiment. The second records the compatibility condition under which the sharper $n^{-1}$ rate returns.

\begin{theorem}[A sharp binomial-plus-Bernoulli comparison]
\label{thm:bin-ber-exact-tv}
Let $S\sim \Bin(m,p)$ with $p\in(0,1)$ and $q\in[0,1]$, and define
\[
X:=S+\mathrm{Ber}(q),
\qquad
Y:=S+\mathrm{Ber}(p),
\]
where the Bernoulli variables are independent of $S$. Then
\[
\TV\bigl(\mathcal L(X),\mathcal L(Y)\bigr)=|q-p|\,\max_k \Pbb(S=k).
\]
Consequently,
\[
\TV\bigl(\mathcal L(X),\mathcal L(Y)\bigr)
=
\frac{|q-p|}{\sqrt{2\pi mp(1-p)}}+O(m^{-1})
\qquad (m\to\infty).
\]
\end{theorem}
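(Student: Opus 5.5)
Write $b_k:=\Pbb(S=k)$ for $k\in\{0,\dots,m\}$, with $b_{-1}=b_{m+1}=0$. The plan is to compute both probability mass functions exactly, take their difference, and sum it. Conditioning on the Bernoulli variable gives
\[
\Pbb(X=k)=(1-q)b_k+q\,b_{k-1},
\qquad
\Pbb(Y=k)=(1-p)b_k+p\,b_{k-1},
\]
so that
\[
\Pbb(X=k)-\Pbb(Y=k)=(q-p)\,(b_{k-1}-b_k).
\]
Using $\TV=\tfrac12\sum_k|\Pbb(X=k)-\Pbb(Y=k)|$ (with $k$ running over $0,\dots,m+1$), this yields
\[
\TV\bigl(\mathcal L(X),\mathcal L(Y)\bigr)=\frac{|q-p|}{2}\sum_{k=0}^{m+1}|b_k-b_{k-1}|.
\]
The exact identity therefore reduces to evaluating the total variation of the sequence $(b_k)$, padded by zeros at both ends.

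The key structural input is unimodality of the binomial pmf. The ratio
\[
\frac{b_{k}}{b_{k-1}}=\frac{(m-k+1)p}{k(1-p)}
\]
is decreasing in $k$, so $b_k$ is nondecreasing up to a mode $k^\ast$ and nonincreasing afterwards. A tie of two modes is allowed and does no harm. Since the padded sequence starts and ends at $0$, the increments telescope on each monotone stretch. The sum of absolute increments is the rise from $0$ to $b_{k^\ast}$ plus the fall from $b_{k^\ast}$ back to $0$, that is,
\[
\sum_{k}|b_k-b_{k-1}|=2\max_k b_k.
\]
This gives $\TV=|q-p|\max_k\Pbb(S=k)$ exactly. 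The identity also covers $q=p$ trivially, and it holds for every $q\in[0,1]$ because the computation never used a constraint on $q$.

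For the asymptotic statement I need the local limit estimate
\[
\max_k b_k=\frac{1}{\sqrt{2\pi mp(1-p)}}+O(m^{-1}).
\]
This is the step I expect to be the main obstacle, because the claimed error is $O(m^{-1})$ rather than $o(m^{-1/2})$. I would apply Stirling's formula with explicit remainder, $\log j!=j\log j-j+\tfrac12\log(2\pi j)+O(1/j)$, at the mode $k^\ast=\lfloor (m+1)p\rfloor$, writing $k^\ast=mp+\theta$ with $|\theta|\le1$. Expanding $\log b_{k^\ast}$ gives
\[
\log b_{k^\ast}=-\tfrac12\log\bigl(2\pi mp(1-p)\bigr)+O(1/m),
\]
where the $O(1/m)$ collects the Stirling remainders, the $\theta$ corrections in the $\tfrac12\log$ terms, and the quadratic entropy term $\theta^2/(2mp(1-p))$. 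The linear entropy term vanishes at the center. Exponentiating then gives $b_{k^\ast}=(2\pi mp(1-p))^{-1/2}(1+O(1/m))$, which is in fact $O(m^{-3/2})$-accurate and so comfortably within the claimed $O(m^{-1})$. Alternatively, one may cite the standard local CLT with Edgeworth correction, noting that the odd-order correction is $O(m^{-1})$ relative to the leading term. Multiplying by $|q-p|\le1$ then gives the stated expansion, with the constant in $O(m^{-1})$ depending only on $p$.
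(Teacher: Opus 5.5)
Your proposal is correct and follows essentially the same route as the paper: the identity $\Pbb(X=k)-\Pbb(Y=k)=(q-p)(b_{k-1}-b_k)$, unimodality of the binomial pmf to get $\sum_k|b_k-b_{k-1}|=2\max_k b_k$, and a Stirling-type local estimate at the mode. The only difference is that the paper cites Petrov for that mode estimate, whereas your explicit expansion at $k^\ast=\lfloor (m+1)p\rfloor$ is self-contained and in fact gives the sharper error $O(m^{-3/2})$.
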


\begin{proof}
Let $s_k:=\Pbb(S=k)$, with the convention $s_k=0$ outside $\{0,\dots,m\}$. Then
\[
\Pbb(X=k)=(1-q)s_k+qs_{k-1},
\qquad
\Pbb(Y=k)=(1-p)s_k+ps_{k-1},
\]
so
\[
\Pbb(X=k)-\Pbb(Y=k)=(q-p)(s_{k-1}-s_k).
\]
Therefore
\[
2\TV\bigl(\mathcal L(X),\mathcal L(Y)\bigr)
=|q-p|\sum_k |s_{k-1}-s_k|.
\]
The sequence $(s_k)_k$ is binomial and hence unimodal. For any nonnegative unimodal sequence with finite support
and boundary values $0$ at both ends,
\[
\sum_k |s_{k-1}-s_k|=2\max_k s_k.
\]
Indeed, if $m_0$ is a mode, then the sequence is increasing up to $m_0$ and decreasing afterwards, so the total
variation of the sequence equals the rise from $0$ to $s_{m_0}$ plus the fall back to $0$. This proves the exact
formula.

For the asymptotic expansion, let $k_m$ be a mode of the binomial distribution. Stirling's formula yields the local
central limit estimate
\[
\Pbb(S=k_m)=\frac{1}{\sqrt{2\pi mp(1-p)}}+O(m^{-1}),
\]
see, for example, Petrov~\cite[Chapter~VII]{Pet75}. Substituting this into the exact formula proves the second
display.
\end{proof}

\begin{theorem}[Sharpness of the $n^{-1/2}$ hybrid rate]
\label{thm:sharpness-hybrid}
There exist two-dominant disjoint shuffle arrays satisfying the assumptions of Part~II, Proposition~5.4 and the
interior-composition hypothesis $\pi_n\to\pi\in(0,1)$ for which the auxiliary $O(n^{-1/2})$ comparison bound used in
Appendix~B of Part~II and in the proof of Theorem~\ref{thm:general-k-dominant-LK}(iii) is sharp. More precisely, one can construct such an array and a
corresponding auxiliary common-factor experiment $(\widetilde P_n,\widetilde Q_n)$ such that
\[
\TV(P_n,\widetilde P_n)+\TV(Q_n,\widetilde Q_n)\asymp n^{-1/2}.
\]
Consequently, the auxiliary $O(n^{-1/2})$ comparison step underlying Theorem~\ref{thm:general-k-dominant-LK}(iii)
cannot in general be improved to $o(n^{-1/2})$.
\end{theorem}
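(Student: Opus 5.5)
Our plan is to build an explicit two-dominant disjoint array in which the only thing separating $P_n$ from $\widetilde P_n$ is one binomial block shifted by a Bernoulli with the ``wrong'' parameter, and then to read off the exact rate from Theorem~\ref{thm:bin-ber-exact-tv}.

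\textbf{Construction.} Take $\mathcal Y=\{a_0,b_0,a_1,b_1\}$ with $D_0=\{a_0,b_0\}$ and $D_1=\{a_1,b_1\}$. Set
\[
W_0^{(n)}(a_0)=p(1-\lambda/n),\qquad W_0^{(n)}(b_0)=(1-p)(1-\lambda/n),\qquad W_0^{(n)}(a_1)=\lambda/n,
\]
so the $0$-users leak into $D_1$ only through the single symbol $a_1$. Let $W_1^{(n)}$ put mass $p_1$ on $a_1$ and $1-p_1$ on $b_1$, with no rare mass. Fix $p,p_1\in(0,1)$, $p_1\neq 1$, $\lambda>0$, and $k_n=\lfloor n/2\rfloor$, so that \eqref{eq:quantitative-critical-regime} holds and $\pi=1/2$. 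The leaked symbol $a_1$ has cross law $\rho_{1,n}=\delta_{a_1}$, whereas the native law is $\vartheta_{1,n}=(p_1,1-p_1)$. Conditionally on the refined rare configuration, the $D_1$-block is therefore
\[
S+A\,e_{a_1}\quad\text{versus the surrogate}\quad S+\textstyle\sum_{r\le A}\eta_r^\circ,
\qquad S\sim\Mult(m_1^\star,\vartheta_{1,n}).
\]
The $D_0$-block needs no smoothing, since $A_{0,n}=0$. Along the direction $g_1$ the $D_1$-block is the binomial count of $a_1$.

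\textbf{Lower bound.} Conditioning on $\{A_{1,n}=1\}$, which has probability tending to $\lambda\pi e^{-\lambda\pi}>0$, the conditional TV between the true and surrogate $G_n$-laws equals (the map to the $a_1$-count is a bijection given the total) the TV between
\[
\Bin(m,p_1)+\mathrm{Ber}(1)\quad\text{and}\quad \Bin(m,p_1)+\mathrm{Ber}(p_1),
\]
with $m\asymp n$. By Theorem~\ref{thm:bin-ber-exact-tv} this is $(1-p_1)/\sqrt{2\pi m p_1(1-p_1)}+O(n^{-1})$.

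We then need the unconditional $\TV(P_n,\widetilde P_n)$ to be at least the average of these conditional distances. This is the delicate point. Here $\widetilde P_n$ is defined by conditioning only on $Z_n$, while the refined sigma-field is finer. The aim is to show the two coincide in this example: $Z_n$ determines the count $N(D_1)$, but not $A_{1,n}$ itself. So we would compute the law of $G_n$ given $Z_n$ exactly as a mixture over $A$ of $\Bin(m_1^\star,p_1)+A$ point masses, against $\Bin(T_1,p_1)$. We then lower bound the TV of the mixture by restricting to the test set $\{$the $a_1$-count exceeds its mode$\}$, or alternatively use the monotone likelihood ratio structure. The true law is stochastically larger in a one-sided way, and the signed differences $(1-p_1)(s_{k-1}-s_k)\,\Pbb(A=1\mid Z)$ all have the same sign for $k$ above the mode. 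Summing over $k>$ mode therefore gives a lower bound $c\,\Pbb(A\ge1\mid Z)\,n^{-1/2}$, and after integrating over $Z$ this is $\gtrsim n^{-1/2}$. The same argument applies under $Q_n$.

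\textbf{Upper bound and conclusion.} The matching upper bound $O(n^{-1/2})$ is simply Step~5 of the interior proof. Together with the lower bound this gives $\TV(P_n,\widetilde P_n)+\TV(Q_n,\widetilde Q_n)\asymp n^{-1/2}$, and hence no $o(n^{-1/2})$ comparison bound can hold in general.

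The main obstacle is the monotonicity and same-sign step for the mixture over $A$ conditional on $Z_n$. We would first try to choose $\lambda$ small enough that the events $\{A\ge2\}$ contribute only $O(\lambda^2)$ relative mass, absorbing them as an error term.
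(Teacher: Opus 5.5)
Your construction is the paper's with the roles of the groups swapped. The paper lets the $1$-users leak into $D_0$ with split law $(q,1-q)\neq(p,1-p)$ while $W_0$ has no rare mass; you let the $0$-users leak into $D_1$ only at $a_1$, which is the degenerate case $q=1$ (allowed in Theorem~\ref{thm:bin-ber-exact-tv}). The rest of the argument also matches the paper: conditioning on exactly one cross message, the binomial-plus-Bernoulli formula, and the upper bound from Step~5.

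What needs correcting is your ``delicate point'', because it rests on a false claim. In your array $Z_n$ \emph{does} determine $A_{1,n}$. Since $M^\perp=\mathrm{span}\{e_{a_0}+e_{b_0},e_{a_1}+e_{b_1}\}$, $Z_n$ determines $N(D_1)$. Moreover, $1$-users never leave $D_1$, $0$-users reach $D_1$ only at $a_1$, and the group sizes are deterministic. Hence $A_{1,n}=N(D_1)-k_n$ under $T_{n,k_n}$, and $A_{1,n}=N(D_1)-k_n-1$ under $T_{n,k_n+1}$.

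So your original aim was right: conditioning on $Z_n$ and conditioning on the refined $\sigma$-field give the same conditional law of $G_n$, since that law depends on the rare configuration only through the counts. There is no mixture over $A$. On $\{A_{1,n}=1\}$ the two conditional laws are exactly those of $\Bin(k_n,p_1)+1$ and $\Bin(k_n,p_1)+\mathrm{Ber}(p_1)$ in the $a_1$-coordinate, and the $D_0$-blocks are identical. This is precisely the paper's observation that $E_n$ is measurable with respect to the projected statistic.

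Because $P_n$ and $\widetilde P_n$ share the $Z$-marginal $P_n^J$, you get directly
$\TV(P_n,\widetilde P_n)=\int \TV(P^{G\mid Z}_{n,z},R_{n,z})\,P_n^J(dz)\ge P_n(A_{1,n}=1)\,(1-p_1)\max_k\Pbb(\Bin(k_n,p_1)=k)\asymp n^{-1/2}$.
No smallness condition on $\lambda$ is needed.

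Two side remarks. Even if a genuine mixture were present, your upper-set test would close the argument without small $\lambda$: $\Bin(T-A,p_1)+A$ stochastically dominates $\Bin(T,p_1)$ for every $A$, so the $A\ge 2$ terms only increase the difference. Also, $P_n(A_{1,n}=1)\to(1-\pi)\lambda e^{-(1-\pi)\lambda}$; this agrees with what you wrote only because $\pi=1/2$.
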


\begin{proof}
Take $\mathcal Y=\{a,b,c,d\}$, $D_0=\{a,b\}$, $D_1=\{c,d\}$, fix $p,q\in(0,1)$ with $q\neq p$, and fix
$\lambda>0$. Define the local randomizers by
\[
W_0^{(n)}(a)=p,\qquad W_0^{(n)}(b)=1-p,
\qquad W_0^{(n)}(c)=W_0^{(n)}(d)=0,
\]
\[
W_1^{(n)}(c)=p\Bigl(1-\frac{\lambda}{n}\Bigr),
\qquad
W_1^{(n)}(d)=(1-p)\Bigl(1-\frac{\lambda}{n}\Bigr),
\]
\[
W_1^{(n)}(a)=\frac{\lambda q}{n},
\qquad
W_1^{(n)}(b)=\frac{\lambda(1-q)}{n}.
\]
This is a legitimate two-dominant disjoint array. The only rare event is that a $1$-user lands in $D_0$, and the
conditional split law inside $D_0$ is $(q,1-q)$ rather than the native $0$-split $(p,1-p)$.

Let $A_n$ be the number of $1$-users landing in $D_0$. Because there are no other rare transitions in this
construction, the event
\[
E_n:=\{A_n=1\}
\]
is exactly the event that the projected jump block records a single cross message into $D_0$. Under either
hypothesis and under the auxiliary common-factor experiment, $A_n\weakto \Poi(\pi\lambda)$, so $E_n$ has
probability bounded away from $0$ and $1$; moreover $E_n$ is measurable with respect to the projected statistic and
therefore has the same probability under $P_n$ and $\widetilde P_n$. On $E_n$, the $D_0$-block of the true
experiment has the form
\[
S_n+\mathrm{Ber}(q),
\qquad S_n\sim \Bin(m_n,p),
\]
with $m_n\sim (1-\pi)n$, while the corresponding $D_0$-block in the auxiliary experiment has the form
\[
S_n+\mathrm{Ber}(p).
\]
All remaining coordinates agree under the two conditional laws on $E_n$. Therefore Theorem~\ref{thm:bin-ber-exact-tv}
yields
\[
\TV\bigl(P_n(\cdot\mid E_n),\widetilde P_n(\cdot\mid E_n)\bigr)
=|q-p|\max_k \Pbb(\Bin(m_n,p)=k)
=\frac{c_0}{\sqrt n}+O(n^{-1})
\]
for some $c_0>0$. Since $P_n(E_n)=\widetilde P_n(E_n)\to \eta\in(0,1)$, the elementary lower bound
\[
\TV(\mu,\nu)\ge \mu(E)\,\TV(\mu(\cdot\mid E),\nu(\cdot\mid E))-
|\mu(E)-\nu(E)|
\]
gives
\[
\TV(P_n,\widetilde P_n)\ge \frac{c}{\sqrt n}
\]
for all large $n$, with $c>0$. The upper bound $\TV(P_n,\widetilde P_n)=O(n^{-1/2})$ is exactly the content of
Part~II, Appendix~B; equivalently, it is the comparison estimate proved in Steps~5--7 of the proof of
Theorem~\ref{thm:general-k-dominant-LK}(iii). The same argument applies to the neighboring
law $Q_n$, proving the displayed total-variation sharpness. Since part~(ii) has projected Le Cam error $O(n^{-1})$,
the $n^{-1/2}$ bottleneck in the full hybrid comparison comes from the passage from $(P_n,Q_n)$ to
$(\widetilde P_n,\widetilde Q_n)$, and the preceding lower bound shows that this step is sharp in general.
\end{proof}

\begin{theorem}[Compatibility restores the $n^{-1}$ rate]
\label{thm:compatibility-rate}
Assume the disjoint interior setting of Theorem~\ref{thm:general-k-dominant-LK}(iii). For $b\in\{0,1\}$ and
all $n$ such that $W_{1-b}^{(n)}(D_b)>0$, let $\rho_{b,n}$ denote the conditional cross split law on $D_b$
induced by outputs from the opposite group:
\[
\rho_{0,n}(y):=\frac{W_1^{(n)}(y)}{W_1^{(n)}(D_0)},\quad y\in D_0,
\qquad
\rho_{1,n}(y):=\frac{W_0^{(n)}(y)}{W_0^{(n)}(D_1)},\quad y\in D_1.
\]
When $W_{1-b}^{(n)}(D_b)=0$, the corresponding cross block is absent and no compatibility condition is imposed.
Suppose, in addition to \eqref{eq:quantitative-critical-regime}, that
\begin{equation}
\TV(\rho_{0,n},\vartheta_{0,n})+\TV(\rho_{1,n},\vartheta_{1,n})\le \frac{C_{\mathrm{comp}}}{n}
\label{eq:compatibility-condition}
\end{equation}
for all sufficiently large $n$ for which both denominators are positive. Then the auxiliary common-factor experiment
of Section~\ref{sec:proof-iii}
satisfies
\[
\TV(P_n,\widetilde P_n)+\TV(Q_n,\widetilde Q_n)\le \frac{C}{n}
\]
for all sufficiently large $n$, and therefore
\[
\bigl|\delta_{Q_n\|P_n}(\varepsilon)-\delta_{Q_n^J\|P_n^J}(\varepsilon)\bigr|
\le \frac{C(1+e^\varepsilon)}{n}.
\]
\end{theorem}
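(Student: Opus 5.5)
Rather than rerunning the telescoping argument of Steps~3--5 of Section~\ref{sec:proof-iii}, I would go back to the conditional representation and compare distributions directly. The comparison should not pay the single-edge-shift cost of Lemma~\ref{lem:new-edge-shift}, which is $O(n^{-1/2})$; instead it should pay only the mismatch between the cross law $\rho_{b,n}$ and the native law $\vartheta_{b,n}$.

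Fix a hypothesis and condition on the refined rare sigma-field $\mathcal F_n$ from Step~3. Conditionally, the $D_b$-block is $S_{b,n}+\sum_{r=1}^{A_{b,n}}\eta_{b,r}$, where the $\eta_{b,r}$ are i.i.d.\ $\Cat(\rho_{b,n})$ and independent of $S_{b,n}$. The surrogate block is $S_{b,n}+\sum_{r=1}^{A_{b,n}}\eta^\circ_{b,r}$ with $\eta^\circ_{b,r}\sim\Cat(\vartheta_{b,n})$.

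\begin{enumerate}
\item \emph{Per-block coupling bound.} Drop the common independent summand $S_{b,n}$: this can only decrease total variation (Lemma~\ref{lem:tv-contraction-main} applied to the addition map after a common coupling). Then apply Lemma~\ref{lem:tv-tensorization-main} to the $A_{b,n}$ i.i.d.\ summands. This gives
\[
\TV\bigl(\mathcal L(Y^{(0)}_{b,n}\mid\mathcal F_n),\mathcal L(Y^{(A_{b,n})}_{b,n}\mid\mathcal F_n)\bigr)\le A_{b,n}\,\TV(\rho_{b,n},\vartheta_{b,n}).
\]
When $W^{(n)}_{1-b}(D_b)=0$ we have $A_{b,n}=0$ almost surely and the bound is trivially zero, so the degenerate case needs no compatibility assumption.

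\item \emph{Combine blocks and push forward.} The two blocks are conditionally independent, so Lemma~\ref{lem:tv-tensorization-main} sums their bounds. The map $\Psi_{n,z}$ is measurable, so contraction carries the bound over, giving
\[
\TV\bigl(\mathcal L(G_n\mid Z_n,\mathcal F_n),R_{n,Z_n}\bigr)\le \sum_b A_{b,n}\TV(\rho_{b,n},\vartheta_{b,n}).
\]
This uses exactly the identification from Steps~2--3 that the surrogate with native draws has conditional law $R_{n,z}$. That identification rests on Lemma~\ref{lem:new-measurable-totals}: $T_{b,n}(z)$ is the total block size, so $S_{b,n}$ plus $A_{b,n}$ native draws is $\Mult(T_{b,n}(z),\vartheta_{b,n})$.

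\item \emph{Average.} By \eqref{eq:quantitative-critical-regime}, $\E[A_{b,n}]\le \E[R_n]=O(1)$. Together with \eqref{eq:compatibility-condition}, integrating over $\mathcal F_n$ and then over $Z_n$ gives $O(n^{-1})$ for each of the two integrals. Conditioning further on $\mathcal F_n$ and then averaging only loses through convexity of TV, so it is harmless.

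\item \emph{Conclude.} Apply Step~7 verbatim, i.e.\ disintegrate $P_n,\widetilde P_n$ along $Z_n$ and use Lemma~\ref{lem:privacy-stability-main} together with the Blackwell equivalence of Step~6. This gives the stated $C(1+e^\varepsilon)/n$ bound.
\end{enumerate}

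The main obstacle is the bookkeeping in Step~3. One must check that, conditionally on $\mathcal F_n$, the cross messages landing in $D_b$ are genuinely i.i.d.\ with law exactly $\rho_{b,n}$. They are: a $1$-user conditioned to land in $D_0$ has law $W_1^{(n)}(\cdot)/W_1^{(n)}(D_0)$. One must also check that $\mathcal F_n$ reveals nothing about the within-$D_b$ positions beyond this.

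A second point is that under $T_{n,k_n+1}$ the group sizes shift by one. This does not affect the argument, since both $\rho_{b,n}$ and $\vartheta_{b,n}$ are group-size free.

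No lower bound on $m^\star_{b,n}-L_{b,n}$ is needed at all, which is why the $n^{-1/2}$ rate disappears.
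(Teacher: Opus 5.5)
You follow the same route as the paper. The paper also conditions on $\mathcal F_n$, couples each cross message optimally with a native draw to get the conditional bound $\sum_b A_{b,n}\TV(\rho_{b,n},\vartheta_{b,n})$, averages using $\E A_{b,n}=O(1)$, and reuses Steps~6--7. Your ``drop $S_{b,n}$, then tensorize over the $A_{b,n}$ summands'' is the same coupling estimate.

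\textbf{The gap.} The pushforward in your item~2 fails. The defect is inherited from the definition of $\Psi_{n,z}$ in Step~2 of Section~\ref{sec:proof-iii}. Contracting through $\Psi_{n,z}$ presupposes $G_n=\Psi_{n,Z_n}(Y_{0,n},Y_{1,n})$, where $Y_{b,n}$ is the true count vector on $D_b$. Since $\Pi_Ge_y=0$ for $y\notin D_0\cup D_1$, one actually has $G_n=n^{-1/2}\Pi_G(Y_{0,n}+Y_{1,n}-m_{0,n}\mu_0-m_{1,n}\mu_1)$. This is centered at a fixed vector, whereas $\Psi_{n,z}$ subtracts $T_{b,n}(z)\vartheta_{b,n}$. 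Thus $G_n=\Psi_{n,Z_n}(Y_{0,n},Y_{1,n})+c_n(Z_n)$ with
\[
c_n(z):=n^{-1/2}\,\Pi_G\sum_{b\in\{0,1\}}\bigl(T_{b,n}(z)\vartheta_{b,n}-m_{b,n}\mu_b\bigr).
\]
Two dominant count vectors with the same block totals differ by an element of $\Lambda:=\{x\in\mathbb Z^{D_0\cup D_1}:\sum_{y\in D_0}x(y)=\sum_{y\in D_1}x(y)=0\}\subset M$. Hence $P^{G\mid Z}_{n,z}$ and $R_{n,z}$ are mutually singular whenever $c_n(z)\notin n^{-1/2}\Lambda$, which is the generic case.

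\textbf{A concrete counterexample.} Take $D_0=\{a,b\}$, $D_1=\{c,d\}$, $W_0^{(n)}(a)=\tfrac13=1-W_0^{(n)}(b)$, $\pi\in(0,1)$ and $k_n=\lfloor \pi n\rfloor$. Let each $1$-user hit $a,b$ with probabilities $\lambda/(3n),\,2\lambda/(3n)$ and otherwise split $(p_1,1-p_1)$ over $D_1$. All hypotheses hold, with $\rho_{0,n}=\vartheta_{0,n}$ exactly and no cross block in $D_1$. Under $T_{n,k_n}$ we have $\{A_{0,n}=1\}=\{T_{0,n}(Z_n)=m_{0,n}+1\}$. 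On this event $G_n(a)-G_n(b)\in n^{-1/2}(\mathbb Z+m_{0,n}/3)$, while under $R_{n,z}$ the same coordinate lies in $n^{-1/2}(\mathbb Z+(m_{0,n}+1)/3)$. Therefore $\TV(P_n,\widetilde P_n)\ge\Pbb(A_{0,n}=1)\to\pi\lambda e^{-\pi\lambda}>0$. So your conditional inequality is false, and so is the theorem's first display for the literal kernel of Section~\ref{sec:proof-iii}.

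\textbf{The repair.} Give the surrogate the statistic's own centering. Let $R'_{n,z}$ be the law of $n^{-1/2}\Pi_G\bigl(X^\circ_{0,n}+X^\circ_{1,n}-m_{0,n}\mu_0-m_{1,n}\mu_1\bigr)$ with independent $X^\circ_{b,n}\sim\Mult(T_{b,n}(z),\vartheta_{b,n})$. This is still a Markov kernel in $z$ and is common to both hypotheses: the centering is unchanged under $T_{n,k_n+1}$, and the maps $\tau_{b,n}$ of Lemma~\ref{lem:new-measurable-totals} do not depend on the hypothesis. So the Blackwell equivalence of Step~6 is untouched. Now $G_n$ is the image of $(Y_{0,n},Y_{1,n})$ under exactly the map defining $R'_{n,z}$. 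Your items~1--4 then go through verbatim with that map in place of $\Psi_{n,z}$, giving $\TV(P_n,\widetilde P'_n)+\TV(Q_n,\widetilde Q'_n)\le C/n$. The privacy-curve bound never involves the auxiliary experiment, so it follows as you say. The same recentering is needed in Steps~4--7 of Section~\ref{sec:proof-iii} and in Lemma~\ref{lem:new-conditional-smoothing}.
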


\begin{proof}
Return to the conditional representation from Section~\ref{sec:proof-iii}. Given the refined rare configuration,
there is nothing to couple in block $b$ when $W_{1-b}^{(n)}(D_b)=0$, because then $A_{b,n}=0$ almost surely.
Otherwise the cross messages inside $D_b$ are i.i.d. with law $\rho_{b,n}$, while the surrogate kernel replaces
them by i.i.d. native draws with law $\vartheta_{b,n}$. Couple each such pair optimally. Then the probability
that a single pair mismatches is exactly $\TV(\rho_{b,n},\vartheta_{b,n})\le C_{\mathrm{comp}}/n$. Since the number
of cross messages $A_{b,n}$ has bounded expectation uniformly in $n$, the expected number of mismatches in block
$b$ is $O(n^{-1})$. Whenever all pairs match, the true and surrogate dominant blocks coincide exactly. Therefore,
conditioning on the refined rare configuration and using the coupling bound for total variation,
\[
\TV\bigl(\mathcal L(G_n\mid Z_n,\mathcal F_n),R_{n,Z_n}\bigr)
\le C\sum_{b=0}^1 A_{b,n}\,\TV(\rho_{b,n},\vartheta_{b,n}).
\]
Taking expectations and using $\E A_{b,n}=O(1)$ yields an $O(n^{-1})$ bound on the integrated conditional TV error.
Repeating the argument from Steps~6--7 in Section~\ref{sec:proof-iii} gives
\[
\TV(P_n,\widetilde P_n)+\TV(Q_n,\widetilde Q_n)\le \frac{C}{n}.
\]
The \privacycurve\ estimate follows from Lemma~\ref{lem:privacy-stability-main}.
\end{proof}

\begin{remark}[Why Theorems~\ref{thm:sharpness-hybrid} and \ref{thm:compatibility-rate} are consistent]
In the sharpness construction of Theorem~\ref{thm:sharpness-hybrid}, the only nontrivial cross block is inside
$D_0$, where the cross split law is $\rho_{0,n}=(q,1-q)$ while the native dominant law is
$\vartheta_{0,n}=(p,1-p)$. Hence
\[
\TV(\rho_{0,n},\vartheta_{0,n})=|q-p|,
\]
independently of $n$, whereas the second cross block is absent. Thus the compatibility condition
\eqref{eq:compatibility-condition} fails at order one in that example. This is exactly why
Theorem~\ref{thm:sharpness-hybrid} attains the generic $n^{-1/2}$ rate, while
Theorem~\ref{thm:compatibility-rate} restores the sharper $n^{-1}$ rate only under an additional asymptotic
matching assumption on the cross and native split laws.
\end{remark}

\section{\texorpdfstring{Boundary Berry--Esseen: Poisson/Skellam $\to$ Gaussian as $c\downarrow 0$}{Boundary Berry--Esseen: Poisson/Skellam to Gaussian as c downarrow 0}}
\label{sec:boundary-BE}

Let
\[
\mathcal E_c^{\mathrm{Poi}}:=\bigl(\Poi(c^{-2}),\ 1+\Poi(c^{-2})\bigr)
\]
denote the critical Poisson-shift experiment from Part~II. The next theorem compares it to the Gaussian shift
experiment
\[
\mathcal E_c^{\mathrm G}:=\bigl(N(0,1),\ N(c,1)\bigr)
\]
at the boundary $c\downarrow 0$.

\begin{theorem}[Boundary Berry--Esseen for the Poisson-shift experiment]
\label{thm:boundary-BE-poisson}
There exist constants $c_0>0$ and $C<\infty$ such that for every $c\in(0,c_0]$,
\[
\LC\bigl(\mathcal E_c^{\mathrm{Poi}},\mathcal E_c^{\mathrm G}\bigr)\le Cc.
\]
More concretely, after the rescaling
\[
Z_c:=c\bigl(J_c-c^{-2}\bigr),\qquad J_c\sim\Poi(c^{-2}),
\]
the experiment $\mathcal E_c^{\mathrm{Poi}}$ is equivalent to the lattice experiment
\[
\widehat{\mathcal E}_c^{\mathrm{Poi}}:=\bigl(\mathcal L(Z_c),\ \mathcal L(Z_c+c)\bigr),
\]
and there exist Markov kernels $K_c$ and $L_c$ such that
\[
\TV\bigl(K_c\mathcal L(Z_c),N(0,1)\bigr)+\TV\bigl(K_c\mathcal L(Z_c+c),N(c,1)\bigr)\le Cc,
\]
\[
\TV\bigl(L_cN(0,1),\mathcal L(Z_c)\bigr)+\TV\bigl(L_cN(c,1),\mathcal L(Z_c+c)\bigr)\le Cc.
\]
\end{theorem}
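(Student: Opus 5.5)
We prove the bound by building explicit Markov kernels in both directions: uniform jittering to pass from the lattice to the line, and rounding to pass back.

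\textbf{Reduction to the lattice experiment.} The map $j\mapsto c(j-c^{-2})$ is a bijection from $\mathbb Z_{\ge 0}$ onto its image. It sends $\Poi(c^{-2})$ to $\mathcal L(Z_c)$ and $1+\Poi(c^{-2})$ to $\mathcal L(Z_c+c)$. Hence $\mathcal E_c^{\mathrm{Poi}}$ and $\widehat{\mathcal E}_c^{\mathrm{Poi}}$ are mutually randomizable by deterministic kernels, and their Le Cam distance is zero. By the triangle inequality for $\LC$ it therefore suffices to construct $K_c$ and $L_c$ with the two displayed TV bounds. Each deficiency is then at most $Cc$, by the same identity-kernel reasoning as in Lemma~\ref{lem:lecam-tv-bound-main}.

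\textbf{Forward kernel $K_c$ (uniform jittering).} Let $K_c$ map a lattice point $x\in c\mathbb Z-c^{-1}$ to $x+cU$ with $U\sim\mathrm{Unif}[-\tfrac12,\tfrac12]$. The law $K_c\mathcal L(Z_c)$ has a piecewise constant density equal to $c^{-1}\Pbb(J_c=j)$ on the cell around $c(j-c^{-2})$. To compare it with $N(0,1)$ in $L^1$, we use the local central limit theorem for the Poisson distribution with an Edgeworth correction:
\[
\sqrt{\lambda}\,\Pbb(J=j)=\varphi(x)\Bigl(1+\tfrac{1}{6\sqrt\lambda}(x^3-3x)\Bigr)+O(\lambda^{-1}(1+|x|)^{k}e^{-x^2/2})
\qquad\text{for }|x|\le\lambda^{1/6},
\]
where $\lambda=c^{-2}$ and $x=(j-\lambda)/\sqrt\lambda$. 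Outside this range, Chernoff tails are $O(e^{-c'\lambda^{1/3}})$. The first-order Edgeworth term integrates in absolute value to $O(\lambda^{-1/2})=O(c)$. Replacing the step function by $\varphi$ costs $O(c)$ in $L^1$, since the cell width is $c$ and $\varphi'$ is integrable. This gives $\TV(K_c\mathcal L(Z_c),N(0,1))\le Cc$. The same kernel is translation-equivariant, so $K_c\mathcal L(Z_c+c)$ is the shift by $c$ of $K_c\mathcal L(Z_c)$, and $N(c,1)$ is the shift by $c$ of $N(0,1)$. Total variation is invariant under a common shift, so the second bound holds with the same constant.

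\textbf{Reverse kernel $L_c$ (rounding).} Let $L_c$ round a real number to the nearest point of the lattice $c\mathbb Z-c^{-1}$, which is shift-invariant under $+c$. By Lemma~\ref{lem:tv-contraction-main},
\[
\TV(L_cN(0,1),\mathcal L(Z_c))=\TV(L_cN(0,1),L_cK_c\mathcal L(Z_c))\le\TV(N(0,1),K_c\mathcal L(Z_c))\le Cc,
\]
because $L_cK_c$ is the identity on the lattice. Negative lattice points have Poisson mass zero, so the lattice boundary causes no issue. For the alternative, equivariance of $L_c$ gives $L_cN(c,1)=L_cN(0,1)$ shifted by $c$, and the bound transfers.

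\textbf{Main obstacle.} The hard step is the $O(c)$ (rather than $o(1)$) uniform $L^1$ local limit estimate for the jittered Poisson law. The first-order local CLT error is $O(\lambda^{-1/2})$ pointwise, and one must verify that it stays summable in $L^1$ after multiplying by cell width. I would first try to cite Petrov's Edgeworth local expansion~\cite[Ch.~VII]{Pet75} for lattice sums; the Poisson law is such a sum, namely of $\lfloor\lambda\rfloor$ i.i.d.\ $\Poi(\lambda/\lfloor\lambda\rfloor)$ variables. If the citation is awkward, the fallback is to bound $\lambda\mapsto\Pbb(J=j)$ directly via Stirling's formula with an explicit remainder in the central range, together with Chernoff tails outside it.
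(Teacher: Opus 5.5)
Your proposal is correct and matches the paper's proof step for step: the bijective rescaling to the lattice experiment, the uniform-jitter kernel $K_c$ controlled by a Stirling-plus-Chernoff local estimate, translation equivariance for the alternative, and the rounding kernel $L_c$. (Your fallback is exactly what the paper does: it absorbs your first-order Edgeworth term into the weighted pointwise bound $\Pbb(Z_c=z_{k,c})=c\varphi(z_{k,c})+O(c^2(1+|z_{k,c}|)^{-3})$.) The one real difference is the rounding bound, which you deduce from the jitter bound via $L_cK_c=\mathrm{id}$ (with half-open cells) and Lemma~\ref{lem:tv-contraction-main}; the paper instead re-derives it directly from the cell-mass estimate, and your shortcut is valid and slightly more economical.
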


\begin{proof}
Set $\lambda:=c^{-2}$ and let
\[
z_{k,c}:=c(k-\lambda),\qquad k\in\mathbb Z.
\]
The mapping $k\mapsto z_{k,c}$ is bijective, so $\mathcal E_c^{\mathrm{Poi}}$ is exactly equivalent to the lattice
experiment $\widehat{\mathcal E}_c^{\mathrm{Poi}}$.

Let $\varphi$ be the standard normal density and write $I_{k,c}:=[z_{k,c}-c/2,z_{k,c}+c/2)$.
We claim that
\begin{equation}
\Pbb(Z_c=z_{k,c})
=
c\varphi(z_{k,c})+O\!\left(c^2(1+|z_{k,c}|)^{-3}\right)
\label{eq:poisson-local-weighted}
\end{equation}
uniformly in $k$ as $c\downarrow 0$.

Indeed, let $p_{k,\lambda}:=\Pbb(\Poi(\lambda)=k)=e^{-\lambda}\lambda^k/k!$ with $\lambda=c^{-2}$ and
$u:=z_{k,c}=c(k-\lambda)$. For $k<0$ the left-hand side is zero, so the claimed bound is immediate; hence assume
$k\ge 0$.

If $|u|\le c^{-1/3}$, then $k=\lambda+u/c=\lambda(1+uc)$ with $|uc|\le c^{2/3}$. Stirling's formula gives
\[
p_{k,\lambda}
=
\frac{1}{\sqrt{2\pi k}}
\exp\!\bigl(-\lambda+k-k\log(k/\lambda)\bigr)\bigl(1+O(k^{-1})\bigr).
\]
Since
\[
(1+t)\log(1+t)-t=\frac{t^2}{2}+O(|t|^3)
\qquad (|t|\le 1/2),
\]
we obtain
\[
-\lambda+k-k\log(k/\lambda)
=
-\lambda\bigl((1+uc)\log(1+uc)-uc\bigr)
=
-\frac{u^2}{2}+O\!\bigl(c|u|^3\bigr),
\]
while
\[
k^{-1/2}=c\bigl(1+O(c(1+|u|))\bigr).
\]
Hence, uniformly on $|u|\le c^{-1/3}$,
\[
p_{k,\lambda}
=
c\,\varphi(u)
\exp\!\bigl(O(c|u|^3)\bigr)
\bigl(1+O(c(1+|u|))\bigr).
\]
We now split this central range into two subranges.

If $|u|\le c^{-1/4}$, then $c|u|^3=O(c^{1/4})$, so the exponential may be linearized and
\[
p_{k,\lambda}
=
c\,\varphi(u)\Bigl(1+O\!\bigl(c(1+|u|^3)\bigr)\Bigr)
=
c\,\varphi(u)+O\!\left(c^2(1+|u|)^{-3}\right),
\]
because $\sup_{u\in\R}\varphi(u)(1+|u|)^6<\infty$.

If instead $c^{-1/4}<|u|\le c^{-1/3}$, then $u^2\ge c^{-1/2}$ and $c|u|^3\le c^{1/4}$. For all sufficiently
small $c$ this gives
\[
p_{k,\lambda}\le Cc\,e^{-u^2/2+c^{1/4}}\le Cc\,e^{-u^2/4}\le Cc\,e^{-c^{-1/2}/4},
\]
while also
\[
c\,\varphi(u)\le Cc\,e^{-u^2/2}\le Cc\,e^{-c^{-1/2}/2}.
\]
Both quantities are super-exponentially small in $c^{-1/2}$, hence in particular they are
$O\!\left(c^2(1+|u|)^{-3}\right)$.

Finally, if $|u|>c^{-1/3}$, then $|k-\lambda|=|u|/c\ge c^{-4/3}=\lambda^{2/3}$. If $k\ge \lambda+\lambda^{2/3}$,
then $p_{k,\lambda}\le \Pbb(\Poi(\lambda)\ge k)$, and the standard one-sided Chernoff bound gives
\[
p_{k,\lambda}\le e^{-c_1\lambda^{1/3}}=e^{-c_1c^{-2/3}}
\]
for some absolute constant $c_1>0$. If $0\le k\le \lambda-\lambda^{2/3}$, then similarly
\[
p_{k,\lambda}\le \Pbb(\Poi(\lambda)\le k)\le e^{-c_1c^{-2/3}}.
\]
Moreover,
\[
c\,\varphi(u)\le c\,e^{-u^2/2}\le c\,e^{-c^{-2/3}/2}.
\]
Again both bounds are $O\!\left(c^2(1+|u|)^{-3}\right)$. This proves
\eqref{eq:poisson-local-weighted}.

Since $|\varphi'(x)|\le C(1+|x|)^{-3}$ for all $x$, integration over the cell $I_{k,c}$ yields
\[
\int_{I_{k,c}}\varphi(x)\,dx
=
c\varphi(z_{k,c})+O\!\left(c^2(1+|z_{k,c}|)^{-3}\right),
\]
and therefore
\begin{equation}
\left|\Pbb(Z_c=z_{k,c})-\int_{I_{k,c}}\varphi(x)\,dx\right|
\le \frac{C_1c^2}{(1+|z_{k,c}|)^3}.
\label{eq:poisson-cell-bound}
\end{equation}

For the deficiency from Poisson to Gaussian, let $K_c$ be the uniform-jitter kernel
\[
K_c(x,\cdot):=\mathrm{Unif}[x-c/2,x+c/2].
\]
The law $K_c\mathcal L(Z_c)$ has density
\[
f_c(x)=\frac{1}{c}\Pbb(Z_c=z_{k,c})\qquad \text{for }x\in I_{k,c}.
\]
Let $\bar\varphi_{k,c}:=c^{-1}\int_{I_{k,c}}\varphi(u)\,du$. Then
\[
\int_{I_{k,c}}|f_c(x)-\varphi(x)|\,dx
\le \left|\Pbb(Z_c=z_{k,c})-\int_{I_{k,c}}\varphi(u)\,du\right|
+\int_{I_{k,c}}|\bar\varphi_{k,c}-\varphi(x)|\,dx.
\]
The first term is bounded by \eqref{eq:poisson-cell-bound}, and the second is at most
$Cc^2(1+|z_{k,c}|)^{-3}$ by the derivative bound on $\varphi$. Hence
\[
\int_{I_{k,c}}|f_c(x)-\varphi(x)|\,dx\le \frac{C_2c^2}{(1+|z_{k,c}|)^3}.
\]
Summing over $k$ and using that $z_{k,c}$ runs over a lattice of mesh $c$,
\[
\TV\bigl(K_c\mathcal L(Z_c),N(0,1)\bigr)
\le C_2c^2\sum_{k\in\mathbb Z}(1+|z_{k,c}|)^{-3}
\le C_3c.
\]
By translation invariance of both the jitter kernel and the Gaussian family,
\[
\TV\bigl(K_c\mathcal L(Z_c+c),N(c,1)\bigr)
=\TV\bigl(K_c\mathcal L(Z_c),N(0,1)\bigr)
\le C_3c.
\]

For the reverse deficiency, let $L_c$ be the rounding kernel that maps $x\in I_{k,c}$ to $z_{k,c}$. Then
$L_cN(0,1)$ is the lattice law assigning mass $\int_{I_{k,c}}\varphi(x)\,dx$ to $z_{k,c}$, so by
\eqref{eq:poisson-cell-bound},
\[
\TV\bigl(L_cN(0,1),\mathcal L(Z_c)\bigr)
\le \frac12\sum_{k\in\mathbb Z}\frac{C_1c^2}{(1+|z_{k,c}|)^3}
\le C_4c.
\]
Again translation invariance yields
\[
\TV\bigl(L_cN(c,1),\mathcal L(Z_c+c)\bigr)\le C_4c.
\]
Taking $C:=\max\{C_3,C_4\}$ proves the theorem.
\end{proof}

\begin{corollary}[Boundary \privacycurve\ convergence]
\label{cor:boundary-BE-privacy}
For every fixed $\varepsilon\ge 0$ there exists $C_\varepsilon<\infty$ such that, for all sufficiently small
$c>0$,
\[
\bigl|\delta_{\mathcal E_c^{\mathrm{Poi}}}(\varepsilon)-\delta_{\mathcal E_c^{\mathrm G}}(\varepsilon)\bigr|
\le C_\varepsilon c,
\]
where $\delta_{\mathcal E}(\varepsilon)$ denotes the privacy curve of the binary experiment $\mathcal E$.
For each fixed $\pi\in(0,1)$, the same conclusion holds for the centered-and-normalized Skellam-shift
experiment of Part~II, with a constant $C_{\varepsilon,\pi}$.
\end{corollary}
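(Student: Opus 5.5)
The argument combines Theorem~\ref{thm:boundary-BE-poisson} with a privacy-curve version of the Le Cam comparison. Theorem~\ref{thm:boundary-BE-poisson} supplies explicit Markov kernels $K_c$ (uniform jitter) and $L_c$ (rounding) with two-sided total-variation errors $O(c)$. The proof has four steps.

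\textbf{Step 1 (pass to the lattice experiment).} The bijection $k\mapsto c(k-c^{-2})$ is a Markov kernel in both directions. By Lemma~\ref{lem:privacy-blackwell-equivalence}, $\mathcal E_c^{\mathrm{Poi}}$ and $\widehat{\mathcal E}_c^{\mathrm{Poi}}$ therefore have identical privacy curves, so it suffices to work with $\widehat{\mathcal E}_c^{\mathrm{Poi}}$.

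\textbf{Step 2 (upper bound).} Apply $K_c$ to $\widehat{\mathcal E}_c^{\mathrm{Poi}}$. By Lemma~\ref{lem:privacy-dpi-main},
\[
\delta_{K_c\mathcal L(Z_c+c)\|K_c\mathcal L(Z_c)}(\varepsilon)\le \delta_{\widehat{\mathcal E}_c^{\mathrm{Poi}}}(\varepsilon).
\]
By Lemma~\ref{lem:privacy-stability-main} and the theorem's bounds, the left side differs from $\delta_{\mathcal E_c^{\mathrm G}}(\varepsilon)$ by at most $(1+e^\varepsilon)Cc$. This gives
\[
\delta_{\mathcal E_c^{\mathrm G}}(\varepsilon)\le \delta_{\mathcal E_c^{\mathrm{Poi}}}(\varepsilon)+(1+e^\varepsilon)Cc.
\]

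\textbf{Step 3 (reverse bound).} Symmetrically, apply $L_c$ to $\mathcal E_c^{\mathrm G}$ and again use Lemmas~\ref{lem:privacy-dpi-main} and~\ref{lem:privacy-stability-main}. This yields
\[
\delta_{\mathcal E_c^{\mathrm{Poi}}}(\varepsilon)\le \delta_{\mathcal E_c^{\mathrm G}}(\varepsilon)+(1+e^\varepsilon)Cc.
\]
Combining Steps 2 and 3 proves the Poisson claim with $C_\varepsilon=(1+e^\varepsilon)C$.

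\textbf{Step 4 (Skellam case).} This is the only step needing new work, and it is the main obstacle. The centered-and-normalized Skellam-shift experiment is $(\mathcal L(Z),\mathcal L(Z+c'))$, with $Z$ a rescaled difference of independent Poissons with means of order $c^{-2}$ weighted by $(1-\pi)$ and $\pi$. Here $c'$ is the appropriate normalized shift, which is proportional to $c$ with a $\pi$-dependent constant.

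The plan is to prove the analogue of the local estimate \eqref{eq:poisson-local-weighted} for the Skellam lattice, namely
\[
\Pbb(Z=z_k)=h\,\varphi_\sigma(z_k)+O\bigl(h^2(1+|z_k|)^{-3}\bigr),
\]
with lattice mesh $h\asymp c$ and $\sigma$ depending on $\pi$. I would first try to obtain this directly from the standard local CLT with Edgeworth remainder for sums of i.i.d. lattice variables: write the Skellam variable as a sum of order $c^{-2}$ i.i.d. Skellam increments with span $1$ and all moments finite. This gives the same weighted bound, and in particular the $(1+|z|)^{-3}$ tail control follows from the finite moments. As a fallback, I would convolve the Poisson local estimate with the independent Poisson factor.

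Once this local estimate is in hand, the jitter/rounding construction of Theorem~\ref{thm:boundary-BE-poisson} goes through verbatim, since it used only translation invariance and the cell bound \eqref{eq:poisson-cell-bound}. Steps 1--3 then apply unchanged, giving the constant $C_{\varepsilon,\pi}=(1+e^\varepsilon)C_\pi$. The constant depends on $\pi$ through $\sigma$ and through the Edgeworth constants, which are uniform for fixed $\pi\in(0,1)$.
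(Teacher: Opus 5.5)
Your Steps 1--3 are exactly the paper's argument for the Poisson experiment. You apply the jitter $K_c$ with Lemma~\ref{lem:privacy-dpi-main} and Lemma~\ref{lem:privacy-stability-main} for one inequality, and the rounding $L_c$ for the other.

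\textbf{Skellam case.} Here you take a genuinely different route. The paper proves no new local limit theorem. It applies Theorem~\ref{thm:boundary-BE-poisson} as a black box to $U_c\sim\Poi((1-\pi)c^{-2})$ with parameter $c/\sqrt{1-\pi}$, and to the common factor $V_c\sim\Poi(\pi c^{-2})$ with parameter $c/\sqrt{\pi}$. It then tensorizes via Lemma~\ref{lem:tv-tensorization-main} and pushes forward under $(x,y)\mapsto x-y$. Your fallback of convolving the Poisson local estimate with the $V_c$ factor is closer in spirit to this.

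In the paper's normalization, $\widehat S_{c,\pi}=c(U_c-V_c-(1-2\pi)c^{-2})$ has variance $1$ and the alternative shifts it by exactly one mesh. So your $\sigma,h,c'$ are $1,c,c$.

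\textbf{What your route costs.} To get the weight $(1+|z|)^{-3}$ you need the non-uniform lattice local Edgeworth expansion, or a central-range/Chernoff-tail split as in the Poisson proof. Also, $c^{-2}$ need not be an integer, so you need a triangular array of Skellam increments whose laws lie in a compact family, with uniform constants.

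\textbf{What your route buys.} Both kernels act directly on the one-dimensional Skellam lattice. The paper's pushforward, read literally, only delivers the jitter direction: jittering both coordinates and subtracting is just adding triangular noise to $\widehat S_{c,\pi}$ itself. The coordinatewise rounding direction needs the Gaussian pair, which a single observation $X\sim N(\theta,1)$ cannot supply. The pair experiment with shift $(c,0)$ is strictly more informative than $(N(0,1),N(c,1))$. One must instead split $X$ into independent pieces with shifts $((1-\pi)\theta,-\pi\theta)$ and pay a further $O(c)$ for the non-lattice shifts.

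\textbf{A shortcut.} Neither route is actually needed for the bound as stated. For $\varepsilon\ge 0$ one has $0\le\delta_{Q\|P}(\varepsilon)\le\TV(P,Q)$, and each experiment is within $O(c)$ of the trivial one:
\begin{itemize}
\item $\TV(\Poi(c^{-2}),1+\Poi(c^{-2}))=\max_k\Pbb(\Poi(c^{-2})=k)=O(c)$ by unimodality;
\item $\TV(N(0,1),N(c,1))\le c/\sqrt{2\pi}$;
\item by contraction under $(u,v)\mapsto u-v$, the Skellam-shift experiment has $\TV\le\max_k\Pbb(U_c=k)=O(c/\sqrt{1-\pi})$.
\end{itemize}
Hence the difference of the two privacy curves is at most the larger of them, which is $O(c)$ uniformly in $\varepsilon$. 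The obstacle you flag in Step~4 therefore disappears. Local-limit estimates, whether yours or the paper's, only become relevant for errors smaller than the size of the curves themselves.
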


\begin{proof}
For the Poisson experiment, combine Theorem~\ref{thm:boundary-BE-poisson} with the monotonicity of privacy curves
under Markov kernels and Lemma~\ref{lem:privacy-stability-main}. Explicitly, let $\eta_c:=Cc$ be the two-sided
Le Cam error from the theorem. Applying $K_c$ to the Poisson experiment and then using \privacycurve\ stability on
the common real line gives
\[
\delta_{\mathcal E_c^{\mathrm{Poi}}}(\varepsilon)
\ge
\delta_{\mathcal E_c^{\mathrm G}}(\varepsilon)-(1+e^\varepsilon)\eta_c.
\]
Applying $L_c$ in the reverse direction gives the opposite inequality. Hence
\[
\bigl|\delta_{\mathcal E_c^{\mathrm{Poi}}}(\varepsilon)-\delta_{\mathcal E_c^{\mathrm G}}(\varepsilon)\bigr|
\le (1+e^\varepsilon)Cc.
\]
This proves the first statement.

For the Skellam case, fix $\pi\in(0,1)$, let
\[
U_c\sim \Poi((1-\pi)c^{-2}),
\qquad
V_c\sim \Poi(\pi c^{-2})
\]
be independent, and set
\[
\widehat S_{c,\pi}:=c\bigl(U_c-V_c-(1-2\pi)c^{-2}\bigr).
\]
Then the centered-and-normalized Skellam-shift experiment from Part~II is
\[
\bigl(\mathcal L(\widehat S_{c,\pi}),\ \mathcal L(\widehat S_{c,\pi}+c)\bigr).
\]
Apply Theorem~\ref{thm:boundary-BE-poisson} to the first Poisson coordinate with parameter
$c/\sqrt{1-\pi}$, and to the second coordinate (which is common under both hypotheses) with parameter
$c/\sqrt{\pi}$. After rescaling those one-dimensional approximations and tensorizing them via
Lemma~\ref{lem:tv-tensorization-main}, we obtain an $O(c)$ comparison between
\[
\bigl(c(U_c-(1-\pi)c^{-2}),\ c(V_c-\pi c^{-2})\bigr)
\]
and the Gaussian product law
\[
N(0,1-\pi)\otimes N(0,\pi),
\]
with the neighboring alternative shifting only the first coordinate by $c$. Pushing forward under the subtraction
map $(x,y)\mapsto x-y$ therefore yields an $O(c)$ comparison between the centered Skellam-shift experiment and
$(N(0,1),N(c,1))$. The same privacy-curve argument as above gives the claimed bound, with a constant depending on
$\pi$.
\end{proof}

\section{Discussion}

\begin{tcolorbox}[keybox,title=What the trilogy now covers]
Parts~I--III together give a finite-alphabet universality theory for neighboring shuffle experiments under
convergent macroscopic scalings. Part~I treats the fixed full-support regime and proves Gaussian/GDP limits with
Berry--Esseen bounds. Part~II treats the critical one-dominant frontier and proves Poisson, Skellam, and
multivariate compound-Poisson limits, together with the super-critical collapse of privacy. The present paper
identifies the general finite-dominant L\'evy--Khintchine layer: a Gaussian factor on the dominant tangent space, a
compound-Poisson jump field on the quotient space, projected Le Cam convergence, and \privacycurve\ convergence for
the full hybrid experiment in the interior, in the weak boundary regime, and in the regular strong-boundary cases.
It also isolates the obstruction showing that a bounded minority block can defeat any fully general projected
strong-boundary \privacycurve\ theorem.
\end{tcolorbox}

Four natural directions remain open. First, the present neighboring calibration produces no nonzero Gaussian shift
in the limit. A genuine hybrid shift theorem
\[
(G,J)\quad\text{versus}\quad (G+h,J+\Delta),\qquad h\neq 0,
\]
would require a simultaneous $n^{-1/2}$ perturbation of the dominant block and would amount to the next L\'evy--
Khintchine layer beyond the neighboring model treated here. Second, the strong-boundary obstruction of
Proposition~\ref{prop:strong-boundary-obstruction} suggests a refined finite-minority theory: when the minority
group has bounded size and retains nontrivial dominant internal geometry, what is the correct limit experiment and
what replaces the projected jump \privacycurve? Third, Part~I treats unbundled multi-message shuffling in the
Gaussian regime, but a critical non-Gaussian theory for unbundled mechanisms remains open. Fourth, the finite-
alphabet assumption in the present paper could likely be relaxed to countable alphabets under summable rare
intensity
\[
\Lambda_b:=\sum_{y\notin D_b}\alpha_b(y)<\infty,
\]
but a full countable-alphabet theory with growing dominant dimension lies beyond the present trilogy. A separate
sharpness question remains open in the overlap regime $\Delta=0$ within the regimes covered by
Theorem~\ref{thm:general-k-dominant-LK}\textup{(iii)}: the present argument gives privacy collapse, but it is not
yet clear whether the optimal rate there is always $n^{-1/2}$ or can be faster.

\appendix
\section{Technical lemmas}
\label{app:technical-lemmas}

All spaces in this appendix are finite-dimensional Euclidean or finite spaces, hence standard Borel. In particular,
all conditional laws and disintegrations used below exist.

\begin{lemma}[Quotient geometry of the dominant block]
\label{lem:new-geometry}
Let $\mathsf H$ be the hypergraph with vertex set $D_0\cup D_1$ and hyperedges $D_0,D_1$, and let $\mathcal C$
denote its connected components. Then
\[
M^\perp\cap \mathbb R^{D_0\cup D_1}
=
\{x\in \mathbb R^{D_0\cup D_1}:\ x \text{ is constant on each } C\in\mathcal C\}.
\]
Equivalently,
\[
\Pi_J e_y=\Pi_J e_{y'}
\quad\Longleftrightarrow\quad
y,y' \text{ belong to the same component } C\in\mathcal C.
\]
In particular, for each $b\in\{0,1\}$ and each $y\in D_b$,
\[
\Pi_J e_y=\Pi_J\mu_b=:m_b.
\]
Hence all dominant outputs in $D_b$ collapse to a single quotient atom $m_b\in M^\perp$.

Moreover:
\begin{enumerate}
\item[(a)] if $D_0\cap D_1=\varnothing$, then $m_0\neq m_1$;
\item[(b)] if $D_0\cap D_1\neq\varnothing$, then $m_0=m_1$, and therefore
\[
\Delta=\Pi_J(\mu_1-\mu_0)=0.
\]
\end{enumerate}
Finally, for every $y\notin D_b$,
\[
j_{b,y}=\Pi_J(e_y-\mu_b)=\Pi_J e_y-m_b.
\]
\end{lemma}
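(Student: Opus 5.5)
The plan is to compute $M^\perp$ within the coordinate block $\R^{D_0\cup D_1}$ directly from the explicit description of $M_b$ in Definition~\ref{def:dominant-tangent-quotient}. Since $M=M_0+M_1$, a vector $x$ lies in $M^\perp$ if and only if $x\perp M_0$ and $x\perp M_1$. Moreover, $M_b$ is spanned by the differences $e_y-e_{y'}$ with $y,y'\in D_b$; indeed, these lie in $M_b$, and they span the zero-sum subspace of $\R^{D_b}$. Therefore $x\perp M_b$ is equivalent to $x(y)=x(y')$ for all $y,y'\in D_b$, that is, to $x$ being constant on $D_b$.

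For $x$ supported in $D_0\cup D_1$, being constant on both hyperedges is the same as being constant along every path of overlapping hyperedges. This is exactly constancy on each connected component of $\mathsf H$, which gives the first display. Since $M\subseteq\R^{D_0\cup D_1}$, the projection $\Pi_J$ acts as the identity on coordinates outside $D_0\cup D_1$ and acts within $\R^{D_0\cup D_1}$ as the orthogonal projection onto the component-constant vectors. Explicitly,
\[
\Pi_J e_y=\frac{\1_C}{|C|}\qquad (y\in C).
\]
Hence $\Pi_J e_y=\Pi_J e_{y'}$ if and only if $y$ and $y'$ lie in the same component. The same conclusion also follows without computing $\Pi_J$: if $y,y'$ lie in a common hyperedge then $e_y-e_{y'}\in M$, and chaining such steps along a path covers the whole component. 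Conversely, if $y\in C$ and $y'\in C'$ with $C\neq C'$, then $\Pi_J e_y=\1_C/|C|$ and $\Pi_J e_{y'}=\1_{C'}/|C'|$ are distinct because $C\cap C'=\varnothing$.

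For the consequences:
\begin{itemize}
\item Since each $D_b$ lies in a single component, $\Pi_J e_y$ is the same for every $y\in D_b$. Using $\sum_{y\in D_b}p_b(y)=1$ gives $\Pi_J\mu_b=\sum_{y\in D_b} p_b(y)\Pi_J e_y=m_b$.
\item For (a): if $D_0\cap D_1=\varnothing$, no hyperedge meets both sets, so $D_0$ and $D_1$ are two distinct components. The explicit formula then gives $m_0=\1_{D_0}/|D_0|\neq \1_{D_1}/|D_1|=m_1$.
\item For (b): a common vertex makes $D_0\cup D_1$ a single component, so $m_0=m_1$ and $\Delta=m_1-m_0=0$.
\item The formula for $j_{b,y}$ is linearity of $\Pi_J$ together with $\Pi_J\mu_b=m_b$.
\end{itemize}

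There is no real obstacle here. The only point needing care is the spanning claim $M_b=\mathrm{span}\{e_y-e_{y'}:y,y'\in D_b\}$, which relies on $p_b$ being a probability vector so that $e_y-\mu_b$ has zero sum, together with the fact that any zero-sum vector on $D_b$ is a combination of such differences. The other point is the explicit form of $\Pi_J$ restricted to $\R^{D_0\cup D_1}$, which follows because $\R^{\mathcal Y}$ splits orthogonally into $\R^{D_0\cup D_1}\oplus\R^{\mathcal Y\setminus(D_0\cup D_1)}$ and $M$ sits in the first summand.
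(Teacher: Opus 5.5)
Your proof is correct. For the first display you argue as the paper does: orthogonality to the zero-sum subspaces on $D_0$ and $D_1$ means constancy on each hyperedge, hence on each component of $\mathsf H$. From there the two routes differ.

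You use $M\subseteq\mathbb R^{D_0\cup D_1}$ to identify $\Pi_J$ on that block as the orthogonal projection onto the component-constant vectors. You then compute $\Pi_J e_y=\mathbf 1_C/|C|$ for $y\in C$, and read off everything else from this single formula: the equivalence, the collapse $\Pi_J\mu_b=m_b$, and both (a) and (b).

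The paper never computes $\Pi_J$. It gets $\Pi_J e_y=\Pi_J\mu_b$ directly from $e_y-\mu_b\in M_b\subseteq M$, and obtains (b) by evaluating at a shared vertex $s\in D_0\cap D_1$. It proves (a) by contradiction: if $m_0=m_1$, then $e_{y_0}-e_{y_1}=x_0+x_1$ with $x_b\in M_b$, and disjoint supports force $x_0=e_{y_0}$, which violates the zero-sum condition.

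Your version gives more information, namely an explicit description of the quotient atoms (for instance $m_b=\mathbf 1_{D_b}/|D_b|$ in the disjoint case), which makes injectivity immediate. The paper's argument for (a), (b) and the collapse uses only that $\ker\Pi_J=M$, so it does not depend on the particular form of the orthogonal projection.
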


\begin{proof}
Let $U:=\mathbb R^{D_0\cup D_1}$. Since
\[
M_b=\{x\in U:\ \supp(x)\subseteq D_b,\ \sum_{y\in D_b}x(y)=0\},
\]
a vector $x\in U$ belongs to $M^\perp\cap U$ if and only if it is orthogonal to every zero-sum vector on $D_0$ and
every zero-sum vector on $D_1$. Equivalently, $x$ is constant on $D_0$ and constant on $D_1$. If $D_0\cap D_1=
\varnothing$, the connected components of $\mathsf H$ are $D_0$ and $D_1$, so this is exactly the statement that $x$
is constant on each component. If $D_0\cap D_1\neq\varnothing$, the two constants agree on the overlap, hence $x$ is
constant on the single component $D_0\cup D_1$. This proves the first display.

Fix $b\in\{0,1\}$ and $y\in D_b$. Since $e_y-\mu_b\in M_b\subseteq M$, we have
\[
\Pi_J(e_y-\mu_b)=0,
\]
which is equivalent to
\[
\Pi_J e_y=\Pi_J\mu_b=:m_b.
\]
Thus all dominant outputs in $D_b$ collapse to the same quotient atom.

If $D_0\cap D_1\neq\varnothing$, choose $s\in D_0\cap D_1$. Then
\[
m_0=\Pi_J e_s=m_1,
\]
so
\[
\Delta=\Pi_J(\mu_1-\mu_0)=m_1-m_0=0.
\]
In this case $\mathsf H$ has a single component, so the equivalence for $\Pi_J e_y$ follows immediately.

Assume now that $D_0\cap D_1=\varnothing$ and suppose for contradiction that $m_0=m_1$. Choose any
$y_0\in D_0$ and $y_1\in D_1$. Then
\[
\Pi_J(e_{y_0}-e_{y_1})=0,
\]
so $e_{y_0}-e_{y_1}\in M=M_0+M_1$. Hence there exist $x_0\in M_0$ and $x_1\in M_1$ with
\[
e_{y_0}-e_{y_1}=x_0+x_1.
\]
Because $D_0$ and $D_1$ are disjoint, the supports of $x_0$ and $x_1$ are disjoint. Restricting to $D_0$ gives
$x_0=e_{y_0}$, while restricting to $D_1$ gives $x_1=-e_{y_1}$. But vectors in $M_b$ have coordinate sum zero on
$D_b$, whereas $e_{y_0}$ and $-e_{y_1}$ have sums $1$ and $-1$, respectively. This contradiction proves $m_0\neq
m_1$, establishing part~(a). Since the components are now exactly $D_0$ and $D_1$, the displayed equivalence for
$\Pi_J e_y$ follows in the disjoint case as well.

Finally, for $y\notin D_b$,
\[
j_{b,y}=\Pi_J(e_y-\mu_b)=\Pi_J e_y-\Pi_J\mu_b=\Pi_J e_y-m_b.
\]
\end{proof}

\begin{lemma}[One-user characteristic expansion]
\label{lem:new-one-user-cf}
Fix $b\in\{0,1\}$ and define
\[
\phi_{b,n}(u,v)
:=
\mathbb E\exp\!\left(
 i\bigl\langle u,n^{-1/2}\Pi_G X^{(b)}_{1,n}\bigr\rangle
 +i\bigl\langle v,\Pi_J X^{(b)}_{1,n}\bigr\rangle
\right),
\qquad (u,v)\in M\times M^\perp.
\]
Then, uniformly on compact subsets of $M\times M^\perp$,
\[
\phi_{b,n}(u,v)
=
1
-\frac{1}{2n}\langle u,\Gamma_b u\rangle
+\frac{1}{n}\sum_{y\notin D_b}\alpha_b(y)\bigl(e^{i\langle v,j_{b,y}\rangle}-1\bigr)
+o(n^{-1}),
\]
and hence
\[
\log\phi_{b,n}(u,v)
=
-\frac{1}{2n}\langle u,\Gamma_b u\rangle
+\frac{1}{n}\sum_{y\notin D_b}\alpha_b(y)\bigl(e^{i\langle v,j_{b,y}\rangle}-1\bigr)
+o(n^{-1}).
\]
\end{lemma}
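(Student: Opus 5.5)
The plan is to expand the one-user characteristic function directly, splitting the expectation according to whether the output $Y:=Y^{(b)}_{1,n}$ lies in $D_b$ or outside it. Write $X:=e_Y-\mu_b$, and write $a:=n^{-1/2}\langle u,\Pi_G X\rangle$ and $c:=\langle v,\Pi_J X\rangle$ for the two phase terms. Since $u\in M$ and $v\in M^\perp$, we have $\langle u,\Pi_G X\rangle=\langle u,X\rangle$ and $\langle v,\Pi_J X\rangle=\langle v,X\rangle$.

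\emph{Dominant outputs.} For $y\in D_b$, Lemma~\ref{lem:new-geometry} gives $\Pi_J(e_y-\mu_b)=0$, so $c=0$ and the phase is purely Gaussian. A second-order Taylor expansion gives
\[
e^{ia}=1+ia-\tfrac12 a^2+O(|a|^3),
\]
with $|a|\le C\|u\|n^{-1/2}$ uniformly in $y$ because the alphabet is finite. Summing against $W_b^{(n)}(y)=p_b(y)+O(n^{-1})$, the zeroth-order part gives
\[
W_b^{(n)}(D_b)=1-\sum_{y\notin D_b}W_b^{(n)}(y)=1-\frac1n\sum_{y\notin D_b}\alpha_b(y)+o(n^{-1}),
\]
using \eqref{eq:gen-dom-rare}. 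The linear term is
\[
in^{-1/2}\sum_{y\in D_b}W_b^{(n)}(y)\langle u,e_y-\mu_b\rangle .
\]
Its $p_b$-part vanishes by the centering identity $\sum_{y\in D_b}p_b(y)(e_y-\mu_b)=0$. What remains is $n^{-1/2}\cdot O(n^{-1})=o(n^{-1})$, and this is exactly where the $O(n^{-1})$ rate in \eqref{eq:gen-dom-dominant} is needed; a mere $o(1)$ rate would not suffice. The quadratic term equals $-\frac1{2n}\langle u,\Gamma_b u\rangle+O(n^{-2})$ by the definition of $\Gamma_b$, and the cubic remainder is $O(n^{-3/2})$.

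\emph{Rare outputs.} For $y\notin D_b$, we have $c=\langle v,j_{b,y}\rangle$ and $e^{ia}=1+O(n^{-1/2})$. The contribution is therefore
\[
\sum_{y\notin D_b}W_b^{(n)}(y)\,e^{i\langle v,j_{b,y}\rangle}\bigl(1+O(n^{-1/2})\bigr)=\frac1n\sum_{y\notin D_b}\alpha_b(y)e^{i\langle v,j_{b,y}\rangle}+o(n^{-1}).
\]
Combining this with the $-\frac1n\sum\alpha_b$ coming from the dominant mass produces the compound-Poisson term $\frac1n\sum\alpha_b(y)(e^{i\langle v,j_{b,y}\rangle}-1)$. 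Every error bound depends on $(u,v)$ only through $\|u\|,\|v\|$ (the rare phases have modulus one), which gives uniformity on compacts.

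\emph{Logarithm.} Uniformly on compacts, $\phi_{b,n}=1+w_n$ with $w_n=O(n^{-1})$. Hence $\log(1+w_n)=w_n+O(|w_n|^2)=w_n+O(n^{-2})$ using the principal branch, valid for large $n$, which yields the second display.

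The only delicate step is the first-order term. One must see that it is killed by centering at $\mu_b$, which is defined from the limit law $p_b$ rather than from $W_b^{(n)}$; the resulting mismatch is $O(n^{-1})$, and it is multiplied by $n^{-1/2}$. Everything else is routine bookkeeping on a finite alphabet.
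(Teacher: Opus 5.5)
Your proposal is correct and follows essentially the same route as the paper: split the expectation into dominant and rare outputs, and Taylor-expand the dominant phase to second order. The linear term is then killed by the centering identity up to an $O(n^{-3/2})$ mismatch, and the quadratic term yields $\langle u,\Gamma_b u\rangle$. The rare outputs produce the compound-Poisson exponent once the missing dominant mass is absorbed, and the logarithm follows from $\phi_{b,n}=1+O(n^{-1})$. Your use of $\langle u,\Pi_G X\rangle=\langle u,X\rangle$ for $u\in M$ is only a cosmetic variant of the paper's observation that $\Pi_G(e_y-\mu_b)=e_y-\mu_b$ for $y\in D_b$.
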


\begin{proof}
Fix a compact set $K\subset M\times M^\perp$. All $O(\cdot)$ and $o(\cdot)$ terms below are uniform in $(u,v)\in K$.
Write
\[
X_n:=X^{(b)}_{1,n}=e_{Y_n}-\mu_b,
\qquad Y_n\sim W_b^{(n)}.
\]
Split the expectation into dominant and rare outputs.

If $y\in D_b$, then $e_y-\mu_b\in M_b\subseteq M$, so $\Pi_G(e_y-\mu_b)=e_y-\mu_b$ and
$\Pi_J(e_y-\mu_b)=0$. Therefore the dominant contribution equals
\[
D_n(u):=\sum_{y\in D_b} W_b^{(n)}(y)\exp\!\bigl(i n^{-1/2}\langle u,e_y-\mu_b\rangle\bigr).
\]
By assumption,
\[
W_b^{(n)}(y)=p_b(y)+r_{b,n}(y),\qquad r_{b,n}(y)=O(n^{-1}),\qquad y\in D_b,
\]
and because the rare mass is $O(n^{-1})$,
\[
\sum_{y\in D_b}p_b(y)=1,
\qquad
\sum_{y\in D_b}W_b^{(n)}(y)=1+O(n^{-1}).
\]
Using the Taylor expansion
\[
e^{it}=1+it-\frac{t^2}{2}+O(|t|^3)
\]
with $t=n^{-1/2}\langle u,e_y-\mu_b\rangle$, we obtain
\begin{align*}
D_n(u)
&=
\sum_{y\in D_b} W_b^{(n)}(y)
\left(1+i n^{-1/2}\langle u,e_y-\mu_b\rangle-
\frac{1}{2n}\langle u,e_y-\mu_b\rangle^2+O(n^{-3/2})\right) \\
&=
\sum_{y\in D_b}W_b^{(n)}(y)
+i n^{-1/2}\Bigl\langle u,\sum_{y\in D_b}W_b^{(n)}(y)(e_y-\mu_b)\Bigr\rangle \\
&\qquad-
\frac{1}{2n}\sum_{y\in D_b}W_b^{(n)}(y)\langle u,e_y-\mu_b\rangle^2+O(n^{-3/2}).
\end{align*}
Now
\[
\sum_{y\in D_b}W_b^{(n)}(y)(e_y-\mu_b)
=
\sum_{y\in D_b}(W_b^{(n)}(y)-p_b(y))(e_y-\mu_b)=O(n^{-1}),
\]
so the linear term is $O(n^{-3/2})$. Also,
\[
\sum_{y\in D_b}W_b^{(n)}(y)\langle u,e_y-\mu_b\rangle^2
=
\sum_{y\in D_b}p_b(y)\langle u,e_y-\mu_b\rangle^2+O(n^{-1})
=\langle u,\Gamma_bu\rangle+O(n^{-1}),
\]
whence
\[
D_n(u)=1-\sum_{y\notin D_b}W_b^{(n)}(y)-\frac{1}{2n}\langle u,\Gamma_bu\rangle+o(n^{-1}).
\]
Since $nW_b^{(n)}(y)\to\alpha_b(y)$ for $y\notin D_b$ and the alphabet is finite,
\[
\sum_{y\notin D_b}W_b^{(n)}(y)=\frac{1}{n}\sum_{y\notin D_b}\alpha_b(y)+o(n^{-1}).
\]
Therefore
\[
D_n(u)=1-\frac{1}{n}\sum_{y\notin D_b}\alpha_b(y)-\frac{1}{2n}\langle u,\Gamma_bu\rangle+o(n^{-1}).
\]

For $y\notin D_b$, the contribution equals
\[
R_{n,y}(u,v):=W_b^{(n)}(y)
\exp\!\left(i n^{-1/2}\langle u,\Pi_G(e_y-\mu_b)\rangle+i\langle v,j_{b,y}\rangle\right).
\]
Because $\Pi_G(e_y-\mu_b)$ is fixed and $(u,v)$ stays in a compact set,
\[
\exp\!\left(i n^{-1/2}\langle u,\Pi_G(e_y-\mu_b)\rangle+i\langle v,j_{b,y}\rangle\right)
=
 e^{i\langle v,j_{b,y}\rangle}\bigl(1+O(n^{-1/2})\bigr).
\]
Multiplying by $W_b^{(n)}(y)=\alpha_b(y)n^{-1}+o(n^{-1})$ yields
\[
R_{n,y}(u,v)=\frac{1}{n}\alpha_b(y)e^{i\langle v,j_{b,y}\rangle}+o(n^{-1}).
\]
Summing over all rare outputs gives
\[
\sum_{y\notin D_b}R_{n,y}(u,v)
=
\frac{1}{n}\sum_{y\notin D_b}\alpha_b(y)e^{i\langle v,j_{b,y}\rangle}+o(n^{-1}).
\]
Combining the dominant and rare parts proves the first display.

Since $\phi_{b,n}(u,v)=1+O(n^{-1})$ uniformly on $K$, the logarithm satisfies
\[
\log\phi_{b,n}(u,v)=\phi_{b,n}(u,v)-1+o(n^{-1}),
\]
which yields the second display.
\end{proof}

\begin{lemma}[Projected quotient array]
\label{lem:new-projected-array}
Let
\[
\mathcal Y_J:=\Pi_J\bigl(\{e_y:y\in\mathcal Y\}\bigr)\subset M^\perp,
\]
and let $\widetilde W_b^{(n)}$ be the pushforward of $W_b^{(n)}$ under $y\mapsto \Pi_J e_y$.
Then, for each $b\in\{0,1\}$,
\[
\widetilde W_b^{(n)}(m_b)=1-O(n^{-1}),
\]
and for every $w\in\mathcal Y_J\setminus\{m_b\}$,
\[
n\,\widetilde W_b^{(n)}(w)\longrightarrow \widetilde\alpha_b(w)
:=
\sum_{\substack{y\notin D_b\\ \Pi_J e_y=w}}\alpha_b(y).
\]
Consequently:
\begin{enumerate}
\item[(a)] if $m_0\neq m_1$, then $\widetilde W^{(n)}$ is exactly in the sparse-error critical regime of
Part~II, Definition~5.1 on the finite alphabet $\mathcal Y_J$;
\item[(b)] if $m_0=m_1$, then $\widetilde W^{(n)}$ is in the coincident-dominant variant of that regime,
with a common dominant atom and zero deterministic shift.
\end{enumerate}
\end{lemma}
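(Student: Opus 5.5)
The plan is to compute the pushforward directly, fiber by fiber, using only the geometry of Lemma~\ref{lem:new-geometry} and the hypotheses \eqref{eq:gen-dom-dominant}--\eqref{eq:gen-dom-rare}. Write $f(y):=\Pi_J e_y$, so that $\widetilde W_b^{(n)}(w)=\sum_{y:\,f(y)=w}W_b^{(n)}(y)$.

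\textbf{Dominant atom.} By Lemma~\ref{lem:new-geometry}, every $y\in D_b$ satisfies $f(y)=m_b$, so $D_b\subseteq f^{-1}(m_b)$. Hence
\[
\widetilde W_b^{(n)}(m_b)\ge W_b^{(n)}(D_b)=1-\sum_{y\notin D_b}W_b^{(n)}(y).
\]
By \eqref{eq:gen-dom-rare} and the finiteness of $\mathcal Y$, the subtracted sum is $O(n^{-1})$. Since $\widetilde W_b^{(n)}(m_b)\le 1$, this gives $\widetilde W_b^{(n)}(m_b)=1-O(n^{-1})$.

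\textbf{Rare atoms.} Fix $w\neq m_b$. No element of $D_b$ lies in the fiber $f^{-1}(w)$, so
\[
n\widetilde W_b^{(n)}(w)=\sum_{y\notin D_b,\ f(y)=w}nW_b^{(n)}(y).
\]
This is a finite sum. Each term converges to $\alpha_b(y)$ by \eqref{eq:gen-dom-rare}, so the whole sum converges to $\widetilde\alpha_b(w)$.

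The one subtle point is the atom $w=m_b$ itself. Rare outputs $y\notin D_b$ can also satisfy $f(y)=m_b$; for example, $y\in D_{1-b}$ in the overlap case, or $y$ in the same component. Such $y$ are absorbed into the dominant atom. This does not harm the statement, because $\widetilde\alpha_b$ is only asserted for $w\neq m_b$, and the dominant-atom estimate above already accounts for all mass in that fiber. I would remark explicitly that the limiting jump field is unaffected: for these $y$ we have $j_{b,y}=f(y)-m_b=0$, and the definition of $\nu$ excludes them.

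\textbf{Consequences (a) and (b).} These follow by matching against Part~II, Definition~5.1 on the finite alphabet $\mathcal Y_J$. Each group $b$ has a single dominant atom $m_b$ of limiting mass $1$, with dominant error $O(n^{-1})$, and all other atoms have intensities $n\widetilde W_b^{(n)}(w)\to\widetilde\alpha_b(w)$. Lemma~\ref{lem:new-geometry} then decides which case applies. If $D_0\cap D_1=\varnothing$, then $m_0\neq m_1$, which gives two distinct dominant atoms, i.e.\ case (a). If $D_0\cap D_1\neq\varnothing$, then $m_0=m_1$ and $\Delta=m_1-m_0=0$, i.e.\ case (b).

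Nothing here is a real obstacle. The only care point is the bookkeeping of rare outputs that fall into a dominant fiber, and I would handle it by the remark above.
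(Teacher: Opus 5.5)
Your proof is correct and follows essentially the same route as the paper. Both compute the pushforward fiber by fiber, using Lemma~\ref{lem:new-geometry} to send all of $D_b$ to $m_b$, absorb any rare outputs in the $m_b$ fiber into the dominant atom (the paper writes this as an extra $O(n^{-1})$ sum, you use the sandwich $W_b^{(n)}(D_b)\le \widetilde W_b^{(n)}(m_b)\le 1$), take termwise limits on the finite fibers $w\neq m_b$, and then match against Part~II, Definition~5.1. One minor precision: for $y\notin D_0\cup D_1$ we have $\Pi_J e_y=e_y$, which is supported outside $D_0\cup D_1$, whereas $m_b\in\mathbb R^{D_0\cup D_1}$. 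So the only rare outputs in the $m_b$ fiber are those in $D_{1-b}\setminus D_b$ in the overlap case, and your two examples coincide.
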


\begin{proof}
Fix $b\in\{0,1\}$. By Lemma~\ref{lem:new-geometry}, all points of $D_b$ map to the same quotient atom
$m_b=\Pi_Je_y=\Pi_J\mu_b$. Therefore
\[
\widetilde W_b^{(n)}(m_b)
=
\sum_{y\in D_b}W_b^{(n)}(y)+\sum_{\substack{y\notin D_b\\ \Pi_Je_y=m_b}}W_b^{(n)}(y).
\]
The first sum equals $1-O(n^{-1})$ because the total rare mass is $O(n^{-1})$, and the second sum is itself
$O(n^{-1})$, so
\[
\widetilde W_b^{(n)}(m_b)=1-O(n^{-1}).
\]

Now fix $w\neq m_b$. If $y\in D_b$, then $\Pi_Je_y=m_b$, so no dominant output contributes to $w$. Hence
\[
\widetilde W_b^{(n)}(w)=\sum_{\substack{y\notin D_b\\ \Pi_Je_y=w}}W_b^{(n)}(y).
\]
Multiplying by $n$ and using the pointwise limits $nW_b^{(n)}(y)\to \alpha_b(y)$ yields
\[
n\,\widetilde W_b^{(n)}(w)
\to
\sum_{\substack{y\notin D_b\\ \Pi_Je_y=w}}\alpha_b(y)=\widetilde\alpha_b(w).
\]
This proves the displayed limits.

If $m_0\neq m_1$, then the projected array has two distinct dominant quotient atoms and rare probabilities of
order $1/n$ away from them, exactly as in Part~II, Definition~5.1. If $m_0=m_1$, then the projected array has a
single common dominant quotient atom and the same rare-$1/n$ behavior, but the deterministic neighboring shift
vanishes because $\Delta=m_1-m_0=0$. This proves parts~(a) and~(b).
\end{proof}

\begin{lemma}[Measurability of dominant totals in the disjoint quotient case]
\label{lem:new-measurable-totals}
Assume $m_0\neq m_1$. Then there exist measurable maps
\[
\tau_{b,n}:M^\perp\to \mathbb N,\qquad b\in\{0,1\},
\]
such that
\[
N_{n,k_n}(D_b)=\tau_{b,n}(Z_n)\qquad\text{a.s. under both hypotheses.}
\]
Equivalently, the dominant totals on $D_0$ and $D_1$ are $\sigma(Z_n)$-measurable.
\end{lemma}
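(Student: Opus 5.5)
The plan is to exhibit $\tau_{b,n}$ as an explicit affine functional of $z$. The key observation is that the indicator vector $\1_{D_b}:=\sum_{y\in D_b}e_y$ already lies in $M^\perp$, so pairing with it cannot distinguish $N_{n,k_n}$ from its quotient projection.

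\textbf{Step 1: reduce to disjointness.} By Lemma~\ref{lem:new-geometry}(b), $D_0\cap D_1\neq\varnothing$ forces $m_0=m_1$. So the hypothesis $m_0\neq m_1$ gives $D_0\cap D_1=\varnothing$, and this disjointness is the only thing used below.

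\textbf{Step 2: show $\1_{D_b}\in M^\perp$.} It suffices to check orthogonality to $M_0$ and to $M_1$ separately, since $M=M_0+M_1$.
\begin{itemize}
\item For $x\in M_b$ we have $\langle \1_{D_b},x\rangle=\sum_{y\in D_b}x(y)=0$ by the zero-sum description in Definition~\ref{def:dominant-tangent-quotient}.
\item For $x\in M_{1-b}$, $\supp(x)\subseteq D_{1-b}$, which is disjoint from $D_b$ by Step~1, so $\langle \1_{D_b},x\rangle=0$.
\end{itemize}
Hence $\1_{D_b}\perp M$. Since $\Pi_J$ is the orthogonal projection onto $M^\perp$, this gives $\langle \1_{D_b},x\rangle=\langle \1_{D_b},\Pi_J x\rangle$ for every $x\in\R^{\mathcal Y}$.

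\textbf{Step 3: build the map.} Write $c_n:=(n-k_n)\mu_0+k_n\mu_1$ for the deterministic centering. The statement of Theorem~\ref{thm:general-k-dominant-LK} fixes this same centering under both $T_{n,k_n}$ and $T_{n,k_n+1}$. Then, under either hypothesis,
\[
N(D_b)=\langle \1_{D_b},N\rangle=\langle \1_{D_b},\widehat H_n\rangle+\langle \1_{D_b},c_n\rangle
=\langle \1_{D_b},Z_n\rangle+\langle \1_{D_b},c_n\rangle .
\]
By disjointness, $\langle \1_{D_0},c_n\rangle=n-k_n$ and $\langle \1_{D_1},c_n\rangle=k_n$. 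Define
\[
\tau_{b,n}(z):=\max\bigl\{0,\ \mathrm{round}\bigl(\langle \1_{D_b},z\rangle+\langle \1_{D_b},c_n\rangle\bigr)\bigr\},
\]
which is Borel measurable from $M^\perp$ to $\N$. On the support of $Z_n$ the argument of the rounding is already a nonnegative integer, so $N_{n,k_n}(D_b)=\tau_{b,n}(Z_n)$ holds almost surely, and in fact identically, under both hypotheses.

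There is no real obstacle in this argument. The only point needing care is Step~2, where the disjointness from Step~1 is essential: in the overlap case $\1_{D_b}$ fails to be orthogonal to $M_{1-b}$, and the dominant totals genuinely are not recoverable from $Z_n$.
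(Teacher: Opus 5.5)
Your proof is correct and is essentially the paper's argument in dual form. The paper notes that two size-$n$ histograms with the same value of $z$ differ by an element of $M=M_0+M_1$, whose $D_b$-sums vanish under disjointness, and then defines $\tau_{b,n}$ fiberwise on the finite support of $Z_n$; you encode the same fact as $\1_{D_b}\in M^\perp$ and get the explicit affine formula $N(D_b)=\langle \1_{D_b},Z_n\rangle+\langle \1_{D_b},c_n\rangle$, which needs no finiteness argument for measurability.

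One small caveat concerns your closing aside, which plays no role in the lemma. In the overlap case one still has $\1_{D_b}\in M^\perp$ whenever $D_{1-b}\subseteq D_b$ (e.g.\ $D_0=D_1$), so there $N(D_b)$ remains recoverable from $Z_n$, and the failure you describe only concerns non-nested overlaps or the smaller of two nested sets.
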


\begin{proof}
Let
\[
\Omega_n:=\Bigl\{h\in\mathbb Z^{\mathcal Y}:\sum_{y\in\mathcal Y}h(y)=n\Bigr\}
\]
be the finite set of histograms of size $n$. For $h\in\Omega_n$, define
\[
\widehat h:=h-m_{0,n}\mu_0-m_{1,n}\mu_1\in\R^{\mathcal Y},
\qquad z(h):=\Pi_J\widehat h\in M^\perp.
\]
Suppose $h,h'\in\Omega_n$ satisfy $z(h)=z(h')$. Then
\[
\Pi_J(\widehat h-\widehat h')=0,
\]
so $\widehat h-\widehat h'\in M$. Since $D_0$ and $D_1$ are disjoint in the present case (by Lemma~\ref{lem:new-geometry},
$m_0\neq m_1$ implies $D_0\cap D_1=\varnothing$), every vector $x\in M=M_0+M_1$ decomposes as $x=x_0+x_1$ with
$x_b\in M_b$, $\supp(x_b)\subseteq D_b$, and
\[
\sum_{y\in D_b}x_b(y)=0.
\]
Therefore
\[
\sum_{y\in D_b}(\widehat h(y)-\widehat h'(y))=0,
\qquad b\in\{0,1\}.
\]
Because the centering vectors $m_{0,n}\mu_0+m_{1,n}\mu_1$ have the same totals on $D_b$ for both $h$ and $h'$, this
is equivalent to
\[
h(D_b)=h'(D_b),\qquad b\in\{0,1\}.
\]
Hence the quantity $h(D_b)$ depends only on the fiber value $z(h)$. Since $\Omega_n$ is finite, we may define
$\tau_{b,n}(z)$ to be this common value on the fiber $\{h\in\Omega_n:z(h)=z\}$ and set $\tau_{b,n}(z)=0$ on points
outside the support of $Z_n$. The resulting map is measurable because the support is finite. Evaluating at the
random histogram $N_{n,k_n}$ proves the claim.
\end{proof}

\begin{lemma}[Multinomial edge-shift bound]
\label{lem:new-edge-shift}
Let $D$ be finite, let $\vartheta\in\Delta(D)$ satisfy
\[
\vartheta_{\min}:=\min_{y\in D}\vartheta(y)>0,
\]
and let $S_m\sim \mathrm{Mult}(m,\vartheta)$.
Then there exists $C=C(\vartheta)<\infty$ such that for every $a,b\in D$,
\[
\mathrm{TV}\bigl(\mathcal L(S_m+e_a),\mathcal L(S_m+e_b)\bigr)
\le \frac{C}{\sqrt{m+1}}.
\]
More generally, for deterministic $a_1,\dots,a_r,b_1,\dots,b_r\in D$,
\[
\mathrm{TV}\!\left(
\mathcal L\!\left(S_m+\sum_{j=1}^r e_{a_j}\right),
\mathcal L\!\left(S_m+\sum_{j=1}^r e_{b_j}\right)
\right)
\le \frac{Cr}{\sqrt{m+1}}.
\]
\end{lemma}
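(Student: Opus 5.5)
We first reduce the multi-step statement to the one-step statement. Set $c_j:=\sum_{i<j}e_{b_i}+\sum_{i\ge j}e_{a_i}$ for $j=1,\dots,r+1$. Then the two laws in the second display are $\mathcal L(S_m+c_1)$ and $\mathcal L(S_m+c_{r+1})$. Consecutive vectors $c_j,c_{j+1}$ differ only by replacing $e_{a_j}$ with $e_{b_j}$. Since total variation is invariant under a deterministic common translation, each step costs exactly $\TV(\mathcal L(S_m+e_{a_j}),\mathcal L(S_m+e_{b_j}))$. The triangle inequality then gives the factor $r$. So it suffices to prove the one-step bound, and we may assume $a\neq b$.

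For the one-step bound, the plan is to reduce to a one-dimensional binomial problem by conditioning. Let $T:=S_m(a)+S_m(b)$. Conditionally on $T$ and on the coordinates $(S_m(y))_{y\notin\{a,b\}}$, the coordinate $S_m(a)$ is distributed as $\Bin(T,\theta)$ with $\theta:=\vartheta(a)/(\vartheta(a)+\vartheta(b))\in(0,1)$. Adding $e_a$ or $e_b$ leaves both the conditioning coordinates and $T+1$ unchanged. In the first case the new $a$-coordinate is $\Bin(T,\theta)+1$; in the second it is $\Bin(T,\theta)$. By convexity of total variation under mixtures (the conditioning variables have the same law under both), the one-step distance is at most
\[
\E\bigl[\TV(\Bin(T,\theta)+1,\Bin(T,\theta))\bigr].
\]

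For a binomial, the shift distance is $\tfrac12\sum_k|s_{k-1}-s_k|$. By unimodality, exactly as in the proof of Theorem~\ref{thm:bin-ber-exact-tv}, this equals $\max_k\Pbb(\Bin(T,\theta)=k)$. The standard mode bound gives $\max_k\Pbb(\Bin(T,\theta)=k)\le C_\theta/\sqrt{T+1}$ uniformly in $T\ge 0$. Cruder, fully explicit alternatives would also do: Stirling's formula at the mode, or the Kolmogorov--Rogozin concentration inequality. Since the distance is also at most $1$, the bound reads $\min\{1,C_\theta(T+1)^{-1/2}\}$.

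It remains to average over $T\sim\Bin(m,q)$ with $q:=\vartheta(a)+\vartheta(b)\ge 2\vartheta_{\min}$. On the event $\{T\ge qm/2\}$ we get the bound $C(m+1)^{-1/2}$. A Chernoff bound shows the complementary event has probability $e^{-cm}\le C/\sqrt{m+1}$. Since $\theta$ and $q$ range over finitely many pairs $(a,b)$, each bounded away from $0$ and $1$ in terms of $\vartheta_{\min}$, the constant depends only on $\vartheta$.

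The main obstacle is the uniform mode bound $\max_k\Pbb(\Bin(T,\theta)=k)\lesssim(T+1)^{-1/2}$ for all $T$, not just asymptotically. I would establish it by Stirling with explicit error bounds, or simply cite the local limit theorem together with the trivial bound $1$ for small $T$.
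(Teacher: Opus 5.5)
Your proposal is correct and follows essentially the same route as the paper. Both arguments condition on the coordinates outside $\{a,b\}$ to reduce to a binomial shift, use unimodality to identify that shift distance with $\max_k\Pbb(\Bin(T,\theta)=k)\lesssim (T+1)^{-1/2}$, average over $T\sim\Bin(m,\vartheta(a)+\vartheta(b))$ with a Chernoff bound, and telescope with translation invariance for the $r$-step version. The paper likewise simply asserts the uniform mode bound you flag as the main obstacle, with explicit constant $\sqrt{2/(p_{ab}(1-p_{ab}))}$.
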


\begin{proof}
We first treat a single shift. Fix $a,b\in D$, $a\neq b$. Condition on the vector of counts outside $\{a,b\}$,
namely
\[
V:=(S_m(y))_{y\in D\setminus\{a,b\}}.
\]
Let $R:=\sum_{y\in D\setminus\{a,b\}}V(y)$. Given $V$, the remaining pair $(S_m(a),S_m(b))$ is distributed as
\[
(B,m-R-B),\qquad B\sim \Bin(m-R,p_{ab}),
\qquad p_{ab}:=\frac{\vartheta(a)}{\vartheta(a)+\vartheta(b)}.
\]
Therefore
\[
\TV\bigl(\mathcal L(S_m+e_a\mid V),\mathcal L(S_m+e_b\mid V)\bigr)
=
\TV\bigl(\mathcal L(B+1),\mathcal L(B)\bigr).
\]
For any integer-valued random variable $B$ with unimodal pmf,
\[
\TV\bigl(\mathcal L(B+1),\mathcal L(B)\bigr)=\max_k \Pbb(B=k).
\]
Since binomial laws are unimodal, we get
\[
\TV\bigl(\mathcal L(B+1),\mathcal L(B)\bigr)
\le \frac{c_{ab}}{\sqrt{m-R+1}},
\qquad
c_{ab}:=\sqrt{\frac{2}{p_{ab}(1-p_{ab})}}.
\]
Taking expectation over $V$ yields
\[
\TV\bigl(\mathcal L(S_m+e_a),\mathcal L(S_m+e_b)\bigr)
\le c_{ab}\,\E\Bigl[(m-R+1)^{-1/2}\Bigr].
\]
Now $m-R=S_m(a)+S_m(b)\sim \Bin(m,\vartheta(a)+\vartheta(b))$. Write
\[
\rho_{ab}:=\vartheta(a)+\vartheta(b)\ge 2\vartheta_{\min}>0.
\]
On the event $\{m-R\ge \rho_{ab}m/2\}$,
\[
(m-R+1)^{-1/2}\le \sqrt{\frac{2}{\rho_{ab}m}}.
\]
On the complementary event, the quantity is bounded by $1$. A Chernoff bound gives
\[
\Pbb\Bigl(m-R<\frac{\rho_{ab}m}{2}\Bigr)\le e^{-c'm}
\]
for some $c'=c'(\vartheta)>0$. Consequently,
\[
\E\Bigl[(m-R+1)^{-1/2}\Bigr]
\le \sqrt{\frac{2}{\rho_{ab}m}}+e^{-c'm}
\le \frac{C_{ab}}{\sqrt{m+1}}.
\]
Since $D$ is finite, taking the maximum over all pairs $(a,b)$ gives a constant $C=C(\vartheta)$ such that
\[
\TV\bigl(\mathcal L(S_m+e_a),\mathcal L(S_m+e_b)\bigr)
\le \frac{C}{\sqrt{m+1}}
\]
for all $a,b\in D$.

For the general $r$-shift bound, define intermediate sums
\[
T^{(0)}:=S_m+\sum_{j=1}^r e_{a_j},
\qquad
T^{(r)}:=S_m+\sum_{j=1}^r e_{b_j},
\]
and
\[
T^{(s)}:=S_m+\sum_{j=1}^{s}e_{b_j}+\sum_{j=s+1}^{r}e_{a_j},
\qquad 0\le s\le r.
\]
For each $0\le s\le r-1$, write
\[
h_s:=\sum_{j=1}^{s}e_{b_j}+\sum_{j=s+2}^{r}e_{a_j}.
\]
Then
\[
T^{(s)}=(S_m+e_{a_{s+1}})+h_s,
\qquad
T^{(s+1)}=(S_m+e_{b_{s+1}})+h_s.
\]
Since total variation is invariant under a common deterministic translation, the one-step bound implies
\[
\TV\bigl(\mathcal L(T^{(s)}),\mathcal L(T^{(s+1)})\bigr)
\le \frac{C}{\sqrt{m+1}}.
\]
Therefore the triangle inequality yields
\[
\TV\bigl(\mathcal L(T^{(0)}),\mathcal L(T^{(r)})\bigr)
\le \sum_{s=0}^{r-1}\TV\bigl(\mathcal L(T^{(s)}),\mathcal L(T^{(s+1)})\bigr)
\le \frac{Cr}{\sqrt{m+1}}.
\]
\end{proof}

\begin{lemma}[Conditional multinomial smoothing in the disjoint case]
\label{lem:new-conditional-smoothing}
Assume $\pi\in(0,1)$ and $D_0\cap D_1=\varnothing$.
Define the normalized dominant laws
\[
\vartheta_{b,n}(y):=\frac{W_b^{(n)}(y)}{W_b^{(n)}(D_b)},
\qquad y\in D_b.
\]
For $z$ in the support of $Z_n$, let
\[
T_{b,n}(z):=N_{n,k_n}(D_b),
\]
and let $X_{b,n}^\circ\sim \mathrm{Mult}(T_{b,n}(z),\vartheta_{b,n})$ be independent.
Let $R_{n,z}$ be the law of
\[
\Psi_{n,z}(X_{0,n}^\circ,X_{1,n}^\circ)
:=
n^{-1/2}\Pi_G\!\left(
\sum_{y\in D_0}(X_{0,n}^\circ(y)-T_{0,n}(z)\vartheta_{0,n}(y))e_y
+
\sum_{y\in D_1}(X_{1,n}^\circ(y)-T_{1,n}(z)\vartheta_{1,n}(y))e_y
\right).
\]
Then there exists $C_{\mathrm{int}}<\infty$ such that
\[
\int \mathrm{TV}\bigl(P_{n,z}^{G\mid Z},R_{n,z}\bigr)\,P_n^J(dz)
+
\int \mathrm{TV}\bigl(Q_{n,z}^{G\mid Z},R_{n,z}\bigr)\,Q_n^J(dz)
\le \frac{C_{\mathrm{int}}}{\sqrt n}.
\]
\end{lemma}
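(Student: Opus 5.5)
The plan is a conditional coupling: freeze the rare configuration, replace the few foreign dominant messages by native draws, and pay $O(n^{-1/2})$ per replacement through Lemma~\ref{lem:new-edge-shift}. The whole argument is carried out separately under each hypothesis $T_{n,k_n}$ and $T_{n,k_n+1}$. In each case $(m_{0,n}^\star,m_{1,n}^\star)$ denotes the actual group sizes, and both are at least $3\kappa n$ for large $n$ because $\pi\in(0,1)$.

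\textbf{Step 1: refined rare configuration.} Enlarge $\sigma(Z_n)$ to a refined sigma-field $\mathcal F_n$. It records which users leave their native dominant set, whether each such user lands in the opposite dominant set or elsewhere, and all outputs outside $D_0\cup D_1$.
\begin{itemize}
\item $\mathcal F_n\supseteq\sigma(Z_n)$: by Lemma~\ref{lem:new-measurable-totals}, the totals $T_{b,n}(Z_n)=N_{n,k_n}(D_b)$ are $\mathcal F_n$-measurable, and the rare coordinates together with these totals determine $Z_n$.
\item Conditionally on $\mathcal F_n$, the two dominant blocks are independent.
\item Block $b$ has the form $X_{b,n}=S_{b,n}+\sum_{r=1}^{A_{b,n}}\eta_{b,r}$. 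Here $S_{b,n}\sim\Mult(m_{b,n}^\star-L_{b,n},\vartheta_{b,n})$ comes from native users conditioned to stay in $D_b$, and the $\eta_{b,r}$ are i.i.d.\ with the cross law $\rho_{b,n}$.
\item If each $\eta_{b,r}$ is replaced by an independent native draw $\eta^\circ_{b,r}\sim\Cat(\vartheta_{b,n})$, block $b$ becomes exactly $\Mult(T_{b,n}(z),\vartheta_{b,n})$.
\item $G_n$ is the deterministic map $\Psi_{n,z}$ applied to the blocks, since $\Pi_G$ kills the rare part modulo centering. One should check here that $\Pi_G$ of the rare outputs and of the centering terms is an $\mathcal F_n$-measurable deterministic shift absorbed into the definition of $\Psi_{n,z}$; if it is not, it has to be added to the surrogate.
\end{itemize}
Consequently, conditionally on $\mathcal F_n$, the surrogate law is $R_{n,Z_n}$.

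\textbf{Step 2: telescoping.} Interpolate one replacement at a time. For each step, condition on everything except the swapped pair and apply Lemma~\ref{lem:new-edge-shift} to $S_{b,n}$. This is allowed because $\vartheta_{b,n}\ge p_b/2>0$ for large $n$, so the constant depends only on $p_b$. Contraction under $\Psi_{n,z}$ (Lemma~\ref{lem:tv-contraction-main}) and tensorization over the two independent blocks then give
\[
\TV\bigl(\mathcal L(G_n\mid \mathcal F_n),R_{n,Z_n}\bigr)\le \sum_{b}\frac{C_bA_{b,n}}{\sqrt{m_{b,n}^\star-L_{b,n}+1}}.
\]
Since $R_{n,z}$ depends only on $z$ and TV is convex, conditioning down from $\mathcal F_n$ to $\sigma(Z_n)$ only decreases the left side. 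Hence the integrated conditional TV is bounded by the expectation of the right side.

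\textbf{Step 3: averaging.} Let $R_n=L_{0,n}+L_{1,n}$, so that $A_{b,n},L_{b,n}\le R_n$. $R_n$ is a sum of independent Bernoulli variables with total mean $O(1)$, because each rare probability is $O(n^{-1})$ and the alphabet is finite; therefore $\E R_n^2=O(1)$. Split on $\{L_{b,n}\le\kappa n\}$:
\begin{itemize}
\item on this event, the summand is at most $A_{b,n}/\sqrt{2\kappa n}$, with expectation $O(n^{-1/2})$;
\item on the complement, the summand is at most $R_n\1\{R_n>\kappa n\}$, whose expectation is at most $\E R_n^2/(\kappa n)=O(n^{-1})$.
\end{itemize}
Summing over both $b$ and both hypotheses gives $C_{\mathrm{int}}/\sqrt n$.

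\textbf{Main obstacle.} The hard part is Step~1's exact bookkeeping: verifying that after conditioning on $\mathcal F_n$ the native part really is multinomial with law $\vartheta_{b,n}$ independent of the cross draws, and that the surrogate reproduces exactly $\Mult(T_{b,n}(z),\vartheta_{b,n})$ given $Z_n=z$. I would first write the joint law of user outputs as independent categorical draws and condition each user on the event that it lands in its native dominant set.
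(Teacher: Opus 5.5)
\textbf{Where your argument agrees with the paper.} Your argument is the paper's proof essentially step for step: the same refined rare $\sigma$-field $\mathcal F_n$, the same one-at-a-time replacement of cross messages via Lemma~\ref{lem:new-edge-shift}, and the same averaging through $R_n=L_{0,n}+L_{1,n}$ with $\E R_n^2=O(1)$. Splitting on $\{L_{b,n}\le\kappa n\}$ instead of $\{R_n\le\kappa n\}$ makes no difference. Your explicit convexity step, which passes from $\mathcal F_n$ down to $\sigma(Z_n)$ using that the surrogate depends on $z$ alone, is stated more cleanly than the paper's ``averaging over $\mathcal F_n$''.

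\textbf{The gap.} The item you left as ``one should check'' is a genuine gap, and the check fails for $R_{n,z}$ as literally defined. Outputs outside $D_0\cup D_1$ do vanish under $\Pi_G$, since $M\subseteq\R^{D_0\cup D_1}$. The centerings, however, do not match. $G_n$ is centered at $m_{0,n}\mu_0+m_{1,n}\mu_1$, using the nominal sizes under both hypotheses. $\Psi_{n,z}$ is centered at $T_{0,n}(z)\vartheta_{0,n}+T_{1,n}(z)\vartheta_{1,n}$. Hence
\[
G_n=\Psi_{n,Z_n}(Y_{0,n},Y_{1,n})+c_n(Z_n),\qquad
c_n(z):=n^{-1/2}\,\Pi_G\Bigl(\sum_{b\in\{0,1\}}\bigl(T_{b,n}(z)\vartheta_{b,n}-m_{b,n}\mu_b\bigr)\Bigr),
\]
and $c_n(z)$ is in general nonzero and off the lattice.

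\emph{Counterexample.} Take $\mathcal Y=\{a,b,c\}$, $D_0=\{a,b\}$, $D_1=\{c\}$, no rare mass, $W_0^{(n)}(a)=p+1/n$, $W_0^{(n)}(b)=1-p-1/n$, and $W_1^{(n)}(c)=1$. Under $T_{n,k_n}$ the statistic $Z_n$ is deterministic. The $a$-coordinates of $\sqrt n\,G_n$ and of $\sqrt n\,\Psi_{n,z}(X^\circ_{0,n},X^\circ_{1,n})$ lie in $\mathbb Z-m_{0,n}p$ and $\mathbb Z-m_{0,n}p-m_{0,n}/n$ respectively. These sets are disjoint because $m_{0,n}/n\in(0,1)$, so the first integral equals $1$. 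The statement therefore cannot be proved as written.

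\textbf{The fix.} The lemma holds once $R_{n,z}$ is replaced by the law of $\Psi_{n,z}(X^\circ_{0,n},X^\circ_{1,n})+c_n(z)$, which is exactly your ``add it to the surrogate''. With this change, $G_n$ and the surrogate are the same map applied to $(Y_{0,n},Y_{1,n})$ and $(Y^\circ_{0,n},Y^\circ_{1,n})$, and your Steps~2--3 go through verbatim.

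\textbf{Why the Blackwell step survives.} $c_n$ depends on $z$ only through $T_{b,n}(z)$, so the recentered kernel is still a function of $z$ alone. That is the real role of Lemma~\ref{lem:new-measurable-totals}. It is not needed for $\mathcal F_n$-measurability of the totals, which is immediate from $T_{b,n}=m^\star_{b,n}-L_{b,n}+A_{b,n}$; invoking it for that, as you do in Step~1, is circular. Consequently the Blackwell step of Section~\ref{sec:proof-iii} is unaffected.

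The paper's own proof passes over this centering point in the same way: it only invokes $T_{b,n}(z)=(m^\star_{b,n}-L_{b,n})+A_{b,n}$.
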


\begin{proof}
We prove the bound under an arbitrary one of the two hypotheses; the same argument applies to both, and the two
bounds can then be added. Write again $(m_{0,n}^\star,m_{1,n}^\star)$ for the actual group sizes under the chosen
hypothesis.

Define $L_{b,n}$ to be the number of users from group $b$ whose outputs leave $D_b$, and define $A_{b,n}$ to be the
number of users from the opposite group landing inside $D_b$. Because $D_0$ and $D_1$ are disjoint, every cross
message is rare, so $A_{b,n}\le L_{1-b,n}$. Refine the conditioning exactly as in Section~\ref{sec:proof-iii}: reveal
which users are rare, whether each rare user lands in the opposite dominant block or elsewhere, and reveal all rare
outputs outside $D_0\cup D_1$. Conditional on this refined sigma-field $\mathcal F_n$, the two dominant blocks are
independent, and for each $b\in\{0,1\}$ the dominant count vector on $D_b$ has the form
\[
Y_{b,n}:=S_{b,n}+\sum_{r=1}^{A_{b,n}}\eta_{b,r},
\]
where
\[
S_{b,n}\sim \Mult(m_{b,n}^\star-L_{b,n},\vartheta_{b,n})
\]
is the native dominant contribution, while $\eta_{b,1},\dots,\eta_{b,A_{b,n}}$ are i.i.d. categorical vectors in
$\{e_y:y\in D_b\}$ with some cross split law $\rho_{b,n}$ determined by the opposite group. The surrogate block is
\[
Y_{b,n}^\circ:=S_{b,n}+\sum_{r=1}^{A_{b,n}}\eta_{b,r}^\circ,
\]
where the $\eta_{b,r}^\circ$ are i.i.d. $\Cat(\vartheta_{b,n})$, independent of everything else. Conditional on
$Z_n=z$, the law of the pair $(Y_{0,n}^\circ,Y_{1,n}^\circ)$ is exactly the pair of independent multinomials used to
define $R_{n,z}$, because
\[
T_{b,n}(z)=(m_{b,n}^\star-L_{b,n})+A_{b,n}.
\]

Fix $b$ and condition on $\mathcal F_n$. By the same telescoping argument as in Section~\ref{sec:proof-iii}, define
interpolating sums replacing the cross messages one by one. Lemma~\ref{lem:new-edge-shift} yields
\[
\TV\bigl(\mathcal L(Y_{b,n}\mid \mathcal F_n),\mathcal L(Y_{b,n}^\circ\mid \mathcal F_n)\bigr)
\le \frac{C_bA_{b,n}}{\sqrt{m_{b,n}^\star-L_{b,n}+1}}.
\]
Since the two blocks are conditionally independent and the map from dominant count vectors to the Gaussian
coordinate is measurable, Lemma~\ref{lem:tv-tensorization-main} implies
\begin{align*}
\TV\bigl(\mathcal L(G_n\mid Z_n,\mathcal F_n),R_{n,Z_n}\bigr)
&\le \frac{C_0A_{0,n}}{\sqrt{m_{0,n}^\star-L_{0,n}+1}}
+\frac{C_1A_{1,n}}{\sqrt{m_{1,n}^\star-L_{1,n}+1}}.
\end{align*}
Averaging over $\mathcal F_n$ and then over $Z_n$ gives the same bound for the regular conditional laws
$P_{n,z}^{G\mid Z}$ or $Q_{n,z}^{G\mid Z}$.

It remains to estimate the expectation of the right-hand side. Because $\pi\in(0,1)$ there exists $\kappa>0$ such
that, for all sufficiently large $n$ and under both hypotheses,
\[
m_{0,n}^\star\ge 3\kappa n,
\qquad
m_{1,n}^\star\ge 3\kappa n.
\]
Let
\[
R_n:=L_{0,n}+L_{1,n}.
\]
Since the rare probabilities are $O(n^{-1})$ and the alphabet is finite, $R_n$ is a Poisson-binomial sum with
bounded first and second moments:
\[
\E[R_n]\le C,
\qquad
\E[R_n^2]\le C.
\]
On the event $\{R_n\le \kappa n\}$,
\[
\frac{A_{b,n}}{\sqrt{m_{b,n}^\star-L_{b,n}+1}}
\le \frac{A_{b,n}}{\sqrt{2\kappa n}}.
\]
On the complementary event the same quantity is bounded by $A_{b,n}\le R_n$. Hence
\begin{align*}
\E\Bigl[\frac{A_{b,n}}{\sqrt{m_{b,n}^\star-L_{b,n}+1}}\Bigr]
&\le \frac{\E[A_{b,n}]}{\sqrt{2\kappa n}}+\E\bigl[R_n\1\{R_n>\kappa n\}\bigr] \\
&\le \frac{C}{\sqrt n}+\frac{\E[R_n^2]}{\kappa n}
\le \frac{C'}{\sqrt n}.
\end{align*}
Summing over $b=0,1$ proves the stated $O(n^{-1/2})$ bound.
\end{proof}

\begin{lemma}[Overlap-case one-user replacement bound]
\label{lem:new-overlap-tv}
Assume $\pi\in(0,1)$ and $D_0\cap D_1\neq\varnothing$.
Then there exists $C<\infty$ such that
\[
\mathrm{TV}(P_n,Q_n)\le \frac{C}{\sqrt n}.
\]
Consequently, for every fixed $\varepsilon\ge 0$,
\[
0\le \delta_{Q_n\|P_n}(\varepsilon)\le \frac{C}{\sqrt n}.
\]
\end{lemma}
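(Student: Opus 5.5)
We realize both $P_n$ and $Q_n$ on one probability space through a one-user replacement coupling, exactly as sketched in Step~10 of Section~\ref{sec:proof-iii}. Fix a shared dominant symbol $s\in D_0\cap D_1$. Both datasets share $n-k_n-1$ zero-users and $k_n$ one-users, and these common users are coupled identically under the two hypotheses. The single remaining user has input $0$ under $T_{n,k_n}$ and input $1$ under $T_{n,k_n+1}$; its outputs $A_n\sim W_0^{(n)}$ and $B_n\sim W_1^{(n)}$ are independent of the common part. Both statistics are then the image, under the same deterministic affine map (with the same centering), of $N^{\mathrm{com}}+e_{A_n}$ and $N^{\mathrm{com}}+e_{B_n}$ respectively. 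By Lemma~\ref{lem:tv-contraction-main}, it suffices to bound
\[
\TV\bigl(\mathcal L(N^{\mathrm{com}}+e_{A_n}),\mathcal L(N^{\mathrm{com}}+e_{B_n})\bigr).
\]
We first discard the events $\{A_n\notin D_0\}$ and $\{B_n\notin D_1\}$. Each has probability $O(n^{-1})$ by \eqref{eq:gen-dom-rare}, so they cost $O(n^{-1})$ in total variation. We then insert the anchor $s$ and apply the triangle inequality:
\[
\TV\le \TV\bigl(\mathcal L(N^{\mathrm{com}}+e_{A_n}),\mathcal L(N^{\mathrm{com}}+e_s)\bigr)+\TV\bigl(\mathcal L(N^{\mathrm{com}}+e_s),\mathcal L(N^{\mathrm{com}}+e_{B_n})\bigr)+O(n^{-1}),
\]
where $A_n$ is now conditioned on $D_0$ and $B_n$ on $D_1$.

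For the first term we condition on $A_n=a\in D_0$, on all outputs of the common one-users, and on which common zero-users leave $D_0$ together with their outputs. Given this, the remaining common zero-block is $S\sim\Mult(m',\vartheta_{0,n})$, with $m'=n-k_n-1-L_{0,n}$ and $\vartheta_{0,n}$ as in Lemma~\ref{lem:new-conditional-smoothing}. Everything else is a common deterministic translation. Since $a,s\in D_0$, Lemma~\ref{lem:new-edge-shift} bounds the conditional total variation by $C/\sqrt{m'+1}$, with $C$ depending only on $p_0$ because $\vartheta_{0,n}\ge p_0/2$ for large $n$. The second term is symmetric: because $s,b\in D_1$, we smooth inside the common one-block $\Mult(k_n-L_{1,n},\vartheta_{1,n})$ and obtain $C/\sqrt{k_n-L_{1,n}+1}$.

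It remains to average these bounds over the conditioning, which is the one place where care is needed. This is where $\pi\in(0,1)$ is used: both $n-k_n-1$ and $k_n$ are at least $3\kappa n$ for large $n$. The counts $L_{b,n}$ are Poisson-binomial with bounded means. On $\{L_{b,n}\le\kappa n\}$ the conditional bound is at most $C/\sqrt{2\kappa n}$, while the complement has probability $O(n^{-1})$ by Markov's inequality. The argument is the same as Step~5 / Lemma~\ref{lem:new-conditional-smoothing}, only simpler because here there is exactly one shift. Altogether $\TV(P_n,Q_n)\le C/\sqrt n$.

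Finally, for each measurable set $A$ we have $Q_n(A)-e^\varepsilon P_n(A)\le Q_n(A)-P_n(A)\le\TV(P_n,Q_n)$. Taking $A=\varnothing$ gives nonnegativity of $\delta_{Q_n\|P_n}(\varepsilon)$, which yields the stated bound $0\le \delta_{Q_n\|P_n}(\varepsilon)\le C/\sqrt n$.
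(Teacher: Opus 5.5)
Your proposal is correct and follows essentially the same route as the paper's proof. Both use the one-user replacement coupling with a shared anchor $s\in D_0\cap D_1$, discard the $O(n^{-1})$ rare outputs of the switched user, apply Lemma~\ref{lem:new-edge-shift} inside the common $0$-block and the common $1$-block respectively, and then average over the rare configuration using $\pi\in(0,1)$. Your deviations are minor and slightly cleaner: you reduce to histograms via Lemma~\ref{lem:tv-contraction-main}, and you handle the bad event $\{L_{b,n}>\kappa n\}$ with the trivial bound $\TV\le 1$ plus a first-moment Markov estimate, rather than the paper's appeal to the second-moment rare-count argument.
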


\begin{proof}
Choose a common dominant symbol $s\in D_0\cap D_1$. Couple the two neighboring experiments by using the same
$n-1$ common users under both hypotheses and only switching the last user from input $0$ to input $1$. Let
$H_n^{\mathrm{com}}$ denote the centered contribution of the common users under the centering
$(n-k_n-1)\mu_0+k_n\mu_1$. Let $A_n\sim W_0^{(n)}$ and $B_n\sim W_1^{(n)}$ be the outputs of the switched user under the
null and alternative, respectively, independent of the common part. Then
\[
S_n^{(P)}=H_n^{\mathrm{com}}+\Xi(A_n),
\qquad
S_n^{(Q)}=H_n^{\mathrm{com}}+\Xi(B_n),
\]
where
\[
\Xi(y):=\bigl(n^{-1/2}\Pi_G(e_y-\mu_0),\ \Pi_J(e_y-\mu_0)\bigr).
\]
Therefore
\[
\TV(P_n,Q_n)
\le
\TV\bigl(\mathcal L(H_n^{\mathrm{com}}+\Xi(A_n)),\mathcal L(H_n^{\mathrm{com}}+\Xi(s))\bigr)
+
\TV\bigl(\mathcal L(H_n^{\mathrm{com}}+\Xi(s)),\mathcal L(H_n^{\mathrm{com}}+\Xi(B_n))\bigr).
\]
We treat the first term; the second is identical with the roles of the two groups interchanged.

Split according to whether the switched user is rare. Since $W_0^{(n)}(\mathcal Y\setminus D_0)=O(n^{-1})$,
\begin{align*}
&\TV\bigl(\mathcal L(H_n^{\mathrm{com}}+\Xi(A_n)),\mathcal L(H_n^{\mathrm{com}}+\Xi(s))\bigr) \\
&\le W_0^{(n)}(\mathcal Y\setminus D_0)
+\TV\bigl(\mathcal L(H_n^{\mathrm{com}}+\Xi(A_n)\mid A_n\in D_0),\mathcal L(H_n^{\mathrm{com}}+\Xi(s))\bigr).
\end{align*}
Conditional on $A_n\in D_0$, the switched user's location inside $D_0$ has law $\vartheta_{0,n}$. Now condition on all
common $1$-users, on all rare outputs of the common $0$-users, and on the total number of common $0$-users whose
outputs remain in $D_0$. Given this conditioning, the common $0$-block is distributed as
\[
S_{0,n}\sim \Mult(m_{0,n}'-L_{0,n},\vartheta_{0,n})
\]
for some $m_{0,n}'\asymp n$, and the rest of the experiment contributes a common additive term. Therefore,
Lemma~\ref{lem:new-edge-shift} gives
\[
\TV\bigl(\mathcal L(H_n^{\mathrm{com}}+\Xi(A_n)\mid \mathcal F_n,A_n\in D_0),\mathcal L(H_n^{\mathrm{com}}+\Xi(s)\mid\mathcal F_n)\bigr)
\le \frac{C_0}{\sqrt{m_{0,n}'-L_{0,n}+1}}.
\]
Averaging over the conditioning variables and using the same rare-count estimate as in
Lemma~\ref{lem:new-conditional-smoothing} yields an $O(n^{-1/2})$ bound. The same reasoning for the second term,
with the common $1$-block and the switched output $B_n$, gives another $O(n^{-1/2})$ bound. Hence
\[
\TV(P_n,Q_n)\le \frac{C}{\sqrt n}.
\]
The \privacycurve\ statement follows from the elementary bound
\[
0\le \delta_{Q_n\|P_n}(\varepsilon)\le \TV(P_n,Q_n).
\]
\end{proof}

\begingroup
\sloppy
\hbadness=10000

\endgroup

\end{document}